\newcommand{\Kc}{\mathcal{K}}
\newcommand{\KV}{\mathcal{KV}}
\newcommand{\Mod}{{\mathbf{Mod}}} 
\newcommand{\Asm}{{\mathbf{Asm}}}
\newcommand{\RT}{{\mathbf{RT}}}
\newcommand{\QCB}{{\mathbf{QCB}}}
\newcommand{\TP}{{\mathbf{TP}}}
\newcommand{\TD}{{\mathbf{TD}}}
\newcommand{\NN}{{\mathbb{N}}}
\newcommand{\RR}{{\mathbb{R}}}
\newcommand{\forget}[1]{}
\newcommand{\HS}{{\mathcal{H}}}
\newcommand{\HL}{{\mathcal{L}}}
\newcommand{\Obs}{\mathsf{Obs}}
\newcommand{\Sta}{{\mathsf{St}}}
\newcommand{\Ic}{{\mathcal{I}}}
\newcommand{\Ib}{{\mathbb{I}}}
\newcommand{\scpr}[2]{\langle #1 | #2 \rangle}
\newcommand{\tr}{{\mathsf{tr}}}
\newcommand{\AdmRep}{{\mathbf{AdmRep}}}
\newcommand{\Bfrak}{{\mathfrak{B}}}
\newcommand{\Rat}{{\mathbb{Q}}}
\newcommand{\Real}{{\mathbb{R}}}
\newcommand{\Complex}{{\mathbb{C}}}
\newcommand{\Bc}{\mathcal{B}}
\DeclareMathOperator{\closed}{{\mathcal{C}}}
\DeclareMathOperator{\Prj}{\mathsf{Prj}}
\newcommand{\CC}{{\mathbb{C}}}
\newcommand{\BB}{{\mathbb{B}}}
\begin{document}

\title{Computability in Basic Quantum Mechanics}
\dedicatory{Dedicated to Ji\v r\' i Ad\' amek on the occasion of his retirement}

\author[E.~Neumann]{Eike NEUMANN}
\address{Aston University, School of Engineering and Applied Science,
         Aston Triangle, Birmingham B4 7ET, UK}
\email{neumaef1@aston.ac.uk}
\author[M.~Pape]{Martin Pape}
\address{Kallinchener Str.~2B, Berlin D-15749, Germany}  
\email{martin.pape@gmx.de}
\author[T.~Streicher]{Thomas STREICHER}	%required
\address{Fachbereich 4 Mathematik TU Darmstadt,
         Schlo{\ss}gartenstr.~7, D-64289, Germany}	%required
\email{streicher@mathematik.tu-darmstadt.de}  %optional

\begin{abstract}
The basic notions of quantum mechanics are formulated in terms of
separable infinite dimensional Hilbert space $\HS$. In terms of the Hilbert
lattice $\HL$ of closed linear subspaces of $\HS$ the notions of \emph{state} 
and \emph{observable} can be formulated as kinds of measures as in 
\cite{PtakPulmannov1991:OrthomodularStructuresAsQuantumLogic}. The aim of 
this paper is to show that there is a good notion of computability for these 
data structures in the sense of Weihrauch's Type Two Effectivity (TTE)
\cite{Weihrauch2000:ComputableAnalysis}.

Instead of explicitly exhibiting admissible representations for the data 
types under consideration we show that they do live within the category 
$\QCB_0$ which is equivalent to the category $\AdmRep$ of admissible 
representations and continuously realizable maps between them. For this purpose
in case of observables we have to replace measures by \emph{valuations} which 
allows us to prove an effective version of von Neumann's Spectral Theorem.
\end{abstract}

\maketitle

\section{Introduction}

In his legendary book \cite{Neumann1932:MathematischeGrundlagenDerQuantenmechanik} 
from 1932 J.~von Neumann gave a mathematical formulation of basic quantum mechanics 
based on separable Hilbert space $\HS$ which may manifest itself as $\ell^2$ as in
Heisenberg's \emph{matrix mechanics} or $L^2$ as in Schr\"odinger's \emph{wave mechanics}. 
In this setting \emph{observables} appear as \emph{self adjoint operators} on $\HS$ 
and \emph{states} as particular observables, namely so-called \emph{density operators} 
which are self adjoint operators $D \geq 0$ with $\tr(D) = 1$. The latter are closed
under countable convex combinations. Those states which cannot be obtained as non-trivial
countable convex combinations are called \emph{pure} and correspond to 1-dimensional 
subspaces of $\HS$.

It is \emph{a priori} not clear why observables should be understood as self adjoint
operators on Hilbert space. But this mystery is explained by von Neumann's famous
\emph{Spectral Theorem} already proved 
in \cite{Neumann1932:MathematischeGrundlagenDerQuantenmechanik} which establishes a 
1-1-correspondence between self adjoint operators on $\HS$ and
\emph{projector valued measures} on $\HS$, i.e.\ measures on $\RR$ taking values 
not in the unit interval $\Ib = [0,1]$ but in the so-called \emph{Hilbert lattice} 
$\HL$ of closed linear subspaces of $\HS$ which classically correspond to projectors, 
i.e.\ self adjoint operators $P = P^2$ on $\HS$. A projector valued measure 
$o : \Bfrak(\RR)\to\HL$ together with a pure state as given by a unit vectors $x$ in $\HS$ 
gives rise to an ordinary probability measure $s_x \circ o$ where $s_x(P) = \scpr{x}{Px}$.  
This is explained in detail in the book
\cite{PtakPulmannov1991:OrthomodularStructuresAsQuantumLogic} where it is also shown 
that states may be understood as measures on the Hilbert lattice $\HL$. More details 
will be given subsequently in subsections \ref{qstat} and \ref{qobs}, respectively.

In most physics textbooks one finds only the functional analytical account where 
observables are self adjoint operators because it is based on more traditional 
mathematics and more useful for (symbolic) computation (done by hand). There is a vast 
literature on the so-called ``logico-algebraic'' account based on the Hilbert lattice $\HL$. 
It goes back to old work of Birkhoff and von Neumann where they proposed to consider $\HL$ 
as a kind of ``quantum logic''. But since the lattice $\HL$ is not distributive it interprets
neither intuitionistic nor classical logic. In our paper we will not misuse $\HL$ for logical
purposes but rather as a tool for presenting a more algebraic and conceptual account of
basic quantum mechanics as in \cite{PtakPulmannov1991:OrthomodularStructuresAsQuantumLogic}
(despite its title).

The functional analytic formulation has already been studied in the framework
of Type Two Effectivity 
\cite{BrattkaDillhage2005:ComputabilityOfTheSpectrumOfSelfAdjointOperators,
WeihrauchZhong2006:ComputingSchroedingerPropagatorsOnType2TuringMaschines}.
To our knowledge computability for the ``logico-algebraic'' approach has not 
been considered so far in the literature.
At first sight this seems to be impossible since constructively they are not equivalent
because projectors on separable infinite dimensional Hilbert space correspond to
\emph{located} closed linear subspaces and these are not even closed under binary
intersection, see \cite{Dediu2000:ConstructiveTheoryOfOperatorAlgebras}. Nevertheless, 
we will show that the basic notions of the ``logico-algebraic approach'' can be endowed 
with an appropriate notion of computability in the sense of 
\cite{Weihrauch2000:ComputableAnalysis}. The key idea is to identify $\HL$ not as a
$\neg\neg$-subobject of $\Bfrak(\HS)$ but as a $\neg\neg$-subobject of $\Sigma^{\HS^\prime}$
where $\Sigma$ is the Sierpi\'nski space and $\HS^\prime$ is the dual space of $\HS$.
Classically, the space $\HS^\prime$ is anti-isomorphic to $\HS$ which, however, is not
the case computationally. Rather it turns out that the topology on $\HS^\prime$ induced
by computability is the sequentialization of the weak$^*$ topology.

In K.~Weihrauch's book \cite{Weihrauch2000:ComputableAnalysis} one finds a 
theory of computability for classical spaces based on Turing machines with 
infinite input and output tapes. Based on \cite{Troelstra1992:ComparingTheTheoryOfRepresentationsAndConstructiveMathematics},
Bauer and Lietz have shown in \cite{Lietz2004:FromConstructiveMathematicsToComputableAnalysisViaTheRealizabilityInterpretation,Bauer2000:RealizabilityApproachToComputableAnalysis,Bauer2005:RealizabilityAsConnectionBetweenConstructiveAndComputableMathematics} 
that computable analysis can be rephrased in terms of constructive analysis inside 
the \emph{function realizability topos} $\RT(\Kc_2)$ or rather its restriction 
to effective morphisms, the so called \emph{Kleene-Vesley topos} $\KV$, 
as described in \cite{Oosten2008:Realizability}.
In \cite{Battenfeld2007:ConvenientCategoryOfDomains} it has been shown how
to characterize abstractly within $\RT(\Kc_2)$ the category $\AdmRep$ of 
\emph{admissible representations of spaces and continuous(ly realizable) maps} 
between them which forms the backbone of K.~Weihrauch's account in 
\cite{Weihrauch2000:ComputableAnalysis}. Analogously, by restricting to $\KV$ 
one obtains the category $\AdmRep_{\mathsf{eff}}$ of \emph{admissible 
representations of spaces and effectively realizable maps} between them
since effectively realizable is equivalent to the existence of a Turing
machine with infinite tapes performing the respective transformation of
infinite sequences of natural numbers.

As shown in \cite{Battenfeld2007:ConvenientCategoryOfDomains} the category 
$\AdmRep$ is equivalent to a (fairly) small full subcategory $\QCB_0$ of the 
category $\mathbf{Sp}$ of topological spaces and continuous maps, namely the
one on $T_0$ quotients of countably based $T_0$ spaces. The equivalence of $\QCB_0$
and $\AdmRep$ is essentially due to the fact that all countably based $T_0$ spaces
appear as quotients of subspaces of Baire space whose elements are used in 
\cite{Weihrauch2000:ComputableAnalysis} for representing elements of more 
abstract spaces. 

This category $\QCB_0$ and thus also $\AdmRep$ has excellent categorical closure
properties. In particular, it is \emph{cartesian closed} and closed under 
regular, i.e.\ classical, subobjects. Within $\AdmRep \simeq \QCB_0$ one finds 
all complete separable metric spaces and, accordingly, it is a natural place 
for the Hilbert space approach to Quantum Mechanics as introduced by 
von Neumann in \cite{Neumann1932:MathematischeGrundlagenDerQuantenmechanik}.

In our account, however, we provide a notion of computability for the 
more algebraic approach based on the Hilbert lattice $\HL$ as described in
\cite{PtakPulmannov1991:OrthomodularStructuresAsQuantumLogic}. Due to the 
closure properties of $\AdmRep \simeq \QCB_0$ and the fact that it hosts the 
Sierpi\'nski space $\Sigma$ and Hilbert space $\HS$ it also hosts its dual 
$\HS^{\prime}$ and the classical subobject $\HL$ of $\Sigma^{\HS^\prime}$ on
closed subspaces of $\HS^\prime$. Notice that a closed linear subspace $P$ 
of $\HS^\prime$ is represented by the continuous map $p \in \Sigma^{\HS^\prime}$ 
with $P = p^{-1}(\bot)$, i.e.\ somewhat surprisingly $\bot \in \Sigma$ plays 
the role of ``true''. As a consequence the natural order induced by $\Sigma$ 
on $\HL$ is {\bf opposite} to subset inclusion as considered usually.

Let $\Ic$ be the unit interval $[0,1]$ with the \emph{upper topology}, i.e.\ the
Scott topology on the continuous lattice $([0,1],\geq)$. In $\AdmRep\simeq\QCB_0$ 
we will identify the space $\Sta$ of quantum {\bf states} as the $\neg\neg$-subobject 
of $\Ic^\HL$ consisting of those $s$ which validate the conditions
\begin{enumerate}
\item[(S1)] $s(0) = 0$ and $s(\HS) = 1$
\item[(S2)] $s(P \vee Q) = s(P) + s(Q)$ whenever $P \perp Q$
\end{enumerate}
since $s$ is continuous and thus preserves infima of decreasing $\omega$-chains.\footnote{As 
usual $P \perp Q$ stands for $\forall x \in P.\forall y \in Q.\,\scpr{x}{y} = 0$.}

By the spectral theorem for self-adjoint operators on $\HS$ quantum {\bf observables}
correspond to \emph{projection valued measures} on $\RR$, i.e.\ certain maps from 
the set $\Bfrak(\RR)$ of Borel subsets of $\RR$ to $\HL$. But since $\Bfrak(\RR)$ 
does not live within $\AdmRep \simeq \QCB_0$ we have to restrict to a generating 
subcollection. It turns out that the object $\closed(\RR)$ of closed subsets 
of $\RR$ is a good choice for this purpose since observables can be characterized 
as those $\nu \in \HL^{\closed(\RR)}$ for which the map
$\lambda C \in \closed(\RR). \scpr{x}{\nu(C)x}$ is a probability valuation on $\RR$
for all unit vectors $x \in \HS$.

Based on this reformulation of observables we will prove that the 
Spectral Theorem for bounded observables is effective in the sense that it holds 
in $\KV$. It will turn out that the induced topology on the operator side is the
sequentialization of the strong operator topology and that a sequence $(\nu_n)$ of
observables converges to $\nu_\infty$ w.r.t.\ the induced topology iff for all unit 
vectors the associated measures converge in the sense usually considered in the 
respective literature \cite{Billingsley99:ConvergenceOfProbabilityMeasures}.

\section{Basic Quantum Mechanics}\label{bqm}

We briefly recall how basic quantum mechanics can be formulated in terms of
separable infinite dimensional complex Hilbert space $\HS$ (see e.g.\
\cite{Mackey1963:MathematicalFoundationsOfQuantumMechanics,Prugovecki1971:QuantumMechanicsInHilbertSpace} for background information) as pioneered in J.~von Neumann's book from 1932 
\cite{Neumann1932:MathematischeGrundlagenDerQuantenmechanik}. 

\subsection{Basics Facts about Hilbert Space}

Up to isomorphism there is just one separable infinite dimensional Hilbert space $\HS$
over the field $\CC$ of complex numbers, namely the space $\ell^2$ of sequences $x$ of 
complex numbers such that $\sum |x_n|^2$ converges. Addition and scalar multiplication 
is pointwise and the scalar product is given by $\scpr{x}{y} = \sum x_n^* y_n$ with $x_n^*$ 
the complex conjugate of $x_n$. It is known to be a Banach space w.r.t.\ the norm 
$\|x\| = \sqrt{\scpr{x}{x}}$. There is a canonical countable orthonormal basis $(e_n)$
for $\ell^2$ where the $n$-th component of $e_n$ is $1$ and all other components are $0$.
We have $\scpr{e_n}{e_n} = 1$ and $\scpr{e_n}{e_m} = 0$ for $n \neq m$ and 
$x = \sum\limits_{n=0}^\infty \scpr{e_n}{x} e_n$ for all $x \in \HS$.

We recall for later use that the \emph{weak topology} on $\HS$ is the 
coarsest topology for which every linear functional of the form
$y \mapsto \scpr{x}{y}$ is continuous. The weak topology on $\HS$ is known 
to be Hausdorff and the unit ball $B(\HS) := \{x \in \HS \mid \|x\| \le 1\}$ 
is compact w.r.t.\ the weak topology but not w.r.t.\ the norm topology. Moreover,
subspaces of $\HS$ are closed w.r.t.\ the norm topology iff they are closed 
w.r.t.\ the weak topology on $\HS$. Notice, moreover, that $\HS$ with the weak 
topology is isomorphic to the dual space $\HS^\prime$ with the weak$^*$ topology, 
i.e.\ the coarsest topology rendering continuous all evaluation maps 
$\HS^\prime \to \Complex : f \mapsto f(x)$ for $x \in \HS$.

We write $\Bc(\HS)$ for the space of \emph{bounded linear operators} on $\HS$.
An $A \in \Bc(\HS)$ is called \emph{self-adjoint} iff 
$\scpr{Ax}{y} = \scpr{x}{Ay}$ for all $x,y \in \HS$. Such an $A$ is called
\emph{positive} iff $\scpr{x}{Ax} \geq 0$ for all $x \in \HS$ and it is called 
an \emph{effect} iff $0 \leq \scpr{x}{Ax} \leq 1$ for all $x \in \HS$ with
$\|x\| = 1$. A \emph{projector} is a self-adjoint $P \in \Bc(\HS)$ which, 
moreover, is idempotent, i.e.\ $PP = P$. As is well known projectors correspond 
to closed linear subspaces of $\HS$ where a projector $P$ maps $x \in \HS$ to the 
best approximating element $P(x)$ of the closed subspace of $\HS$ as given by 
fixpoints of $P$.

The trace $\tr(A)$ of a positive self-adjoint $A \in \Bc(\HS)$ is 
$\sum_n \scpr{e_n}{Ae_n}$ which exists iff this sum is bounded. 
Notice that $\tr(A)$ is independent from the choice of the orthonormal basis.
A positive self-adjoint operator with trace $1$ is called a \emph{density operator}.
Positive self-adjoint operators with trace $\leq 1$ are often called \emph{partial states}.

\subsection{Hilbert Lattice}

The \emph{Hilbert lattice} $\HL$ consists of the closed linear subspaces of $\HS$ 
ordered by subset inclusion. The poset $\HL$ is a lattice where meets are given
by intersections and joins are given by closures of linear spans of unions. 
The bottom element of $\HL$ is the zero subspace $0$ of $\HS$ whereas the top
element of $\HL$ is $\HS$. Classically, we may identify a closed linear subspace 
of $\HS$ with the corresponding projector of $\HS$ on this subspace. 

Notably, the Hilbert lattice $\HL$ is \emph{not} distributive and thus neither
boolean nor a complete Heyting algebra. Nevertheless, for every $P \in \HL$ we may 
consider its \emph{orthocomplement} 
\[ P^\perp = \{x \in \HS \mid \forall y \in P.\, \scpr{x}{y} = 0\} \]
which again is an element of $\HL$. 
Notice that orthocomplementation $(\cdot)^\perp : \HL \to \HL$ reverses the order, 
i.e.\ $Q^\perp \subseteq P^\perp$ whenever $P \subseteq Q$, and is involutory in the 
sense that $P^{\perp\perp} = P$. We write $P \perp Q$ for $P \subseteq Q^\perp$ stating 
that all vectors in $P$ are orthogonal to all vectors in $Q$. Notice that we always 
have $P \vee P^\perp = \HS$ and $P \wedge P^\perp = 0$ though typically there will be 
many different $Q \in \HL$ with $P \vee Q = \HS$ and $P \wedge Q = 0$ in contrast to 
boolean algebras where such complements are unique. 
A distinguishing property of $\HL$ is the law of \emph{orthomodularity} stating 
that $$Q = P \vee (Q \wedge P^\perp)$$ for $P \subseteq Q$.

It is well known, see e.g.\ \cite{Halmos1967}, that subspaces of $\HS$ are closed 
w.r.t.\ the norm topology iff they are closed w.r.t.\ the weak topology. Thus, we may 
identify $\HL$ with closed linear subspaces of $\HS^\prime$ endowed with the 
weak$^*$ topology which is homeomorphic to $\HS$ with the weak topology.

\subsection{Quantum States}\label{qstat}

We recall the basic notions of quantum mechanics as can be found in the classical 
text \cite{Mackey1963:MathematicalFoundationsOfQuantumMechanics} though we essentially 
follow  the equivalent presentation of \cite{PtakPulmannov1991:OrthomodularStructuresAsQuantumLogic}.

A (quantum) \emph{state} is a function $s$ from $\HL$ to the unit interval
$\Ib = [0,1]$ satisfying the conditions
\begin{enumerate}
  \item[(s1)] $s(0) = 0$ and $s(\HS) = 1$
  \item[(s2)] $s(\bigvee_n P_n) = \sum_n s(P_n)$ whenever $P_n \perp P_m$ for $n\neq m$.
\end{enumerate}
Thus, one may think of a state as a kind of probability measure on $\HL$ 
where disjointness is replaced by orthogonality. 

Every density operator $D$ on $\HS$ induces a state
\[  s(P) = \tr(DP) \]
where $P$ on the right hand side refers to the corresponding projector on $\HS$. 
By the famous \emph{Gleason's Theorem} (see e.g.\ \cite{Gleason1957:MeasuresOnClosedSubspaces,PtakPulmannov1991:OrthomodularStructuresAsQuantumLogic}) this establishes a 1-1-correspondence between states and density operators. 

A state $s$ on $\HL$ is called \emph{pure} iff there is a unit vector 
$x \in \HS$ with
\[ s(P) = s_x(P) = \scpr{x}{Px} \]
for all $P \in \HL$. One can show that

\begin{prop}\label{convstat}
Every state $s$ can be written as 
$\sum\limits_{n=0}^\infty \lambda_n s_{b_n}$ where the $\lambda_n \in \Ib$ with 
$\sum \lambda_n = 1$ and $(b_n)$ is an orthonormal basis for $\HS$.
\end{prop}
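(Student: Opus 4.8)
The plan is to reduce the statement to the spectral decomposition of a density operator via Gleason's Theorem. First I would invoke Gleason's Theorem, already quoted above, to obtain a density operator $D$ on $\HS$ with $s(P) = \tr(DP)$ for all $P \in \HL$. By definition $D$ is positive, self-adjoint and has $\tr(D) = 1$; in particular $D$ is trace-class, hence compact. The spectral theorem for compact self-adjoint operators on the separable space $\HS$ then provides an orthonormal basis $(b_n)$ of $\HS$ consisting of eigenvectors of $D$, say $D b_n = \lambda_n b_n$ with $\lambda_n \geq 0$ since $D \geq 0$. Here the basis is taken to include an orthonormal basis of $\ker D$ (which exists, and is countable, by separability of $\HS$), contributing the eigenvalue $0$. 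Computing $\tr(D)$ in this eigenbasis gives $\sum_n \lambda_n = \sum_n \scpr{b_n}{D b_n} = \tr(D) = 1$, so in particular each $\lambda_n \in \Ib$.

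Next, for an arbitrary $P \in \HL$ I would compute $\tr(DP)$ again in the basis $(b_n)$ and use self-adjointness of $D$:
\[ s(P) = \tr(DP) = \sum_n \scpr{b_n}{D P b_n} = \sum_n \scpr{D b_n}{P b_n} = \sum_n \lambda_n \scpr{b_n}{P b_n} = \sum_n \lambda_n\, s_{b_n}(P), \]
which is exactly the asserted decomposition $s = \sum_{n=0}^\infty \lambda_n s_{b_n}$. The series converges since $0 \le \lambda_n s_{b_n}(P) \le \lambda_n$ and $\sum_n \lambda_n = 1$, and this uniform domination is also what legitimises evaluating $\tr(DP)$ term by term in the eigenbasis (as $DP$ is trace-class, being the product of a trace-class and a bounded operator).

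I expect the only substantial ingredient to be Gleason's Theorem, which is cited rather than reproved here; everything after it is standard trace-class spectral theory. The one place that calls for a moment's care is the treatment of a possibly infinite-dimensional kernel of $D$, handled by separability of $\HS$ so that $(b_n)$ genuinely is an orthonormal basis of all of $\HS$, together with the absolute-convergence justification for the interchange of trace and summation in the displayed computation.
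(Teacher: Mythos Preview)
Your argument is correct and is the standard route: Gleason's Theorem supplies the density operator, compactness of trace-class operators yields an eigenbasis, and the trace computation in that basis gives the convex decomposition. The paper itself does not spell out a proof of this proposition; it is stated with the phrase ``One can show that'' immediately after recalling Gleason's Theorem, so the implicit approach is precisely the one you take. There is nothing to add or correct.
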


\subsection{Quantum Observables}\label{qobs}

A (quantum) \emph{observable} is a function $o$ from the set $\Bfrak(\RR)$ 
of Borel subsets of $\RR$ to $\HL$ such that 
\begin{enumerate}
  \item[(o1)] $o(\emptyset) = 0$ and $o(\RR) = \HS$
  \item[(o2)] $o(\bigcup_n X_n) = \bigvee_n o(X_n)$ 
  \item[(o3)] $o(X) \perp o(Y)$ whenever $X \cap Y = \emptyset$
\end{enumerate}
i.e.\ $o$ is a \emph{projector valued measure} (when identifying elements
of $\HL$ with projectors).

By the famous von Neumann Spectral Theorem \cite{Neumann1932:MathematischeGrundlagenDerQuantenmechanik,Prugovecki1971:QuantumMechanicsInHilbertSpace} bounded self-adjoint operators
$A$ on $\HS$ correspond to observables $o$ which are bounded in the sense that
$o([x,y]) = \HS$ for some $x \leq y$ in $\RR$ via
$$\scpr{x}{Ay} = \int_\RR \lambda \; d\scpr{x}{o((-\infty,\lambda)) y}$$
making use of the fact that $X \mapsto \scpr{x}{o(X)(y)}$ is a $\CC$-valued
measure on $\RR$ (see e.g.\ \cite{Prugovecki1971:QuantumMechanicsInHilbertSpace}.

Notice that an observable $o : \Bfrak(\RR) \to \HL$ composed with state 
$s : \HL \to \Ib$ gives rise to a probability measure 
$s \circ o : \Bfrak(\RR) \to \Ib$ on $\RR$.
 
\subsection{Alternative Characterizations of States and Observables}

For later use in section~\ref{cbqm} it is useful to consider the 
following alternative characterizations of states and observables.

\begin{prop}\label{statchar}
A map $s : \HL \to \Ib$ is a state iff it satisfies the conditions
\begin{enumerate}
\item[(S1)] $s(0) = 0$ and $s(\HS) = 1$
\item[(S2)] $s$ preserves infima of decreasing $\omega$-chains
\item[(S3)] $s(P \vee Q) = s(P) + s(Q)$ whenever $P \perp Q$.
\end{enumerate}
\end{prop}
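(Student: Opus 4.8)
The plan is to show the equivalence by proving that conditions (s1)--(s2) together are equivalent to (S1)--(S3). The direction from (s1)--(s2) to (S1)--(S3) is the easy one: (S1) is literally (s1), and (S3) is the special case of (s2) for the two-term orthogonal family $P_0 = P$, $P_1 = Q$, $P_n = 0$ for $n \geq 2$ (using $s(0) = 0$ from (s1), which follows since $0 \perp 0$ and $0 \vee 0 = 0$ forces $s(0) = 2 s(0)$, hence $s(0) = 0$, once we know $s$ takes values in $\Ib$). For (S2), given a decreasing $\omega$-chain $P_0 \supseteq P_1 \supseteq \cdots$ with intersection $P_\infty = \bigwedge_n P_n$, I would write $P_0 = P_\infty \vee \bigvee_n (P_{n} \wedge P_{n+1}^\perp)$: by orthomodularity applied to $P_{n+1} \subseteq P_n$ we have $P_n = P_{n+1} \vee (P_n \wedge P_{n+1}^\perp)$, and the subspaces $Q_n := P_n \wedge P_{n+1}^\perp$ are pairwise orthogonal, with $P_\infty$ orthogonal to all of them. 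Iterating and passing to the join gives $P_0 = P_\infty \vee \bigvee_{n \geq 0} Q_n$ and likewise $P_k = P_\infty \vee \bigvee_{n \geq k} Q_n$, so (s2) yields $s(P_0) = s(P_\infty) + \sum_n s(Q_n)$ and $s(P_k) = s(P_\infty) + \sum_{n \geq k} s(Q_n)$; since the series $\sum_n s(Q_n)$ converges (being bounded by $s(P_0) \leq 1$), the tails $\sum_{n \geq k} s(Q_n) \to 0$, whence $s(P_k) \to s(P_\infty)$, which is (S2).

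For the converse, assume (S1)--(S3) and let $(P_n)$ be a sequence of pairwise orthogonal elements of $\HL$. Using (S3) inductively one gets finite additivity: $s(P_0 \vee \cdots \vee P_N) = \sum_{n=0}^N s(P_n)$ for every $N$ (the finite join $P_0 \vee \cdots \vee P_N$ is still orthogonal to $P_{N+1}$, which keeps the induction going). Set $R_N := \bigvee_{n=0}^N P_n$ and $R := \bigvee_{n=0}^\infty P_n = \bigvee_N R_N$. The partial sums $\sum_{n=0}^N s(P_n) = s(R_N) \leq s(R) \leq 1$ are bounded and increasing, so the series $\sum_n s(P_n)$ converges to some value $\leq s(R)$. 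It remains to show $s(R) = \sum_n s(P_n)$, i.e. $s(R) = \lim_N s(R_N)$. This is where (S2) enters, but (S2) is about \emph{decreasing} chains, so I would apply it to the orthocomplements relative to $R$: inside the orthomodular lattice $\HL$ one can form $T_N := R \wedge R_N^\perp$, giving a decreasing $\omega$-chain $T_0 \supseteq T_1 \supseteq \cdots$ with $\bigwedge_N T_N = R \wedge (\bigvee_N R_N)^\perp = R \wedge R^\perp = 0$. By orthomodularity $R = R_N \vee T_N$ with $R_N \perp T_N$, so (S3) gives $s(R) = s(R_N) + s(T_N)$; applying (S2) to the chain $(T_N)$ gives $s(T_N) \to s(0) = 0$, hence $s(R_N) \to s(R)$, as required. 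Finally (s1) is (S1).

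The main obstacle is the bookkeeping with orthomodularity: one must be careful that the relevant joins and meets behave as expected in the non-distributive lattice $\HL$, in particular that $R_N \vee T_N = R$ really follows from orthomodularity applied to $R_N \subseteq R$ (it does: orthomodularity says $R = R_N \vee (R \wedge R_N^\perp)$), and that $\bigwedge_N (R \wedge R_N^\perp) = R \wedge (\bigvee_N R_N)^\perp$, which is the De Morgan law $\bigwedge_N R_N^\perp = (\bigvee_N R_N)^\perp$ — valid in $\HL$ because orthocomplementation is an order-reversing involution, so it turns the join $\bigvee_N R_N$ (a supremum) into the infimum $\bigwedge_N R_N^\perp$. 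Once these lattice-theoretic identities are in hand, both implications reduce to the elementary fact that a bounded monotone series has vanishing tails, used once in each direction.
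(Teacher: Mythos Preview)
Your argument is correct. For the reverse direction, however, the paper takes a slightly different and shorter route: rather than invoking orthomodularity to form the relative complements $T_N = R \wedge R_N^\perp$ and applying (S2) to the decreasing chain $(T_N)$, the paper first observes from (S1) and (S3) that $s(P^\perp) = 1 - s(P)$ for every $P$ (since $P \vee P^\perp = \HS$ with $P \perp P^\perp$), and then combines this with the De~Morgan law $(\bigvee_n Q_n)^\perp = \bigwedge_n Q_n^\perp$ to convert (S2) directly into the statement that $s$ preserves suprema of \emph{increasing} $\omega$-chains; once that is in hand, $s(\bigvee_k P_k) = \sup_N s(R_N) = \sup_N \sum_{k \leq N} s(P_k)$ is immediate. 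Thus the paper's reverse direction avoids orthomodularity altogether, using only $P \vee P^\perp = \HS$. On the other hand, for the forward direction the paper simply cites a reference for (S1)--(S3), whereas you supply an explicit proof of (S2) via the telescoping decomposition $P_k = P_\infty \vee \bigvee_{n \geq k}(P_n \wedge P_{n+1}^\perp)$, which does rely on orthomodularity; your account is therefore more self-contained there.
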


\begin{proof}
It is well known that states validate the condition (S1)--(S3) as shown e.g.\
in \cite{PtakPulmannov1991:OrthomodularStructuresAsQuantumLogic}.

For the reverse direction suppose $s : \HL \to \Ib$ validates conditions 
(S1)--(S3). Condition (s1) holds since it is the same as (S1).  From this
together with (S3) and $P \vee P^\perp = \HS$ for all $P \in \HL$, 
it is immediate that $s(P^\perp) = 1 - s(P)$. For this reason from (S2) it follows 
that $s$ preserves suprema of increasing $\omega$-chains. For showing $(s2)$ suppose 
$(P_k)$ is a pairwise orthogonal sequence in $\HL$. Let $Q_n = \bigvee_{k=0}^n P_k$. 
Then $(Q_n)$ is an increasing $\omega$-chain in $\HL$ with $\bigvee P_k = \bigvee Q_n$.
Thus, we have
\[
\begin{array}{rlr} 
s(\bigvee_k P_k) & = s(\bigvee_n Q_n) &  \\
                 & = \sup_n s(Q_n) &  \\
                 & = \sup_n s(\bigvee_{k=0}^n P_k) &  \mbox{(S3)} \\
                 & = \sup_n \sum_{k=0}^n s(P_k) &  \\
                 & = \sum_k s(P_k) & 
\end{array}
\]
showing that $s$ satisfies (s2) and thus is a state.
\end{proof}

For well behaved spaces $X$ like $\RR$ probability measures on $X$ are uniquely 
determined by their restrictions to $\closed(X)$, the set of 
closed subsets of $X$, which forms a(n $\omega$-)cpo w.r.t.\ $\supseteq$, 
see \cite{Gierz2003:ContinuousLattices}. These restrictions of probability measures 
on $X$ can be characterized as \emph{valuations}, i.e.\ Scott continuous maps $\nu$ 
from $\closed(X)$ to $\Ic$, the unit interval $\Ib$ ordered by $\geq$, satisfying
$\nu(\emptyset) = 0$, $\nu(\RR) = 1$ and
\[ \nu(A) + \nu(B) = \nu(A \cup B) + \nu(A \cap B) \]
for $A,B \in \closed(X)$. See \cite{Gierz2003:ContinuousLattices} 
for more information on valuations though formulated there in terms of open 
instead of closed subsets of $X$.

We will now characterize quantum observables in terms of valuations.

\begin{prop}\label{obschar}
Quantum observables correspond by restriction to $\closed(\RR)$ to
\emph{quantum valuations}, i.e.\ maps $\nu : \closed(\RR) \to \HL$
such that
\begin{enumerate}
  \item[(O1)] $\nu(\emptyset) = 0$ and $\nu(\RR) = \HS$
  \item[(O2)] $\nu$ preserves infima of decreasing $\omega$-chains
  \item[(O3)] for every unit vector $x$ in $\HS$ and $A,B \in \closed(\RR)$
                     \[\nu_x(A) + \nu_x(B) = \nu_x(A \cup B) + \nu_x(A \cap B) \]
                     where $\nu_x(C) = \scpr{x}{\nu(C)(x)}$ for $C \in \closed(\RR)$.
\end{enumerate}
\end{prop}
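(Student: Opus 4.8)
The plan is to establish a correspondence in both directions: from a quantum observable $o : \Bfrak(\RR) \to \HL$ to its restriction $\nu = o \restriction \closed(\RR)$, and conversely from a quantum valuation $\nu$ back to a projector valued measure on all Borel sets. For the forward direction I would first check that restricting $o$ to $\closed(\RR)$ yields a map satisfying (O1)--(O3). Condition (O1) is immediate from (o1) since $\emptyset$ and $\RR$ are closed. For (O3), fix a unit vector $x$ and recall that by the remarks in subsection~\ref{qobs} the map $X \mapsto \scpr{x}{o(X)(x)}$ is an ordinary probability measure on $\RR$ (it is $s_x \circ o$ for the pure state $s_x$); its restriction to $\closed(\RR)$ is thus a valuation in the sense recalled before the proposition, which is precisely the modularity identity (O3). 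For (O2) I would use that $o$ is a projector valued measure: for a decreasing $\omega$-chain $(C_n)$ of closed sets with intersection $C$, the subspaces $o(C_n)$ decrease and one checks $\bigwedge_n o(C_n) = o(C)$; this follows from countable additivity of $o$ applied to the disjoint pieces $C_n \setminus C_{n+1}$ together with $C$, or more directly from the fact that $s_x \circ o$ is a measure for every $x$ and a decreasing sequence of projectors is determined by the decreasing sequence of the numbers $\scpr{x}{o(C_n)(x)}$. Note that ``infima of decreasing $\omega$-chains'' in $\HL$ with the order reversed (as the paper sets up via $\Sigma$) means suprema of increasing chains of subspaces under $\supseteq$; but under ordinary subset inclusion on $\HL$ as used in this section it is the intersection, so (O2) as stated is the continuity of $o$ along decreasing closed sets.

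For the reverse direction, given $\nu : \closed(\RR) \to \HL$ satisfying (O1)--(O3), I would reconstruct $o$ on all of $\Bfrak(\RR)$. The key tool is that for each unit vector $x$ the map $\nu_x : \closed(\RR) \to \Ib$ is, by (O1)--(O3) together with the Scott continuity packaged in (O2), a valuation on $\closed(\RR)$, hence by the cited result of \cite{Gierz2003:ContinuousLattices} extends uniquely to a Borel probability measure $\mu_x$ on $\RR$. Polarization then lets one define, for arbitrary $y,z \in \HS$, a complex measure $\mu_{y,z}$ by the usual expansion of $\scpr{y}{o(X)(z)}$ into four terms $\mu_{y+z} - \mu_{y-z} + i\mu_{y+iz} - i\mu_{y-iz}$ (suitably normalized), and for each Borel set $X$ the assignment $(y,z) \mapsto \mu_{y,z}(X)$ is a bounded sesquilinear form, hence represented by a bounded operator $o(X)$ with $0 \le o(X) \le I$. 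One then verifies that each $o(X)$ is in fact a projector and that $X \mapsto o(X)$ satisfies (o1)--(o3): (o1) and (o3) are straightforward from the corresponding properties of the measures $\mu_x$, and (o2) follows from countable additivity of each $\mu_x$ plus the fact that an increasing union of closed-set-generated projectors, being monotone and bounded, converges strongly to the supremum and this supremum has the right matrix elements. Idempotency of $o(X)$ for general Borel $X$ can be obtained by first establishing it for closed $X$ (where $o(X) = \nu(X)$ is already a projector by hypothesis) and then propagating through the Dynkin/monotone-class structure using that products of projectors in a commuting family stay within the spectral calculus; alternatively, one checks $o(X)o(Y) = o(X \cap Y)$ on the $\pi$-system of closed sets via (O3) read as multiplicativity of the spectral projections, and extends.

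Finally I would check that the two passages are mutually inverse: starting from $o$, restricting, and re-extending returns $o$ by the uniqueness of the measure extension applied to every $\mu_x$ (two projector valued measures with the same matrix elements on a generating $\pi$-system agree); starting from $\nu$, extending, and restricting returns $\nu$ because the extension was built to agree with $\nu_x$ on closed sets for every $x$, and the subspaces $\nu(C)$ and $o(C)$ are determined by their families of diagonal matrix elements $\scpr{x}{(\cdot)x}$.

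I expect the main obstacle to be the reverse direction, specifically verifying that the reconstructed operators $o(X)$ are genuinely \emph{projectors} (not merely effects) for all Borel $X$, and that countable additivity (o2) holds. The sesquilinear-form argument delivers effects for free, but idempotency is where the orthogonality/modularity hypotheses (O3) must be used in an essential way --- it is the identity (O3) that, in the limit, encodes $o(X) o(Y) = o(X \cap Y)$ and hence $o(X)^2 = o(X)$; making this rigorous requires care about which $\sigma$-algebra the multiplicativity has been verified on and a monotone-class argument to extend from closed sets to all Borel sets. A secondary but real subtlety is matching the two order conventions on $\HL$ (subset inclusion here versus the $\Sigma$-induced opposite order used elsewhere in the paper) so that ``preserves infima of decreasing $\omega$-chains'' in (O2) is read consistently; I would state explicitly that within this section (O2) means $\nu$ sends a decreasing chain of closed sets to the intersection of the corresponding subspaces.
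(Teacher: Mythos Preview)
Your forward direction matches the paper's: (O1) from (o1), (O3) because each $s_x\circ o$ is a measure, and (O2) from continuity of projector-valued measures along decreasing sequences (the paper simply cites \cite{PtakPulmannov1991:OrthomodularStructuresAsQuantumLogic} for this).

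Your reverse direction is correct but takes a genuinely different route. The paper does \emph{not} rebuild $o(X)$ for arbitrary Borel $X$ via polarization, sesquilinear forms and a Dynkin argument. Instead it extracts from $\nu$ the single one-parameter family
\[
  E(\lambda) \;=\; I - \nu\bigl([\lambda,\infty)\bigr),
\]
observes that (O1)--(O3) make $E$ a spectral family in the sense of \cite{Prugovecki1971:QuantumMechanicsInHilbertSpace}, and then invokes the standard theorem from that reference that a spectral family extends uniquely to a projector-valued measure $o$ on $\Bfrak(\RR)$ with $o((-\infty,\lambda)) = E(\lambda)$. Thus the paper sidesteps entirely the issue you flag as the main obstacle --- proving that the reconstructed $o(X)$ are projectors rather than mere effects --- by delegating it to the classical spectral-family machinery, where idempotency is built in from the start.

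What each approach buys: the paper's route is short and avoids any monotone-class bookkeeping, but it is opaque about \emph{why} (O3) suffices and leaves the verification that the two passages are mutually inverse implicit (it never checks that the $o$ coming from $E$ restricts back to $\nu$ on all closed sets, only on half-lines). Your route is longer but self-contained: it makes explicit the mechanism by which modularity (O3) propagates from closed sets to all Borel sets, and your final paragraph actually argues the bijection. Your worry about the two order conventions on $\HL$ is unnecessary here: in this section $\HL$ is the ordinary lattice under $\subseteq$, and (O2) just means $\nu(\bigcap_n C_n)=\bigcap_n \nu(C_n)$ for decreasing $(C_n)$; the reversed $\Sigma$-order only enters in Section~\ref{cbqm}.
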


\begin{proof}
Let $o : \Bfrak(\RR) \to \HL$ be an observable and $\nu$ its restriction to
$\closed(\RR)$. Condition (O1) for $\nu$ is immediate from condition (o1) 
for $o$. As shown in \cite{PtakPulmannov1991:OrthomodularStructuresAsQuantumLogic}
$o$ preserves infima of decreasing $\omega$-chains from which (O2) is immediate.
For every unit vector $x$ in $\HS$ the function $o_x = s_x \circ o$ is a measure
from which (O3) is immediate.

Suppose $\nu : \closed(\RR) \to \HL$ validates conditions (O1)-(O3). Then
$E : \RR \to \HL : \lambda \mapsto 1 - \nu([\lambda,\infty))$ is a spectral
family in the sense of \cite{Prugovecki1971:QuantumMechanicsInHilbertSpace}
which as shown in \emph{loc.cit.}\ uniquely extends to an observable $o :
\Bfrak(\RR) \to \HL$ with $o((-\infty,\lambda)) = E(\lambda)$.
\end{proof}

\section{Topological Domain Theory}\label{topdom}

The Hilbert lattice $\HL$ is complete and thus in particular a directed complete 
poset as studied in denotational semantics (see e.g.\ \cite{Gierz2003:ContinuousLattices,Streicher2006:DomainTheoreticFoundationsOfFunctionalProgramming}). However, we want to 
arrive at a notion of computability for the Hilbert lattice and derived notions such as 
states and observables and this is not possible for arbitrary directed complete posets or 
complete lattices.
The first idea would be to exhibit $\HL$ as an \emph{effectively given domain}
as described e.g.\ in \cite{Streicher2006:DomainTheoreticFoundationsOfFunctionalProgramming}. 
But for this purpose $\HL$ would have to be at least a \emph{continuous} lattice in the sense
of \cite{Gierz2003:ContinuousLattices} which, alas, is not the case as has been pointed out 
to us by K.~Keimel.

\begin{prop}
The Hilbert lattice $\HL$ is not continuous.
\end{prop}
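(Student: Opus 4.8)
The plan is to pin down the way-below relation $\ll$ of the complete lattice $\HL$ (ordered, as in Section~\ref{bqm}, by subset inclusion) at the top element $\HS$ and to show that it is as degenerate as possible: I claim $Q\ll\HS$ holds only for $Q=0$. Since $\HS\neq 0$, this yields $\bigvee\{Q\in\HL : Q\ll\HS\}=0\subsetneq\HS$, so $\HS$ is not the supremum of the elements way below it and $\HL$ cannot be a continuous lattice.

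The only non-routine point is to show, for an arbitrary unit vector $v\in\HS$, that the one-dimensional subspace $\CC v$ is not way below $\HS$; the rest then follows because $\ll$ is downward closed in its first argument and every nonzero closed subspace contains some such $\CC v$. To witness $\CC v\not\ll\HS$ I will exhibit a directed family $D\subseteq\HL$ with $\bigvee D=\HS$ none of whose members contains $v$. Fix an orthonormal basis $(b_k)_{k\geq 0}$ of $\HS$ with $b_0=v$, set $f_n=b_0+\sum_{k>n}2^{-k}b_k$ and $P_n=\operatorname{span}\{b_1,\dots,b_n,f_n\}$. Each $P_n$ is finite dimensional, hence closed. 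From the telescoping identity $f_n=f_{n+1}+2^{-(n+1)}b_{n+1}$ one reads off $P_n\subseteq P_{n+1}$, so $D=\{P_n : n\geq 0\}$ is an increasing chain, in particular directed. Its join is the closure of $\bigcup_nP_n$, which contains $b_k$ for every $k\geq 1$ as well as $\lim_nf_n=b_0=v$, whence $\bigvee_nP_n=\HS$. Finally $v\notin P_n$: writing $v=\alpha f_n+\sum_{j\le n}\beta_jb_j$ and comparing $b_k$-coefficients for $k>n$ forces $\alpha 2^{-k}=0$, so $\alpha=0$, leaving $v\in\operatorname{span}\{b_1,\dots,b_n\}$, which contradicts $\scpr{b_0}{v}=1$. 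Thus $\CC v\not\ll\HS$, and the proposition follows.

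The step I expect to cost the most thought is producing a \emph{directed} family that nonetheless avoids $v$. The naive candidates --- finite-codimensional subspaces, or closed spans of cofinite sets of basis vectors with one coordinate deleted --- either already contain $\HS$ itself or fail to be up-directed: joining two closed subspaces that omit different basis vectors typically recovers both omitted vectors, hence $v$. The trick is to let each $P_n$ carry a single perturbed copy $f_n$ of $v$ together with the first $n$ basis vectors, choosing the perturbation tails $\sum_{k>n}2^{-k}b_k$ so that passing from $P_n$ to $P_{n+1}$ both absorbs $f_n$ (via the fresh basis vector $b_{n+1}$) and shrinks the perturbation, while no single $P_n$ ever has enough basis vectors present to cancel its own perturbation and expose $v$.
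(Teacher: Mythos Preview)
Your proof is correct, but it takes a different (and more laborious) route than the paper's. The paper works at an \emph{atom} rather than at the top: it observes that in a continuous lattice every atom must be compact (since $a=\bigvee\{y:y\ll a\}$ together with $y\le a\Rightarrow y\in\{0,a\}$ forces $a\ll a$), and then exhibits a single atom $a=\langle v\rangle$ with $v\notin\operatorname{span}\{e_0,\dots,e_n\}$ for all $n$; the increasing chain $Q_n=\bigvee_{i\le n}\langle e_i\rangle$ has supremum $\HS\supseteq a$ but never contains $a$, so $a$ is not compact. By contrast you work at $\HS$ and prove the stronger fact that $\{Q:Q\ll\HS\}=\{0\}$, which obliges you to handle \emph{every} unit vector $v$ --- in particular $v=e_0$, for which the paper's simple chain of finite spans is useless --- and hence to invent the perturbed family $(P_n)$. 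Your construction is genuinely nice and yields more information (the way-below relation is degenerate at the top), but the extra work in your final paragraph is self-inflicted: had you aimed at a well-chosen atom instead of at $\HS$, the ``naive candidate'' $Q_n=\operatorname{span}\{e_0,\dots,e_n\}$ would have sufficed.
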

\begin{proof}
Suppose $\HL$ were a continuous lattice. Then every atom $a$ of $\HL$ were 
compact. But there is an atom $a$ in $\HL$ such that for no $n$ we have
$a \le \bigvee_{i=0}^n \langle e_n \rangle$ (where $\langle e_n \rangle$ is the
one dimensional subspace spanned by $e_n$).
But $a \le \HS = \bigvee_{n=0}^\infty \langle e_n \rangle$ and thus $a$ is not compact.
\end{proof}

Due to this shortcoming we will instead work in the framework of 
\emph{topological domain theory} as described in \cite{Battenfeld2007:ConvenientCategoryOfDomains} which subsumes both countably based continuous domains and 
complete separable metric spaces. 

The basic idea of topological domain theory is to identify an appropriate full 
subcategory of the \emph{function realizability topos} $\RT(\Kc_2)$ 
(as described e.g.\ in \cite{Oosten2008:Realizability}) which is equivalent to 
the category $\AdmRep$ of \emph{admissible representations} and 
\emph{continuously realizable} maps between them.

\subsection{Admissible Representations}

Admissible representations are the basic structures underlying Weihrauch's
\emph{Type Two Effectivity} (TTE) as described in 
\cite{Weihrauch2000:ComputableAnalysis}. We briefly recall some basic notions.

The set of all functions from $\NN$ to $\NN$ endowed with the \emph{initial 
segment} topology is commonly called \emph{Baire space} for which we write
$\BB$. A \emph{representation} of a topological $T_0$ space $X$ is a 
\emph{quotient} map $\rho$ from a subspace $B$ of $\BB$ to $X$. 
For representations $\rho : B \to X$ and $\rho^\prime : B^\prime \to X^\prime$ 
a function $f : X \to X^\prime$ is called \emph{continuously realizable} iff 
there exists a continuous function $\phi : B \to B^\prime$ making the diagram
\begin{diagram}[small]
B & \rTo^\phi & B^\prime \\
\dTo^\rho & & \dTo_{\rho^\prime} \\
X & \rTo_f& X^\prime
\end{diagram}
commute. A representation $\rho : B \to X$ is called \emph{admissible} iff
for every continuous map $f$ from a subspace $B^\prime$ of $\BB$ to $X$ there
is a continuous map $\phi : B^\prime \to B$ rendering the triangle
\begin{diagram}[small]
B^\prime & \rTo^\phi & B \\
& \rdTo_f & \dTo_\rho \\
& & X
\end{diagram}
commutative. It is easy to see that for admissible representations $\rho : B \to X$
and $\rho^\prime : B^\prime \to X^\prime$ a map $f : X \to X^\prime$ is continuous
iff it is continuously realizable as a map from $\rho$ to $\rho^\prime$. We
write $\AdmRep$ for the ensuing category of admissible representations and
continuous(ly realizable) maps between them. 

We recall from \cite{Battenfeld2007:ConvenientCategoryOfDomains} that complete
separable metric spaces and countably based continuous domains form full
subcategories of $\AdmRep$. 

\subsection{\texorpdfstring{$\QCB_0$}{QCB\_0} Spaces}

As discussed in \cite{Battenfeld2007:ConvenientCategoryOfDomains} the category 
$\AdmRep$ is equivalent to the following subcategory of the category of 
topological spaces and continuous maps. 

\begin{defi}
A \emph{$\QCB_0$ space} is a $T_0$-quotient of a countably based topological space.
We write $\QCB_0$ for the ensuing category of $\QCB_0$ spaces and 
continuous maps between them.
\end{defi}

The $\QCB_0$ spaces are precisely those topological $T_0$ spaces which admit
an admissible representation. Moreover, as shown in \cite{Battenfeld2007:ConvenientCategoryOfDomains}(4.10) the category $\QCB_0$ is cartesian closed, countably complete 
and countably cocomplete. 

Let $\Sigma$ be the Sierpi\'nski space $\{\bot,\top\}$ whose only nontrivial
open set is $\{\top\}$. Obviously, continuous maps from $X$ to $\Sigma$
correspond to open subsets of $X$. For further reference we recall the 
following useful fact from \cite{Battenfeld2008:TopologicalDomainTheory}.

\begin{prop}\label{powerSigSc}
For every $\QCB_0$ space $X$ the exponential $\Sigma^X$ in $\QCB_0$ is 
isomorphic to the space $\mathcal{O}(X)$ of open subsets of $X$ endowed with
the Scott topology arising from the subset ordering $\subseteq$.
\end{prop}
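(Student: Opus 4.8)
The plan is to prove the two directions of the claimed homeomorphism $\Sigma^X \cong \mathcal{O}(X)$ for a $\QCB_0$ space $X$ by first establishing a bijection on underlying sets and then matching the two topologies. On the set level the identification is the standard one: a continuous map $f : X \to \Sigma$ corresponds to the open set $f^{-1}(\{\top\})$, and conversely an open $U \subseteq X$ yields the characteristic map $\chi_U$, which is continuous precisely because $\{\top\}$ is the only nontrivial open set of $\Sigma$. So the only real content is that this bijection is a homeomorphism when the left-hand side carries the $\QCB_0$-exponential topology and the right-hand side carries the Scott topology of $(\mathcal{O}(X),\subseteq)$.

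First I would recall that $\QCB_0$ is cartesian closed, so the exponential $\Sigma^X$ exists and is characterized by the usual universal property: for every $\QCB_0$ space $Z$, continuous maps $Z \to \Sigma^X$ correspond naturally to continuous maps $Z \times X \to \Sigma$, i.e.\ to open subsets of $Z \times X$. In particular, taking $Z = \mathbf{1}$ recovers the set-level bijection above, and taking $Z = \Sigma^X$ itself, applied to the identity, shows that the evaluation map $\mathrm{ev} : \Sigma^X \times X \to \Sigma$ is continuous; equivalently $\{(f,x) \mid f(x) = \top\}$ is open in $\Sigma^X \times X$. From this I would derive that the exponential topology on $\Sigma^X$ refines the Scott topology: if $\mathcal{U} \subseteq \mathcal{O}(X)$ is Scott-open, one checks using the continuity of $\mathrm{ev}$ together with the characterization of open sets in a product of $\QCB_0$ spaces (sequential, hence determined by convergent sequences) that $\mathcal{U}$, viewed as a subset of $\Sigma^X$, is open. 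Conversely, to see that every exponential-open set is Scott-open, I would use the universal property with $Z$ ranging over the countably based spaces that present $X$, or more directly invoke the fact that $\Sigma^X$ is itself $\QCB_0$ and hence sequential, so its topology is determined by which sequences $(f_n)$ converge to which $f$; one then shows a sequence converges in $\Sigma^X$ iff it converges in the Scott topology on $\mathcal{O}(X)$, by unwinding continuity of the transpose $X \to \Sigma^{(\NN_\infty)}$ (using the one-point compactification $\NN_\infty$ as a test space for sequential convergence, which lives in $\QCB_0$).

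The cleanest route, and the one I would actually write, is to cite the abstract result that in any cartesian closed full subcategory of topological spaces that is well-pointed and contains $\Sigma$, the exponential $\Sigma^X$ computed categorically has the Scott topology of $\mathcal{O}(X)$ as its \emph{sequentialization} — and since $\QCB_0$ spaces are exactly the sequential spaces with a countable pseudobase, and $\mathcal{O}(X)$ with the Scott topology is already sequential for $X$ a $\QCB_0$ space (this is where countable basedness of a presenting space is used), no further sequentialization is needed and the two spaces coincide on the nose. This is essentially the content of the cited reference \cite{Battenfeld2008:TopologicalDomainTheory}; our task is only to recall it and note that the hypotheses are met.

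\textbf{Main obstacle.} I expect the delicate point to be the argument that the Scott topology on $\mathcal{O}(X)$ is \emph{already} sequential when $X \in \QCB_0$, so that the exponential (which a priori is only its sequentialization) agrees with it without modification. This relies essentially on $X$ admitting a countably based presentation $q : Y \twoheadrightarrow X$: open subsets of $X$ pull back to open subsets of the countably based space $Y$, and on a countably based space the Scott topology on the open-set lattice is known to be countably based, hence in particular sequential; one then has to transport this through the quotient $q$ and check that the subspace of $\mathcal{O}(Y)$ landing in the image of $\mathcal{O}(X) \hookrightarrow \mathcal{O}(Y)$ inherits a sequential (indeed Scott) topology. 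Making this transport precise — rather than hand-waving "quotients of sequential are sequential" — is the step that needs care, and it is exactly the step that the reference \cite{Battenfeld2008:TopologicalDomainTheory} handles, so in the write-up I would lean on it rather than reprove it.
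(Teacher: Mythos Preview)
Your plan correctly identifies the set-level bijection and the role of sequentiality, and you even mention $N_\infty$ as a test object, but the argument for the hard direction remains a sketch. You assert that ``Scott-open $\Rightarrow$ exponential-open'' can be read off from continuity of evaluation together with sequentiality of the product; unwinding that, however, only recovers the \emph{definition} of convergence in $\Sigma^X$ and does not by itself force a sequence $f_n \to f_\infty$ in $\Sigma^X$ to eventually enter a prescribed Scott-open neighbourhood of $f_\infty$. Your fallback is to defer to \cite{Battenfeld2008:TopologicalDomainTheory} for the claim that the Scott topology on $\mathcal{O}(X)$ is already sequential when $X \in \QCB_0$; that is a legitimate citation strategy, but it is not a proof, and the transport-through-the-quotient argument you sketch for it is, as you yourself note, the delicate part.

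The paper's argument is direct and self-contained, and sidesteps your ``main obstacle'' entirely. After observing that it suffices to show every Scott-closed $C \subseteq \Sigma^X$ is (sequentially) closed in the exponential topology, it proceeds as follows: given a continuous $p : N_\infty \to \Sigma^X$ with $p_n \in C$ for all $n \in \NN$, put $q_n(x) := \bigwedge_{k \in \NN} p_{n+k}(x)$. Each $q_n$ is continuous because the infimum map $\bigsqcap : \Sigma^{N_\infty} \to \Sigma$, $r \mapsto \bigwedge_n r_n$, is itself continuous in $\QCB_0$. Since $q_n \sqsubseteq p_n$ and $C$ is downward closed, $q_n \in C$; since $(q_n)$ is increasing with $p_\infty = \bigsqcup_n q_n$, Scott-closedness of $C$ gives $p_\infty \in C$. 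This ``liminf'' construction is the concrete content your outline is missing; with it one never has to establish separately that the Scott topology on $\mathcal{O}(X)$ is sequential, nor to decompose the statement into ``exponential $=$ sequentialization of Scott'' plus ``Scott is sequential''.
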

\begin{proof}
First of all the elements of $\Sigma^X$ are the open subsets of $X$ and the
information ordering corresponds to $\subseteq$. Thus it suffices to show
that every Scott closed subset $C$ of $\Sigma^X$ is closed w.r.t.\ the 
topology of $\Sigma^X$.

For this purpose we recall the following result. Let $N_\infty$ be the one
point compactification of $\NN$ which is countably based and thus in $\QCB_0$.
One can show that the map $\bigsqcap : \Sigma^{N_\infty} \to \Sigma : p
\mapsto \bigwedge_{n\in\NN} p_n$ is continuous.

Now suppose $p : N_\infty \to \Sigma^X$ with $p_n \in C$ for all $n\in\NN$.
Consider $q_n : X \to \Sigma : x \mapsto \bigwedge_{k\in\NN} p_{n+k}(x)$ which is
continuous since  $\bigsqcap : \Sigma^{N_\infty} \to \Sigma$ is continuous.
Obviously, we have $q_n \sqsubseteq p_n$ and thus $q_n \in C$ and $p_\infty = 
\bigsqcup_{n\in\NN} q_n$. Thus $p_\infty \in C$ since $C$ is Scott closed.
\end{proof}

On every topological space $X$ we may consider the \emph{specialization order} 
$$x \sqsubseteq_X y \;\equiv\; \forall O \in \mathcal{O}(X).\; x \in O \implies y \in O$$
which allows one to define the following notions.

\begin{defi}
A \emph{topological predomain} is a $\QCB_0$ space $X$ where 
every ascending $\omega$-chain $(x_n)$ (w.r.t.\ $\sqsubseteq_X$) 
has a least upper bound  $x_\infty$. We write $\TP$ for the 
\emph{category of topological predomains} which is a full subcategory of $\QCB_0$.

A \emph{topological domain} is a topological predomain $X$ which has a 
least element $\bot_X$ w.r.t.\ $\sqsubseteq_X$. We write $\TD$ for the
ensuing \emph{category of topological domains}.
\end{defi}

%\cite{Battenfeld2007:ConvenientCategoryOfDomains}[5.2(ii)]
One can show that
\begin{prop}
Every continuous function between topological predomains 
preserves suprema of ascending $\omega$-chains.
\end{prop}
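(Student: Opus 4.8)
The plan is to prove the supremum equality by establishing the two inequalities separately. Fix a continuous $f \colon X \to Y$ with $X,Y$ topological predomains, an ascending $\omega$-chain $(x_n)$ in $X$, and its least upper bound $x_\infty$.

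First I would record that $f$ is monotone for the specialization orders: if $x \sqsubseteq_X x'$ and $V \subseteq Y$ is open with $f(x) \in V$, then $x \in f^{-1}(V)$, which is open, so $x' \in f^{-1}(V)$ and hence $f(x') \in V$; as $V$ was arbitrary, $f(x) \sqsubseteq_Y f(x')$. Consequently $(f(x_n))$ is an ascending $\omega$-chain in the topological predomain $Y$ and therefore has a least upper bound $y_\infty$, and from $f(x_n) \sqsubseteq_Y f(x_\infty)$ for all $n$ together with minimality of $y_\infty$ we get $y_\infty \sqsubseteq_Y f(x_\infty)$. It remains to show $f(x_\infty) \sqsubseteq_Y y_\infty$, for then $f(x_\infty) = y_\infty = \bigsqcup_n f(x_n)$, which is the claim.

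The reverse inequality I would reduce to the following statement about a single topological predomain, which is the crux of the argument: if $(x_n)$ is an ascending $\omega$-chain with least upper bound $x_\infty$ and $O \subseteq X$ is open with $x_\infty \in O$, then $x_n \in O$ for some $n$ (equivalently, $O$ being upward closed and the chain increasing, for all but finitely many $n$); that is, the order-theoretic supremum of an ascending $\omega$-chain is also its topological limit. Granting this, the argument is immediate: given open $V \subseteq Y$ with $f(x_\infty) \in V$, apply the statement to $O = f^{-1}(V) \ni x_\infty$ to obtain some $n$ with $f(x_n) \in V$; since $V$ is open, hence upward closed, and $f(x_n) \sqsubseteq_Y y_\infty$, this forces $y_\infty \in V$; as $V$ was arbitrary, $f(x_\infty) \sqsubseteq_Y y_\infty$, completing the proof.

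The main obstacle is thus the coincidence of suprema with topological limits in topological predomains, which is a foundational fact of topological domain theory that I would take from \cite{Battenfeld2007:ConvenientCategoryOfDomains}. If one wants a self-contained argument, a natural route is to reduce to the Sierpi\'nski space: the displayed statement for an open set $O$ is precisely the assertion that the continuous map $\chi_O \colon X \to \Sigma$ preserves the supremum of the chain, so it suffices to treat $\Sigma$-valued maps, and here one works with $\Sigma^X \cong \mathcal{O}(X)$ under the Scott topology from Proposition~\ref{powerSigSc} and the continuity of $\bigsqcap \colon \Sigma^{N_\infty} \to \Sigma$ in the same spirit as in that proof. I expect this reduction, rather than the two monotone inequalities, to be where the real work lies.
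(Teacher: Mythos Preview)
The paper does not give a proof; the proposition is stated after ``One can show that'' and is taken from \cite{Battenfeld2007:ConvenientCategoryOfDomains}. Your argument is sound in structure: monotonicity of continuous maps gives $\bigsqcup_n f(x_n) \sqsubseteq_Y f(x_\infty)$, and the reverse inequality reduces to the claim that the supremum $x_\infty$ of an ascending $\omega$-chain is also its topological limit in $X$. You are right that this is where the content lies.

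Note, though, that this ``crux'' is in fact part of the \emph{definition} of topological predomain in \cite{Battenfeld2007:ConvenientCategoryOfDomains}: there one requires every ascending $\omega$-chain to have a least upper bound \emph{and} to converge to it. The paper's definition has omitted the convergence clause, and with only existence of suprema the proposition actually fails. A counterexample: take $X = \NN \cup \{\infty\}$ with basic open sets $\{k : k \geq n\} \cup \{\infty\}$ for each $n \in \NN$, together with the singleton $\{\infty\}$. This is a countably based $T_0$ space whose specialization order is the usual order on $\omega+1$, so the chain $(n)_n$ has supremum $\infty$; yet $\chi_{\{\infty\}} \colon X \to \Sigma$ is continuous, sends every $n$ to $\bot$ and $\infty$ to $\top$, and so does not preserve this supremum. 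Hence no self-contained argument via $\Sigma$ and Proposition~\ref{powerSigSc} can succeed from the paper's definition as written---the fact you are trying to prove is precisely the missing axiom. Once convergence to the supremum is taken as part of the definition (as in the source), your two-inequality argument is exactly the proof, and nothing further is needed.
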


Furthermore as shown in \cite{Battenfeld2007:ConvenientCategoryOfDomains} 
it holds that

\begin{prop}
The category $\TP$ is a full reflective exponential ideal of $\QCB_0$.
\end{prop}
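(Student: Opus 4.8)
The claim to be established is that $\TP$ is a full reflective exponential ideal of $\QCB_0$. Fullness is immediate from the definition of $\TP$ as a full subcategory, so the real content is reflectivity together with the exponential-ideal property. The plan is to prove the exponential-ideal property first — that is, if $X$ is any $\QCB_0$ space and $Y$ is a topological predomain, then $Y^X$ (computed in $\QCB_0$, which is cartesian closed by \cite{Battenfeld2007:ConvenientCategoryOfDomains}(4.10)) is again a topological predomain — and then derive reflectivity from general categorical nonsense, since a full subcategory of a cartesian closed category that is closed under products and is an exponential ideal is reflective as soon as one produces the reflector by hand, or, conversely, one may argue reflectivity directly.

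\emph{Exponential ideal.} Given an ascending $\omega$-chain $(f_n)$ in $Y^X$ with respect to $\sqsubseteq_{Y^X}$, I would first observe that the specialization order on an exponential $Y^X$ in $\QCB_0$ is computed pointwise: $f \sqsubseteq_{Y^X} g$ iff $f(x) \sqsubseteq_Y g(x)$ for all $x \in X$. This follows because evaluation $\mathrm{ev}_x : Y^X \to Y$ is continuous, hence monotone for specialization, giving one direction; for the other direction one uses that the subbasic Scott-type opens of $Y^X$ separate points in the required way, which is where Proposition~\ref{powerSigSc} and the description of $\QCB_0$-exponentials as quotients of function spaces on countably based spaces enter. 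Granting the pointwise characterization, define $f_\infty(x) := \bigsqcup_n f_n(x)$, which exists in $Y$ since $Y$ is a predomain. The key step is to check that $f_\infty$ is again a morphism of $\QCB_0$, i.e.\ continuous, and is the least upper bound in $Y^X$. Continuity is the main obstacle: a pointwise supremum of continuous maps need not be continuous in general, so one must exploit the specific structure of $\QCB_0$. The cleanest route is the one already rehearsed in the proof of Proposition~\ref{powerSigSc}: use that $N_\infty$, the one-point compactification of $\NN$, lies in $\QCB_0$, and that the map sending a chain to its supremum is continuous on $Y^{N_\infty}$ because suprema of ascending $\omega$-chains are ``computable'' — more precisely, the chain $(f_n)$ together with $f_\infty$ assembles into a continuous map $N_\infty \times X \to Y$, whence $f_\infty$, being the restriction to the point $\infty$, is continuous by cartesian closedness. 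That the resulting $f_\infty$ is the least upper bound then follows again from the pointwise description of $\sqsubseteq_{Y^X}$.

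\emph{Reflectivity.} Once $\TP$ is known to be closed under exponentials $(-)^X$ for all $X \in \QCB_0$, and it is clearly closed under countable products (a product of predomains is a predomain, since ascending chains and their suprema are computed coordinatewise), the reflectivity of $\TP$ in $\QCB_0$ can be obtained as in \cite{Battenfeld2007:ConvenientCategoryOfDomains}: the reflector is constructed by freely adjoining suprema of ascending $\omega$-chains, or equivalently by observing that $\TP$ is closed under the relevant limits in $\QCB_0$ and invoking an adjoint-functor argument valid because $\QCB_0$ is locally presentable-like enough (it is countably complete and cocomplete). I would simply cite \cite{Battenfeld2007:ConvenientCategoryOfDomains} for the existence of the reflector and concentrate the argument on verifying that the reflection unit is compatible with the exponential structure, which is automatic for an exponential ideal.

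I expect the pointwise-continuity step for $f_\infty$ to be the crux; everything else is either definitional (fullness), a coordinatewise check (closure under products), or a citation (the abstract reflectivity machinery of \cite{Battenfeld2007:ConvenientCategoryOfDomains}).
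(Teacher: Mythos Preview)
The paper does not prove this proposition at all; the sentence preceding it reads ``Furthermore as shown in \cite{Battenfeld2007:ConvenientCategoryOfDomains} it holds that'', and the statement is simply recorded as a fact from that reference. So there is no argument in the paper to compare your sketch against, and by attempting one you already go beyond what the paper does.

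That said, your sketch has a genuine gap at exactly the place you yourself flag as the crux. You write that ``the chain $(f_n)$ together with $f_\infty$ assembles into a continuous map $N_\infty \times X \to Y$, whence $f_\infty$, being the restriction to the point $\infty$, is continuous by cartesian closedness.'' This is circular: continuity of the map $N_\infty \times X \to Y$ already presupposes continuity of its restriction to $\{\infty\}\times X$, which is $f_\infty$. The analogous manoeuvre in the proof of Proposition~\ref{powerSigSc} succeeds only because $\Sigma$ admits a continuous countable infimum $\bigsqcap : \Sigma^{N_\infty}\to\Sigma$, which lets one replace the $p_n$ by the \emph{a priori} continuous $q_n = \bigwedge_{k}p_{n+k}$ and then exhibit $p_\infty$ as the supremum of an ascending chain of maps already known to be continuous and to lie below each $p_n$. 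A general topological predomain $Y$ has no such infimum operation, so that argument does not transfer. The proof in \cite{Battenfeld2007:ConvenientCategoryOfDomains} and, in more detail, \cite{Battenfeld2008:TopologicalDomainTheory} handles this differently, via an explicit construction of the reflector and a characterisation of predomains in terms of convergence of ascending sequences; if you want a self-contained argument rather than a citation, you will need that machinery rather than the $N_\infty$ shortcut. Your remark that the specialization order on $Y^X$ is pointwise is correct in $\QCB_0$, but the justification you give (``subbasic Scott-type opens separate points'') is too vague; the clean way is to observe that $f \sqsubseteq g$ in any space is witnessed by a continuous map $\Sigma \to Y^X$ with $\bot\mapsto f$, $\top\mapsto g$, and then transpose via cartesian closure.
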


%\cite{Battenfeld2007:ConvenientCategoryOfDomains}[5.9]
\begin{prop}
The category $\TD$ is an exponential ideal of $\QCB_0$ and is closed under
countable products in $\QCB_0$.
\end{prop}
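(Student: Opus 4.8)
The plan is to prove the two parts separately, using the previously established fact that $\TP$ is a full reflective exponential ideal of $\QCB_0$ together with one elementary observation: if $Z$ is a $\QCB_0$ space with a least element $\bot_Z$ for $\sqsubseteq_Z$, then for every $z \in Z$ the ``step'' map $t_z : \Sigma \to Z$ with $t_z(\bot) = \bot_Z$ and $t_z(\top) = z$ is continuous. This is immediate: an open $V \subseteq Z$ is upward closed for $\sqsubseteq_Z$, so either $\bot_Z \in V$, whence $V = Z$ and $t_z^{-1}(V) = \Sigma$, or $\bot_Z \notin V$ and then $t_z^{-1}(V) \subseteq \{\top\}$; and since $\QCB_0$ is a full subcategory of $\mathbf{Sp}$ this continuous map is a $\QCB_0$ morphism.

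For the exponential ideal claim, let $Y \in \TD$ and let $X \in \QCB_0$ be arbitrary. Since $\TP$ is an exponential ideal and $Y \in \TP$, the exponential $Y^X$ formed in $\QCB_0$ is already a topological predomain, so it remains only to exhibit a least element. I would take the constant function $c = \lambda x.\,\bot_Y$, which is continuous and hence an element of $Y^X$, and show $c \sqsubseteq_{Y^X} f$ for arbitrary $f \in Y^X$ by producing a morphism $\Phi : \Sigma \to Y^X$ with $\Phi(\bot) = c$ and $\Phi(\top) = f$; since $\bot \sqsubseteq_\Sigma \top$ and morphisms preserve the specialization order, this gives $c \sqsubseteq_{Y^X} f$. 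By cartesian closedness $\Phi$ is the transpose of the map $\widehat{\Phi} : \Sigma \times X \to Y$ with $\widehat{\Phi}(\bot,x) = \bot_Y$ and $\widehat{\Phi}(\top,x) = f(x)$, and $\widehat{\Phi}$ is continuous by the computation in the observation above (an open $V \subseteq Y$ containing $\bot_Y$ is all of $Y$, and otherwise $\widehat{\Phi}^{-1}(V) = \{\top\} \times f^{-1}(V)$), where it is enough to check continuity for the ordinary product topology since the $\QCB_0$-product refines it. Hence $Y^X \in \TD$. Equivalently, this shows the specialization order on $Y^X$ is the pointwise one, so its least element is the pointwise least map.

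For closure under countable products, let $(X_n)_{n\in\NN}$ be a family in $\TD$ and let $P = \prod_n X_n$ be its product in $\QCB_0$, which exists since $\QCB_0$ is countably complete. A full reflective subcategory is closed under all limits existing in the ambient category, so $P \in \TP$, i.e.\ $P$ is a topological predomain. For the least element I would take $b = (\bot_{X_n})_n$; given $z = (z_n)_n \in P$, the map $\Psi : \Sigma \to P$ with $\Psi(\bot) = b$ and $\Psi(\top) = z$ is continuous because, by the universal property of the product, this reduces to continuity of each $\pi_n \circ \Psi = t_{z_n}$, which holds by the observation. Then $b = \Psi(\bot) \sqsubseteq_P \Psi(\top) = z$, so $b$ is least and $P \in \TD$.

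The only points that are not pure bookkeeping are the ones where the $\QCB_0$ structure, as opposed to the mere topology, intervenes: that the step maps are $\QCB_0$ morphisms, that the $\QCB_0$-product refines the Tychonoff product so that the transpose $\widehat{\Phi}$ is continuous, and that reflectivity of $\TP$ yields closure under countable products. All three are available from \cite{Battenfeld2007:ConvenientCategoryOfDomains}, and once they are in place everything else follows from the closure properties of $\TP$ already recorded.
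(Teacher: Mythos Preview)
Your argument is correct. Note, however, that the paper does not actually supply a proof of this proposition: it is stated immediately after the $\TP$ proposition, both being recorded as facts taken from \cite{Battenfeld2007:ConvenientCategoryOfDomains}. So there is no ``paper's own proof'' to compare against; what you have written is a clean derivation of the $\TD$ statement from the $\TP$ statement already quoted, together with the elementary step-map observation. The only delicate points---that the $\QCB_0$ product topology refines the Tychonoff topology (so continuity of $\widehat{\Phi}$ can be checked on the latter), and that reflectivity of $\TP$ gives closure under the countable limits that exist in $\QCB_0$---you have identified and handled correctly.
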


\subsection{\texorpdfstring{$\AdmRep$}{AdmRep} within \texorpdfstring{$\RT(\Kc_2)$}{RT(K\_2)}}

Another important aspect of $\AdmRep$ is that it appears as a full reflective
subcategory of the function realizability topos $\RT(\Kc_2)$ as described e.g.
in \cite{Oosten2008:Realizability}. 

The underlying set of the ``second Kleene algebra'' $\Kc_2$ is Baire space. 
For $\alpha,\beta \in \BB$ we define $\alpha | \beta \simeq n$ iff 
$\alpha(\bar{\beta}(k)) = n+1$ and $\alpha(\bar{\beta}(\ell)) = 0$ 
for $\ell < k$.\footnote{We write $\bar{\alpha}(n)$
for the code of the sequence $\langle \alpha_0,\dots,\alpha_{n-1}\rangle$.}
The partial application operation of $\Kc_2$ is defined as
\[ \alpha\beta \simeq \gamma \iff 
   \forall n\in\NN.\; \alpha|(\langle n \rangle{*}\beta) = \gamma_n \]
where $*$ stands for concatenation of finite sequences with arbitrary
sequences. 

See \cite{Oosten2008:Realizability} for the definition of the category
$\Asm(\Kc_2)$ of \emph{assemblies} which is equivalent to the full subcategory 
of $\RT(\Kc_2)$ on $\neg\neg$-separated objects and its full subcategory 
$\Mod(\Kc_2)$ of \emph{modest sets}. Recall that modest sets are quotients
of $N^N$ in $\RT(\Kc_2)$ w.r.t.\ $\neg\neg$-closed partial equivalence 
where $N$ is the natural numbers object of $\RT(\Kc_2)$.

\begin{defi}
An object $X$ in $\RT(\Kc_2)$ is called \emph{$\Sigma$-extensional} if 
the map 
$$\eta_X : X \to \Sigma^{\Sigma^X} : x \mapsto \lambda p . p(x)$$
is a regular, i.e.\ $\neg\neg$-closed, monomorphism. We write $\Mod_\Sigma(\Kc_2)$ 
for the full subcategory of $\RT(\Kc_2)$ on $\Sigma$-extensional objects.
\end{defi}

Theorem~6.1.9 of \cite{Battenfeld2008:TopologicalDomainTheory} guarantees that
\begin{prop}\label{ExtCl}\leavevmode
\begin{enumerate}
%\cite{Battenfeld2008:TopologicalDomainTheory}[6.1.9]
\item[(1)] 
The category $\Mod_\Sigma(\Kc_2)$ is equivalent to the category $\QCB_0$.
\item[(2)] $\Mod_\Sigma(\Kc_2)$ is an exponential ideal in $\RT(\Kc_2)$.
\item[(3)] Up to isomorphism the objects of $\Mod_\Sigma(\Kc_2)$ are
           the $\neg\neg$-subobjects of powers of $\Sigma$.
\end{enumerate}
\end{prop}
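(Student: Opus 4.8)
The plan is to recover Theorem~6.1.9 of \cite{Battenfeld2008:TopologicalDomainTheory}. For part~(1) I would not argue directly but combine two known facts: the equivalence $\QCB_0\simeq\AdmRep$ of \cite{Battenfeld2007:ConvenientCategoryOfDomains}, and Bauer's realisation of $\AdmRep$ as a full reflective subcategory of $\RT(\Kc_2)$ \cite{Bauer2000:RealizabilityApproachToComputableAnalysis}, under which an admissible representation $\rho : B \to X$ becomes the object of $\RT(\Kc_2)$ whose realizers of $x$ are the elements of $\rho^{-1}(x)$, and continuous maps of spaces become tracked maps. It then remains to identify the essential image of this embedding with $\Mod_\Sigma(\Kc_2)$. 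Two preliminary facts feed into this and into parts~(2)--(3) as well: first, the Sierpi\'nski object $\Sigma$ is itself $\Sigma$-extensional (the map $\eta_\Sigma : \Sigma \to \Sigma^{\Sigma^\Sigma}$ is monic because the two points of $\Sigma$ are already separated by $\mathrm{id}_\Sigma$ viewed in $\Sigma^\Sigma$, and one checks on realizers that the subobject it cuts out of $\Sigma^{\Sigma^\Sigma}$ is $\neg\neg$-closed); and second, the class of $\Sigma$-extensional objects is closed under $\neg\neg$-closed (``regular'') subobjects --- immediate from naturality of $\eta$, using that $\neg\neg$-closed monos compose, so a $\neg\neg$-closed subobject of a $\neg\neg$-closed subobject of a power of $\Sigma$ is again one.

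Given these, part~(3) is short. A $\Sigma$-extensional $X$ is by definition a $\neg\neg$-subobject of the power $\Sigma^{\Sigma^X}$ of $\Sigma$. Conversely, $\Sigma$ is $\Sigma$-extensional by the first preliminary fact, so by part~(2) every power $\Sigma^A$ is $\Sigma$-extensional, and hence by closure under $\neg\neg$-subobjects so is every $\neg\neg$-subobject of $\Sigma^A$. Thus, up to isomorphism, the objects of $\Mod_\Sigma(\Kc_2)$ are exactly the $\neg\neg$-subobjects of powers of $\Sigma$.

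For part~(2) I would split $\Sigma$-extensionality of $X$ into ``$\Sigma$-separated'' ($\eta_X$ monic) and ``$\Sigma$-replete'' ($\eta_X$ $\neg\neg$-closed), and show both pass to an arbitrary exponential $Y^X$ when $Y$ is $\Sigma$-extensional. That $Y^X$ is $\Sigma$-separated is immediate, since the points of $Y^X$ are separated pointwise by those of $Y$. That $Y^X$ is $\Sigma$-replete follows from repleteness of $Y$ by the standard argument that such ``replete'' / ``well-complete'' objects form an exponential ideal (as in \cite{Battenfeld2008:TopologicalDomainTheory}), which hinges on $\Sigma$ being injective with respect to $\neg\neg$-dense monomorphisms. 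There is a mild apparent circularity --- the closure property used for part~(3) bootstraps through part~(2) --- but it is resolved by establishing the order ``$\Sigma$ is $\Sigma$-extensional'' $\Rightarrow$ ``$\Sigma$-extensional objects are closed under $\neg\neg$-subobjects'' $\Rightarrow$ part~(2) $\Rightarrow$ part~(3), with no step depending on a later one.

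The step I expect to be the main obstacle is the matching in part~(1): turning the abstract condition ``$\eta_X$ is a $\neg\neg$-closed monomorphism'' into the operational statement ``$X$ carries an admissible representation''. This requires the full dictionary between $\RT(\Kc_2)$ and TTE of \cite{Bauer2000:RealizabilityApproachToComputableAnalysis,Lietz2004:FromConstructiveMathematicsToComputableAnalysisViaTheRealizabilityInterpretation} --- that tracked maps correspond to continuous maps of $\QCB_0$-spaces, that monicity of $\eta_X$ encodes $T_0$-ness together with the topology of $X$ being the $\Sigma$-topology, and that $\neg\neg$-closedness of $\eta_X$ encodes the admissibility lifting property (a saturation/completeness condition on realizers). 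For the converse inclusion one uses part~(3): a $\Sigma$-extensional $X$ sits as a $\neg\neg$-subobject of a power $\Sigma^A$ of $\Sigma$, whose underlying $\QCB_0$-space carries a canonical admissible representation that restricts to $X$. By contrast, the closure manipulations in parts~(2)--(3) are essentially formal, the only care needed being the sequencing just described.
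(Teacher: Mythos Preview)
The paper does not prove this proposition at all: it simply records that ``Theorem~6.1.9 of \cite{Battenfeld2008:TopologicalDomainTheory} guarantees'' the three claims and moves on. So there is no argument in the paper to compare your sketch against; you are reconstructing the content of the cited reference, and your outline is broadly faithful to how that material is developed (the $\QCB_0 \simeq \AdmRep$ equivalence together with the realizability embedding for~(1), exponential-ideal arguments via $\Sigma$-repleteness for~(2), and the observation that $\Sigma$-extensionality is witnessed by $\eta_X$ for one half of~(3)).

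One point where your sketch is thinner than you may realise: your justification for ``$\Sigma$-extensional objects are closed under $\neg\neg$-subobjects'' reads ``a $\neg\neg$-closed subobject of a $\neg\neg$-closed subobject of a power of $\Sigma$ is again one'', which already presupposes the characterisation~(3) rather than the defining condition that $\eta_A$ itself be a $\neg\neg$-closed mono. From the naturality square $\Sigma^{\Sigma^i}\circ\eta_A = \eta_X\circ i$ you get that $\eta_A$ is the first factor of a $\neg\neg$-closed mono, but concluding that $\eta_A$ is itself $\neg\neg$-closed requires an extra ingredient (e.g.\ that the second factor is monic, or an injectivity property of $\Sigma$ with respect to $\neg\neg$-dense monos). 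This is exactly where the genuine work in Battenfeld's proof sits, and your ordering ``closure under $\neg\neg$-subobjects $\Rightarrow$ (2) $\Rightarrow$ (3)'' does not by itself dissolve the circularity. Since the paper is content to cite the result, this does not affect the present paper, but if you intend your sketch to stand on its own you should make that step explicit.
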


\section{Computable Basic Quantum Mechanics}\label{cbqm}

The aim of this main section is to identify the Hilbert lattice $\HL$,
the type $\Sta$ of quantum states and the type $\Obs$ of quantum observables
as objects of $\AdmRep \simeq \QCB_0$. This will induce a notion of 
computability on $\HL$, $\Sta$ and $\Obs$ and suggest topologies on the 
respective sets which to our knowledge have not been considered so far in the 
literature on mathematical foundations of basic quantum mechanics. Of course, 
the sets $\HL$, $\Sta$ and $\Obs$ can all be identified with particular subsets
of $\Bfrak(\HS)$ which itself can be endowed with the various different topologies
as considered in (linear) functional analysis. We will discuss how these 
topologies relate to the ones induced by admissible representations.

\subsection{Separable Bananch Spaces within \texorpdfstring{$\AdmRep$}{AdmRep}}

As is well known from e.g.\ \cite{Weihrauch2000:ComputableAnalysis} all
complete separable metric spaces can be endowed with admissible 
representations. This applies in particular to $\RR$, $\CC$ and 
separable Banach spaces over these fields such as $\HS$.

\subsection{Spaces of Bounded Linear Operators within \texorpdfstring{$\AdmRep$}{AdmReP}}

For separable Banach spaces $E$ and $F$ there arises the question what
is the natural topology on the set $\Bfrak(E,F)$ of bounded linear operators
from $E$ to $F$. The norm topology endows $\Bfrak(E,F)$ with the structure
of a Banach space which, however, in general is not separable. This holds
in particular for $\Bfrak(\HS)$, the space of bounded linear operators 
on $\HS$, which can be seen as follows. Consider the linear operator 
$T : \ell^\infty \to \Bfrak(\ell^2)$ sending $x \in \ell^\infty$ to the 
linear operator $T(x) : \HS \to \HS : (y_n)_{n\in\NN} \mapsto (x_n y_n)_{n\in\NN}$
which has the same norm as $x$. Since $\ell^\infty$ is not separable the
Banach space $\Bfrak(\HS)$ is not separable w.r.t.\ the norm topology.\footnote{We thank V.~Brattka for drawing our attention to this counterexample.} 

However, since $E$ and $F$ are $\QCB_0$ spaces we may consider their exponential 
$F^E$ in $\QCB_0$. Following the description of exponentials in $\QCB_0$ as given 
in e.g.\ \cite{Battenfeld2007:ConvenientCategoryOfDomains} the underlying set of $F^E$
is the set of all continuous functions from $E$ to $F$ where $(f_n)$ converges 
to $f_\infty$ in $F^E$ iff for all sequences $(x_n)$ in $E$ converging to $x_\infty$ 
the sequence $(f_n(x_n))$ converges to $f_\infty(x_\infty)$ in $F$. 
Since being linear is a $\neg\neg$-closed predicate on $F^E$ we consider 
$\Bfrak(E,F)$ as the corresponding $\neg\neg$-closed subobject of $F^E$ 
which again is a $\QCB_0$ space. As shown in the next theorem the $\QCB_0$ topology
on $\Bfrak(E,F)$ is the sequentialization of a ``traditional'' topology 
on $\Bfrak(E,F)$, namely the strong operator topology.

\begin{thm}\label{convLO}
A sequence $(T_n)$ converges to $T$ in $\Bfrak(E,F)$ w.r.t.\ its $\QCB_0$ topology
iff $(T_n)$ converges to $T$ in the \emph{strong operator topology}.

Thus, the $\QCB_0$ topology of  $\Bfrak(E,F)$ is the sequentialization of the
strong operator topology on $\Bfrak(E,F)$.
\end{thm}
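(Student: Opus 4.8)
The plan is to reduce both sides of the equivalence to statements about convergent sequences and then invoke the uniform boundedness principle. Recall from the description of exponentials in $\QCB_0$ quoted above that a sequence $(T_n)$ converges to $T$ in $F^E$ exactly when $(T_n(x_n))$ converges to $T(x_\infty)$ in $F$ for \emph{every} sequence $(x_n)$ converging to $x_\infty$ in $E$; and, since the inclusion of the $\neg\neg$-closed subobject $\Bfrak(E,F)$ into $F^E$ preserves and reflects convergence of sequences, the same characterization describes convergence in $\Bfrak(E,F)$. On the other hand, $(T_n)$ converges to $T$ in the strong operator topology by definition iff $(T_n(x))$ converges to $T(x)$ in $F$ for each fixed $x \in E$.

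For the direction from the $\QCB_0$ topology to the strong operator topology I would apply the first characterization to the constant sequence $x_n = x$: since $x_n \to x$ trivially, convergence of $(T_n)$ to $T$ in $\Bfrak(E,F)$ yields $T_n(x) = T_n(x_n) \to T(x)$ for every $x \in E$, which is strong operator convergence. For the converse, assume $T_n \to T$ strongly. The key step is an appeal to the Banach--Steinhaus theorem: as $E$ is a Banach space and for each $x \in E$ the convergent sequence $(T_n(x))_n$ is bounded, we obtain a uniform bound $\|T_n\| \leq M$. Given any sequence $(x_n)$ in $E$ with $x_n \to x_\infty$, we then estimate
\[ \|T_n(x_n) - T(x_\infty)\| \;\leq\; \|T_n\|\,\|x_n - x_\infty\| + \|T_n(x_\infty) - T(x_\infty)\| \;\leq\; M\,\|x_n - x_\infty\| + \|T_n(x_\infty) - T(x_\infty)\|, \]
and both summands tend to $0$ --- the first since $x_n \to x_\infty$ in norm, the second by strong operator convergence at $x_\infty$. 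Hence $T_n(x_n) \to T(x_\infty)$, which is $\QCB_0$-convergence of $(T_n)$ to $T$. I expect the invocation of uniform boundedness to be the only genuine ingredient here --- it is where completeness of $E$ is used --- the remainder being the routine splitting estimate.

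Finally, to obtain the statement about sequentialization I would argue that every $\QCB_0$ space is sequential (being a $T_0$ quotient of a countably based, hence first countable, space), so the $\QCB_0$ topology on $\Bfrak(E,F)$ is sequential; by the equivalence just proved it has exactly the convergent sequences of the strong operator topology; and a sequential topology is determined by its convergent sequences. Since the sequentialization of the strong operator topology is by construction the sequential topology with precisely those convergent sequences, the two topologies coincide.
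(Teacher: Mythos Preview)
Your proof is correct and follows essentially the same approach as the paper: the forward direction is immediate (the paper says ``obvious'', you spell out the constant-sequence argument), and the reverse direction is the same Banach--Steinhaus plus triangle-inequality estimate, with only a cosmetic difference in how the splitting is written. Your additional paragraph justifying the sequentialization claim is more explicit than what the paper provides, which simply records it as part of the theorem statement.
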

\begin{proof}
The forward direction is obvious.

For the reverse direction suppose that $(T_n)$ converges to $T$ in the
strong operator topology, i.e.\ $\lim\limits_{n\to\infty} T_n x = Tx$ 
for all $x \in E$. Thus, for all $x \in E$ the set 
$\{ T_n x \mid n \in \NN\} \cup \{Tx\}$ is bounded from which 
it follows by the Banach-Steinhaus theorem that 
$\{ \|T_n\| \mid n\in\NN\} \cup \{\|T\|\}$
is bounded by some $c > 0$. For showing that $(T_n)$ converges to $T$ in $F^E$ 
suppose that $(x_n)$ converges to $x$ in $E$. We have
\[\begin{array}{rl}
\|Tx - T_n x_n\|  &  \leq \|Tx - T_n x\| + \| T_n x - T_n x_n\| \\
                 &  \leq \|Tx - T_n x\| + \|T_n\| \cdot \| x - x_n\| \\
                 &  \leq  \|Tx - T_n x\| + c \|x_n - x\|
\end{array}\]
for which reason $\lim\limits_{n\to\infty} \|Tx - T_n x_n\| = 0$ since
$\lim\limits_{n\to\infty} \|Tx - T_n x\| = 0$ and 
$\lim\limits_{n\to\infty}\|x_n - x\| = 0$. 
Thus $\lim\limits_{n\to\infty} T_n x_n = Tx$ as desired.
\end{proof}

If $E$ is separable Hilbert space $\HS$ (e.g.\ $\ell^2$) and $F$ is $\CC$ or $\HS$
then the strong operator topology on $\Bfrak(E,F)$ is  {\em not sequential} (see solution 
of Problem 21 on p.185 of Halmos's \emph{Hilbert Space Problem Book}~\cite{Halmos1967}) 
for which reason one has to take its sequentialization to obtain the natural topology 
of $\Bfrak(E,F)$ in $\QCB_0$.
Thus, in particular, the natural topology on $E^\prime = \Bfrak(E,\CC)$ is the 
sequentialization of the weak$^*$ topology on $E^\prime$. 

Accordingly, in the following we will consider $\HS^\prime$ as endowed 
with the sequentialization of the weak$^*$ topology. We write $i : \HS\to\HS^\prime$
for the map with $i(x)(y) = \scpr{x}{y}$ which is continuous but not a homeomorphism
unless $\HS$ is endowed with the sequentialization of the weak topology.

\subsection{Hilbert Lattice within \texorpdfstring{$\AdmRep$}{AdmRep}}

Obviously, the Sierpi\'nski space $\Sigma$ also lives within $\AdmRep$.
Thus, also $\Sigma^{\HS^\prime}$ lives within $\AdmRep$. From Proposition~\ref{powerSigSc} 
we know that $\Sigma^{\HS^\prime}$ carries the Scott topology. Thus, when identifying
$p \in \Sigma^{\HS^\prime}$ with the closed subset $p^{-1}(\bot)$ of $\HS^\prime$ the set 
$\closed(\HS^\prime)$ of closed subsets of $\HS^\prime$ gets endowed with the 
Scott topology induced by the partial order $\supseteq$ for which we write
$\sqsubseteq_{\closed(\HS^\prime)}$ or simply $\sqsubseteq$ as is common for the
specialization order. 

However, for later use it is useful to make explicit what it means that
a sequence $(p_n)$ converges to $p_\infty$ in $\Sigma^{\HS^\prime}$, namely that
$(p_n(x_n))$ converges to $p_\infty(x_\infty)$ in $\Sigma$ whenever $(x_n)$ converges
to $x_\infty$ in $\HS^\prime$. Thus, the sequence $(p_n)$ converges to $p_\infty$ 
in $\Sigma^{\HS^\prime}$ iff for all $(x_n)$ converging to $x_\infty$ from 
$p_\infty(x_\infty) = \top$ it follows that $\exists n \forall k \geq n \; p_k(x_k) = \top$
iff for all $(x_n)$ converging to $x_\infty$ 
from $\forall n \exists k{\geq}n \; p_k(x_k) = \bot$ 
it follows that $p_\infty(x_\infty) = \bot$.

Since by Proposition~\ref{ExtCl} the category $\AdmRep$ is closed under 
$\neg\neg$-subobjects the collection of closed linear subspaces of $\HS^\prime$ gives 
rise to a $\neg\neg$-closed subobject of $\closed(\HS^\prime) \cong \Sigma^{\HS^\prime}$.

\begin{defi}\label{HLdef}
The \emph{Hilbert lattice} $\HL$ in $\RT(\Kc_2)$ is the subobject of
$\Sigma^{\HS^\prime}$ consisting of all $p$ satisfying the conditions
\begin{enumerate}
\item[(1)] $\forall x,y.\, p(x) = \bot \wedge p(y) = \bot \implies p(x+y) = \bot$
\item[(2)] $\forall x. \forall \lambda.\, p(x) = \bot \implies p(\lambda x) = \bot$.
\end{enumerate}
\end{defi}

Since conditions (1) and (2) are $\neg\neg$-closed $\HL$ appears as a
$\neg\neg$-subobject of $\Sigma^{\HS^\prime}$ and thus is an element of $\AdmRep$. 
As follows from \cite{Battenfeld2007:ConvenientCategoryOfDomains,Battenfeld2008:TopologicalDomainTheory} the topology on $\HL$ is the \emph{sequentialization} of the subspace 
topology induced by the inclusion of $\HL$ into $\Sigma^{\HS^\prime}$ which itself 
carries the Scott topology.

Nevertheless $\HL$ inherits its specialization order from $\Sigma^{\HS^\prime}$ 
as follows from

\begin{lem}\label{specordlm}
Let $A \subseteq_{\neg\neg} \Sigma^X$ in $\Mod(\Kc_2)$. Then $A$ inherits its information
ordering from $\Sigma^X$, i.e.\ for $p,q \in A$ we have $p \sqsubseteq q$ iff 
$p(x) \sqsubseteq q(x)$ for all $x \in X$. 
\end{lem}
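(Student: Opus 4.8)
The plan is to exploit the fact that $A$ is a $\neg\neg$-closed subobject of $\Sigma^X$, so that the specialization order on $A$ is the restriction of the specialization order on $\Sigma^X$ — and then to compute the latter explicitly. The one direction is trivial: if $p \sqsubseteq q$ in $\Sigma^X$, then since $A$ carries the subspace (or its sequentialization) topology, every open of $A$ is the restriction of an open of $\Sigma^X$, hence $p \sqsubseteq q$ in $A$; and conversely, since the inclusion $A \hookrightarrow \Sigma^X$ is continuous, it preserves the specialization order, so $p \sqsubseteq q$ in $A$ implies $p \sqsubseteq q$ in $\Sigma^X$. Thus it suffices to identify the specialization order on $\Sigma^X$ itself.

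First I would recall from Proposition~\ref{powerSigSc} that $\Sigma^X$ is the space $\mathcal{O}(X)$ of open subsets of $X$ with the Scott topology of $\subseteq$; under the identification $p \leftrightarrow p^{-1}(\top)$, the information (specialization) order of the Scott topology is exactly $\subseteq$. Hence $p \sqsubseteq q$ in $\Sigma^X$ iff $p^{-1}(\top) \subseteq q^{-1}(\top)$, i.e.\ iff for all $x \in X$, $p(x) = \top$ implies $q(x) = \top$, which is precisely the pointwise order: $p(x) \sqsubseteq q(x)$ in $\Sigma$ for every $x$. Combining this with the previous paragraph gives the claim for $A$.

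The only genuine point requiring care is the passage between $A$ with its \emph{sequentialized} subspace topology (which is what $A$ actually carries as a $\QCB_0$ space, per the remarks preceding the lemma) and $A$ with the plain subspace topology from $\Sigma^X$: a priori sequentialization can create new open sets and hence coarsen the specialization order. But sequentialization never changes the specialization order — if $x \sqsubseteq y$ fails, there is an open $O$ with $x \in O$, $y \notin O$, and the constant sequence at $x$ witnesses that $\{x\}$-type separation survives, so $O$ (or a sequentially open set separating them) remains after sequentialization; more simply, $x \sqsubseteq_X y$ iff $x_n := x$ converges to $y$, a condition manifestly invariant under sequentialization. So the specialization order of $A$ agrees with that of the plain subspace, and the computation above closes the argument. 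The main (minor) obstacle is simply to phrase this invariance cleanly rather than any real difficulty.
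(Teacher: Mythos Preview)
Your argument is correct, but it takes a different route from the paper's. The paper works categorically: it observes that $p \sqsubseteq_{\Sigma^X} q$ is equivalent to the existence of a morphism $f : \Sigma \to \Sigma^X$ in $\Mod(\Kc_2)$ with $f(\bot)=p$ and $f(\top)=q$; for the nontrivial direction it then uses the hypothesis that $A$ is $\neg\neg$-closed to conclude that any such $f$ (whose set-theoretic image lies in $A$) factors through $A$, witnessing $p \sqsubseteq_A q$. Your approach is purely topological: you pass to $\QCB_0$, note that $A$ carries the sequentialization of the subspace topology, and argue that sequentialization never changes the specialization order because it preserves convergent sequences and specialization can be read off from convergence of constant sequences. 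Your route is more elementary and in fact proves slightly more --- it shows that \emph{any} subobject of $\Sigma^X$ with the sequentialized subspace topology inherits specialization, so the $\neg\neg$-closedness hypothesis is not actually needed for this lemma. The paper's route, on the other hand, stays entirely inside $\Mod(\Kc_2)$ and makes the role of the hypothesis explicit.

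One small slip: you write ``$x \sqsubseteq_X y$ iff $x_n := x$ converges to $y$'', but the constant sequence at $x$ converges to $y$ exactly when every open neighbourhood of $y$ contains $x$, i.e.\ when $y \sqsubseteq x$. This does not affect your argument, since the point is only that specialization is determined by which constant sequences converge, and that is invariant under sequentialization.
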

\begin{proof}
By Proposition~\ref{powerSigSc} the claim holds for $\Sigma^X$. But for $p,q \in\Sigma^X$ 
we have $p \sqsubseteq q$ iff there exists a morphism $f : \Sigma \to \Sigma^X$ 
in $\Mod(\Kc_2)$ with $f(\bot) = p$ and $f(\top) = q$. 

Suppose $p,q \in A$. If $p \sqsubseteq_A q$ then $p \sqsubseteq_{\Sigma^X} q$ since
the inclusion of $A$ into $\Sigma^X$ is a morphism in $\Mod(\Kc_2)$. On the other hand 
if $p \sqsubseteq_{\Sigma^X} q$ then by the observation above there is a morphism 
$f : \Sigma \to \Sigma^X$ in $\Mod(\Kc_2)$ with $f(\bot) = p$ and $f(\top) = q$.
Since $f$ factors through $A$ it follows that $p \sqsubseteq_A q$.
\end{proof}

Now we can show that

\begin{prop}\label{HLdom}
$\HL$ is a topological domain.
\end{prop}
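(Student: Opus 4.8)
The plan is to verify the definition of topological domain for $\HL$ as a subobject of $\Sigma^{\HS^\prime}$: namely that $\HL$ is a $\QCB_0$ space (already established, since it is a $\neg\neg$-subobject of $\Sigma^{\HS^\prime}$ by Proposition~\ref{ExtCl}), that it has a least element in its specialization order, and that every ascending $\omega$-chain has a least upper bound \emph{lying in $\HL$}. By Lemma~\ref{specordlm} the specialization order on $\HL$ is inherited from $\Sigma^{\HS^\prime}$, which by Proposition~\ref{powerSigSc} is the Scott order coming from $\subseteq$ on closed subsets of $\HS^\prime$, i.e.\ $p \sqsubseteq q$ iff $p^{-1}(\bot) \supseteq q^{-1}(\bot)$. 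So the bottom element is the $p$ with $p^{-1}(\bot) = \HS^\prime$, which corresponds to the closed linear subspace $\HS^\prime$ itself (recall $\bot$ plays the role of ``true'', so the top element $\HS$ of the usual subspace lattice is the \emph{least} element here); this is certainly in $\HL$, validating conditions (1) and (2) of Definition~\ref{HLdef} trivially.

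The substance is closure under suprema of ascending $\omega$-chains. Given $p_0 \sqsubseteq p_1 \sqsubseteq \cdots$ in $\HL$, the supremum in $\Sigma^{\HS^\prime}$ exists (since $\Sigma^{\HS^\prime}$ is a topological predomain, being a power of $\Sigma$, or directly: the join of an ascending chain in the Scott topology of $\mathcal{O}(\HS^\prime)$, equivalently $\mathcal{C}(\HS^\prime)$, exists). Concretely $p_\infty^{-1}(\bot) = \bigcap_n p_n^{-1}(\bot)$, which is an intersection of closed linear subspaces of $\HS^\prime$, hence again a closed linear subspace. First I would check that $p_\infty$ defined this way satisfies conditions (1) and (2) of Definition~\ref{HLdef}: if $p_\infty(x) = \bot$ and $p_\infty(y) = \bot$ then $p_n(x) = \bot$ and $p_n(y) = \bot$ for all $n$, so $p_n(x+y) = \bot$ for all $n$ by condition~(1) for each $p_n$, hence $p_\infty(x+y) = \bot$; condition~(2) is analogous. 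Thus $p_\infty \in \HL$. Since by Lemma~\ref{specordlm} the order on $\HL$ agrees with that on $\Sigma^{\HS^\prime}$, and $p_\infty$ is the least upper bound of the chain in $\Sigma^{\HS^\prime}$ and lies in $\HL$, it is a fortiori the least upper bound in $\HL$. This shows $\HL$ is a topological predomain, and together with the existence of $\bot_{\HL}$ it is a topological domain.

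The only mild subtlety — not really an obstacle — is making sure the supremum computed pointwise (as $\bigcap_n p_n^{-1}(\bot)$) genuinely is the categorical/order-theoretic least upper bound in the $\QCB_0$ sense rather than merely a pointwise one; but this is exactly what Proposition~\ref{powerSigSc} and Lemma~\ref{specordlm} give us, since the specialization order determines least upper bounds of $\omega$-chains and these are computed in $\Sigma^{\HS^\prime}$ as Scott-joins in $\mathcal{C}(\HS^\prime)$, i.e.\ intersections. So the proof reduces to the two short verifications above: that $\HS^\prime$ qua closed linear subspace is the least element, and that the class of closed linear subspaces of $\HS^\prime$ is closed under countable intersections — the latter being immediate since an arbitrary intersection of closed linear subspaces is a closed linear subspace.
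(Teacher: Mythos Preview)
Your proof is correct and somewhat more explicit than the paper's. For the predomain property the paper argues in one line that $\HL$ arises as an equalizer of maps between topological predomains (conditions~(1) and~(2) of Definition~\ref{HLdef} can each be cast as an equalizer diagram in $\Sigma^{\HS^\prime\times\HS^\prime}$ resp.\ $\Sigma^{\CC\times\HS^\prime}$), and then relies on the closure of $\TP$ under limits in $\QCB_0$, which follows from $\TP$ being a full reflective subcategory. You instead compute the supremum of an ascending $\omega$-chain pointwise in $\Sigma^{\HS^\prime}$ and verify directly that conditions~(1) and~(2) are preserved under countable intersection of the $\bot$-preimages; this is more elementary and self-contained, at the cost of a few more lines. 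Both routes are valid.

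On the least element, note that the paper identifies $\bot_\HL$ as the map with $\bot_\HL(x) = \bot$ iff $x = 0$, i.e.\ the element of $\HL$ corresponding to the zero subspace $\{0\}$. Since the specialization order on $\HL$ is reverse inclusion of closed linear subspaces, this is actually the \emph{greatest} element of $\HL$; your identification of the least element as the constant-$\bot$ map (corresponding to the whole space $\HS^\prime$) is the correct one.
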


\begin{proof}
That $\HL$ is a topological predomain is immediate from the fact that 
it appears as an equalizer of maps between topological predomains 
corresponding to conditions (1) and (2) of Def.~\ref{HLdef}. 

Since the specialization order on $\HL$ is inherited from $\Sigma^{\HS^\prime}$ 
the least element of $\HL$ is given by the map $\bot_\HL : \HS^\prime \to \Sigma$ 
with $\bot_\HL(x) = \bot$ iff $x = 0$.
\end{proof}

Let $\Prj$ be the $\neg\neg$-subobject of $\Bfrak(\HS)$ consisting of projectors. 
Thus $\Prj$ is an object of $\AdmRep$ inheriting convergence from 
$\Bfrak(\HS) \subseteq_{\neg\neg}\HS^\HS$. Classically, every 
$p \in \HL \subseteq \Sigma^{\HS^\prime}$ can be identified with the corresponding 
projector $P_p \in \Bfrak(\HS)$ tacitly using that $\HS^\prime$ with the weak$^*$ topology
is homeomorphic to $\HS$ with the weak topology.
The bijective map from $\Prj$ to $\HL$ sending $P$ to $\{ x \in \HS \mid Px \neq x\}$ 
is continuous since definable in the internal language of $\RT(\Kc)$ but its inverse is not 
since it does not respect the specialization order. More generally, it holds that

\begin{prop}\label{diffHL}
The $\QCB_0$ spaces $\HL$ and $\Prj$ are not isomorphic.
\end{prop}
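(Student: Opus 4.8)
The plan is to show that $\HL$ and $\Prj$ differ already at the level of their specialization orders, and then invoke the fact that a homeomorphism — indeed any isomorphism in $\QCB_0$, being a continuous bijection with continuous inverse — must preserve the specialization order in both directions. First I would recall that, by Lemma~\ref{specordlm} and Proposition~\ref{HLdom}, the specialization order on $\HL$ is the one inherited from $\Sigma^{\HS^\prime}$, which under the identification $p \leftrightarrow p^{-1}(\bot)$ corresponds to \emph{reverse} inclusion $\supseteq$ of closed linear subspaces of $\HS^\prime$; equivalently, identifying $\HL$ with closed subspaces of $\HS$, the order $\sqsubseteq_\HL$ is $\supseteq$. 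On the other side, $\Prj$ inherits its convergence and specialization order from $\Bfrak(\HS) \subseteq_{\neg\neg} \HS^\HS$; I would argue that the specialization order on $\Prj$ is \emph{trivial} (discrete), i.e.\ $P \sqsubseteq_\Prj Q$ implies $P = Q$, because $\HS$ with its admissible representation carries the norm topology (it is a complete separable metric space, hence $T_1$, hence has trivial specialization order), so $\HS^\HS$ and its subobject $\Prj$ are $T_1$ as well.

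The key contrast is then immediate: in $\HL$ the specialization order is nontrivial — for instance $\bot_\HL \sqsubseteq p$ strictly for every nonzero $p$, or more concretely $\langle e_0 \rangle^{\perp\perp} = \langle e_0 \rangle \supsetneq 0$ gives $0 \sqsubsetneq \langle e_0\rangle$ — whereas in $\Prj$ no two distinct elements are comparable. Hence there can be no order-isomorphism between the underlying posets, and a fortiori no homeomorphism (which would in particular be a bijection preserving and reflecting specialization order). I would spell this out: if $h : \HL \to \Prj$ were an isomorphism in $\QCB_0$, then both $h$ and $h^{-1}$ are continuous, so $h$ would have to satisfy $p \sqsubseteq_\HL q \iff h(p) \sqsubseteq_\Prj h(q)$; picking $p = 0$ and $q = \langle e_0 \rangle$ with $p \sqsubsetneq q$ forces $h(0) \sqsubsetneq h(q)$ in $\Prj$, contradicting triviality of $\sqsubseteq_\Prj$.

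The step I expect to need the most care is justifying that $\Prj$ has trivial specialization order, i.e.\ pinning down that the relevant topology on $\Bfrak(\HS) \subseteq_{\neg\neg} \HS^\HS$ really is (the sequentialization of) a $T_1$ topology. By Theorem~\ref{convLO} the $\QCB_0$ topology on $\Bfrak(\HS)$ is the sequentialization of the strong operator topology, and the strong operator topology on $\Bfrak(\HS)$ is Hausdorff (it separates operators by the Hausdorff norm topology on the target $\HS$), hence $T_1$; sequentialization preserves the $T_1$ property, so distinct projectors are not specialization-comparable. Alternatively and more cheaply, one observes that the map $\Prj \to \HS : P \mapsto Pe_0$ is continuous and $\HS$ is $T_1$, and likewise $P \mapsto Pe_n$ for each $n$; since a projector is determined by its values on the orthonormal basis, these maps jointly reflect equality, so $P \sqsubseteq_\Prj Q$ implies $Pe_n \sqsubseteq_\HS Qe_n$, i.e.\ $Pe_n = Qe_n$, for all $n$, whence $P = Q$. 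Either way the argument closes, and since the continuous bijection $\Prj \to \HL$ of the preceding discussion already exhibits the map going the ``easy'' direction, the content of the proposition is precisely that its inverse cannot be continuous.
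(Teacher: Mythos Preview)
Your proposal is correct and follows essentially the same approach as the paper: both argue that an isomorphism in $\QCB_0$ would induce an order-isomorphism of specialization orders, and then observe that $\sqsubseteq_\Prj$ is flat while $\sqsubseteq_\HL$ is not (the paper appeals to the existence of $\bot_\HL$ via Lemma~\ref{specordlm}). Your write-up is in fact more thorough than the paper's, which simply asserts that the specialization order on $\Prj$ is flat without justification; your two arguments for this (Hausdorffness of the strong operator topology via Theorem~\ref{convLO}, or the jointly $T_1$-reflecting family $P \mapsto Pe_n$) fill that gap nicely.
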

\begin{proof}
If $\HL$ and $\Prj$ were isomorphic then their specialization orders would be
isomorphic, too, which, however, is not the case since on $\Prj$ it is flat 
whereas $\HL$ has a least element w.r.t.\ its information ordering as follows 
from Lemma~\ref{specordlm}.
\end{proof}

Since $\RT(\Kc_2)$ is a model of Brouwerian intuitionistic mathematics (see 
\cite{KleeneVesley1965:TheFoundationsOfIntuitionisticMathematics,Oosten2008:Realizability})
it follows that one cannot prove constructively that $\HL$ and $\Prj$ are in 
1-1-correspondence.
Moreover, Prop.~\ref{diffHL} seems to show that von Neumann's Spectral Theorem does not 
hold in $\RT(\Kc_2)$ since from a classical point of view it entails a 1-1-correspondence
between closed linear subspaces of $\HS$ and projectors on $\HS$. But, as we will see 
later in subsection~\ref{BObsrep} this is not the case for an appropriate formulation 
of the Spectral Theorem since $\HL$ corresponds to spectral measures/valuations 
on $\Sigma$ rather than on the discrete space $2 = \{0,1\}$.

In view of Prop.~\ref{diffHL} the subsequent Th.~\ref{HLtowPr} might seem surprising. 
But in any case it will be crucial for proving our variant of the Spectral Theorem.
For formulating Th.~\ref{HLtowPr} we have to introduce a few conventions.

Let $S(\HS)$ be \emph{projective Hilbert space}, i.e.\ unit vectors of $\HS$ modulo
the equivalence relation $x \sim y \equiv \forall p \in \HL.\, p(x) = p(y)$. 
Notice that $x \sim y$ iff $x = \lambda y$ for some $\lambda \in \CC$ with 
$|\lambda| = 1$. 

Recall that $\Ic$ is the unit interval $[0,1]$ with the \emph{upper} topology
whose open sets are those downward closed subsets of $[0,1]$ which are open in the
usual Euclidean topology on $[0,1]$. Notice that $x \sqsubseteq_\Ic y$ iff $x \geq y$.
Moreover, one may characterize $\Ic$ as the Scott topology on the continuous lattice 
(in the sense of \cite{Gierz2003:ContinuousLattices}) $[0,1]$ ordered by $\geq$. For
this reason $\Ic$ may be called the \emph{upper interval}. We will write $\Ib_\leq$ for
the dual but isomorphic notion, namely $\Ib$ endowed with the \emph{lower} topology, 
i.e.\ the Scott topology induced by $\leq$ on $\Ib$.
 
\begin{thm}\label{HLtowPr}
The map $s : \HL \to \Ic^{S(\HS)} : P \mapsto x \mapsto \scpr{x}{Px}$ 
is a morphism in $\AdmRep$, i.e.\ continuous w.r.t.\ the induced topologies.
Moreover, the map $s$ is a $\neg\neg$-mono, i.e.\ $s$ is an iso when 
corestricted to its $\neg\neg$-image $\HL_p$. Moreover, both $s$ and
$s^{-1} : \HL_p \to \HL$ have effective realizers and thus live
in $\AdmRep_{\mathsf{eff}}$.
\end{thm}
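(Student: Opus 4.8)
The plan is to establish the four assertions in turn: continuity of $s$, that $s$ is a $\neg\neg$-mono, and effectivity of $s$ and $s^{-1}$. For continuity of $s : \HL \to \Ic^{S(\HS)}$, the key observation is that $\HL$ carries the sequentialization of the subspace topology from $\Sigma^{\HS^\prime}$, so it suffices to check sequential continuity: given $P_n \to P_\infty$ in $\HL$, I must show $s(P_n) \to s(P_\infty)$ in $\Ic^{S(\HS)}$, i.e.\ for every convergent sequence $x_n \to x_\infty$ in $S(\HS)$ the sequence $\langle x_n | P_n x_n\rangle$ converges to $\langle x_\infty | P_\infty x_\infty\rangle$ in the \emph{upper} interval $\Ic$. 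Convergence in $\Ic$ means exactly $\limsup_n \langle x_n | P_n x_n\rangle \le \langle x_\infty | P_\infty x_\infty\rangle$. Unwinding the characterization of convergence in $\Sigma^{\HS^\prime}$ recalled just before Definition~\ref{HLdef}: $P_n \to P_\infty$ says that whenever $y_n \to y_\infty$ in $\HS^\prime$ and $y_\infty \in P_\infty$ (i.e.\ $p_\infty(y_\infty) = \bot$), then eventually $y_k \in P_k$. The vectors $x_n$ live in $\HS$, which maps continuously into $\HS^\prime$ via $i$, so one has to transport the convergence through $i$ and then relate $\langle x | Px\rangle$ — the squared norm of the projection of $x$ onto the closed subspace $P$ — to membership in $P$ and $P^\perp$. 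Concretely, if $\langle x_\infty | P_\infty x_\infty\rangle < c$, then the $P_\infty^\perp$-component of $x_\infty$ has norm bounded below, one pulls this open condition back along the convergence data, and uses that $P_n$ is a projector to conclude $\langle x_n | P_n x_n\rangle < c$ eventually. The orthocomplement $P^\perp$ of a closed subspace is again in $\HL$ and the passage $P \mapsto P^\perp$ is definable, so the needed "approximation from the $P^\perp$ side" is available internally.

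For the $\neg\neg$-mono claim, I would argue that $s$ is injective on the nose and that the defining predicate of the image $\HL_p \subseteq \Ic^{S(\HS)}$ is $\neg\neg$-closed, so that $\HL_p$ is a $\neg\neg$-subobject and $s$ corestricts to an iso onto it in $\Mod_\Sigma(\Kc_2)$. Injectivity: if $\langle x | Px\rangle = \langle x | Qx\rangle$ for all unit $x$, then by polarization the self-adjoint projectors $P$ and $Q$ agree as operators, hence $P = Q$ as closed subspaces; this is a classical fact and hence holds in $\RT(\Kc_2)$ (the statement $P = Q$ is an equality of $\neg\neg$-closed subobjects, so classical reasoning is legitimate here). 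To show the corestriction is an iso I exhibit $s^{-1}$ as a morphism: given $f = s(P) \in \Ic^{S(\HS)}$, recover $P$ as $\{x \in \HS^\prime \mid$ the $\HS$-preimage $x'$ satisfies $f(x'/\|x'\|) \cdot \|x'\|^2 < \|x'\|^2 \text{ fails}\}$ — more cleanly, $x \in P^\perp$ iff $f([x]) = 0$, and $x \in P$ iff $f([x]) = \|x\|^2$; the point of using $\Ic$ is that "$f([x]) < c$" is an open condition, so $\{x : f([x]) < \|x\|^2\} = P^\perp \setminus\{0\}$-complement-type sets are opens and $p^{-1}(\bot) = P$ is recovered with the right (Scott) topology. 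The main obstacle is precisely this bookkeeping: verifying that the reconstruction of $p \in \Sigma^{\HS^\prime}$ from $f$ yields a \emph{continuous} map into $\Sigma$ with the opposite-inclusion specialization order matching, i.e.\ that the upper topology on $\Ic$ is exactly what is needed for $p^{-1}(\bot)$ to come out Scott-closed in the $\supseteq$ order, and that the two maps compose to identities.

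For effectivity, once $s$ and $s^{-1}$ have been described by the above explicit formulas, I would note that every construction used — the inclusion $i : \HS \to \HS^\prime$, evaluation, the scalar product, arithmetic on $\Ic$, the comparison $f([x]) < c$ producing an element of $\Sigma$, and the formation of $p^{-1}(\bot)$ — is realized by a Turing machine on infinite tapes, because each is either a basic computable operation on the standard representations of $\RR$, $\CC$, $\HS$, $\Sigma$ and $\Ic$, or is definable in the internal language of $\KV$ from such operations (the paper has already placed $\HS$, $\Sigma$, $\Ic$, $\HS^\prime$ in $\AdmRep$ with computable structure). Hence both $s$ and $s^{-1}$ lift from $\RT(\Kc_2)$-morphisms to $\KV$-morphisms, i.e.\ they live in $\AdmRep_{\mathsf{eff}}$. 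The only subtlety worth flagging is that the Banach--Steinhaus-type boundedness used in Theorem~\ref{convLO} is not needed here — projectors are automatically norm $\le 1$ — so the continuity and effectivity arguments for $s$ avoid any appeal to completeness-theoretic nonconstructive principles beyond what is already available.
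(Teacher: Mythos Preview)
Your proposal has two genuine gaps. First, the continuity argument for $s$ invokes the passage $P \mapsto P^\perp$ (``approximation from the $P^\perp$ side''), but the paper observes just after Proposition~\ref{veeHLnotcont} that orthocomplementation on $\HL$ is \emph{not} continuously realizable (it is order-reversing on a non-discrete specialization order), so you cannot use it inside a continuity proof. More fundamentally, your sequential argument never identifies why the weak$^*$ topology on $\HS'$ is essential: Appendix~\ref{MScounter} exhibits a counterexample showing that the analogous map out of the ``$\Sigma^\HS$ version'' of $\HL$ is discontinuous, so any argument that does not exploit weak$^*$-compactness of $B(\HS)$ (Banach--Alaoglu) cannot succeed. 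The paper's proof proceeds via the geometric equivalence $d(c,L) > r \Leftrightarrow B(\HS)\cap L \subseteq H_{c,r}$ (Theorem~\ref{ENthm}), which turns the condition $\scpr{x}{Px} < c$ into ``compact set contained in weak$^*$-open set'', a $\Sigma$-proposition; this compactness step is the missing idea.

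Second, your construction of $s^{-1}$ requires passing from $y \in \HS'$ to the normalized Riesz representative $[y/\|y\|] \in S(\HS)$ in order to evaluate $f$. But $\HS'$ carries the sequentialized weak$^*$ topology while $S(\HS)$ is a quotient of the norm-unit sphere, and the norm on $\HS'$ is only lower semicontinuous in the weak$^*$ topology, so this normalization is not a morphism in $\AdmRep$. The paper avoids this entirely: it passes to the isomorphic copy $\HL_q$ (via $x \mapsto \sqrt{1-x^2}$) consisting of distance functions $c \mapsto d(c,L)$, exhibits explicit admissible representations of $\HL$ and $\HL_q$ by enumerations of rational pairs $(c,r)$ witnessing $B(\HS)\cap L \subseteq H_{c,r}$ resp.\ $r < d(c,L)$, and then observes that by Theorem~\ref{ENthm} these are literally the \emph{same} codes, so both $s$ and $s^{-1}$ are realized by the identity on Baire space. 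Effectivity is then immediate.
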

\begin{proof} 
We give a constructive proof which is partly inspired by the proof of Th.~4.15 
in \cite{NeumannE2015}. The basic idea is to consider the isomorphic copy $\HL_q$ 
of $\HL_p$ induced by the isomorphism between $\Ic = \Ib_\geq$  and $\Ib_\leq$ 
sending $x$ to $\sqrt{1-x^2}$, i.e.\ $\HL_q$ consists of all functions 
from $S(\HS)$ to $\Ib_\leq$ which assign to $x \in S(\HS)$ the distance $d(x,L)$
to a closed linear subspace $L$ of $\HS^\prime$. We will use some notation and facts
as formulated and proven in Appendix~\ref{ENfacts}.

Let $L \in \HL$ be given as a $\Sigma$-predicate on $\HS^\prime$. Then 
$B(\HS) \cap L$ is also given by a $\Sigma$-predicate on $\HS^\prime$.
For $c \in S(\HS)$ and $0 \leq r < 1$ by Theorem~\ref{ENthm} we have 
$r < d(c,L)$ iff $B(\HS) \cap L \subseteq H_{c,r}$ which is in $\Sigma$
since $B(\HS) \cap L$ is a compact subset of $\HS^\prime$ and $H_{c,r}$
is open. Thus we have established the existence of $s$ as a map from $\HL$
to $\Ic_\leq^{S(\HS)}$ sending $L \in \HL$ to $s(L) = \lambda r.\, r < d(x,L)$.
Obviously $\HL_q$ is the $\neg\neg$-image of this $s : \HL \to \Ic_\leq^{S(\HS)}$.

For showing that the inverse $s^{-1}$ of $s : \HL \to \Ic_\leq^{S(\HS)}$ is 
computable we first discuss appropriate admissible representations of $\HL$
and $\HL_q$.\footnote{The coding of $\HL$ is a restriction of a coding of
closed convex subsets of $B(\HL^\prime)$ as can be found in \cite{NeumannE2015}.} 
First notice that $[0,1[ \cap \Rat$ and algebraic numbers can be coded effectively 
by natural numbers. We call elements of $S(\HS)$ ``rational'' iff all items are 
algebraic complex numbers and almost all of them vanish. Thus rational elements 
of $S(\HS)$ can be coded effectively by natural numbers. Elements $d \in \HL_q$ 
are coded by sequences which enumerate all (codes of) pairs $(c,r)$ of rational 
elements $c \in S(\HS)$ and $r \in [0,1[$ such that $r < d(c)$. Elements $L \in \HL$ 
are coded by sequences which enumerate all (codes of) pairs $(c,r)$ of rational 
elements $c \in S(\HS)$ and $r \in [0,1[$ such that $B(\HS) \cap L \subseteq H_{c,r}$.\footnote{We can reconstruct $L$ from $B(\HS) \cap L$ since any code of an element $x$ of 
$\HS^\prime$ different from $0$ can be transformed effectively into a code of an
element $x' \in B(\HS^\prime)$ which is different from $0$ and a multiple of $x$
and thus $x \in L$ iff $x' \in B(\HS) \cap L$.}
By Theorem~\ref{ENthm} an element of Baire space codes $L \in \HL$ iff it codes
$s(L) \in \HL_q$. Thus both $s^{-1}$ and $s$ are coded by any code for the identity
map on Baire space. Since there are effective codes for the latter both $s$ and
$s^{-1}$ are computable.
\end{proof}

There arises the question to which extent the operations on $\HL$ usually considered
in the ``logico-algebraic'' approach do live within $\AdmRep$. Of course, the 
antitonic operation $(-)^\perp : \HL \to \HL$ of orthocomplementation is not 
continuously realizable since otherwise it would be monotonic which is impossible 
since the specialization order on $\HL$ is not discrete. The operation 
$\wedge : \HL \times \HL \to \HL : (P,Q) \mapsto P \cap Q$ is realizable since
the binary supremum operation on $\Sigma$ is effectively realizable.
The following proposition tells us that the binary supremum operation on $\HL$
is not continuously realizable.

\begin{prop}\label{veeHLnotcont}
The function $\vee : \HL \times \HL \to \HL$ where $P \vee Q$ is the least
closed subspace of $\HS$ containing $P$ and $Q$ as subsets is not continuous
and thus not a morphism in $\AdmRep$.
\end{prop}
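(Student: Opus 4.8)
The plan is to exhibit a concrete sequence of pairs $(P_n, Q_n)$ in $\HL \times \HL$ that converges to some $(P_\infty, Q_\infty)$ in the $\QCB_0$ topology, but for which $(P_n \vee Q_n)$ fails to converge to $P_\infty \vee Q_\infty$ in $\HL$. Recall from the discussion after Proposition~\ref{powerSigSc} and Lemma~\ref{specordlm} that the topology on $\HL$ is the sequentialization of the subspace topology of $\Sigma^{\HS'}$, so convergence of $(p_n)$ to $p_\infty$ means: whenever $(x_n)$ converges to $x_\infty$ in $\HS'$ (the sequentialized weak$^*$ topology) and $p_\infty(x_\infty) = \top$, i.e.\ $x_\infty \notin P_\infty$, then $x_n \notin P_n$ for all sufficiently large $n$. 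Equivalently, writing elements of $\HL$ as closed subspaces, $P_n \to P_\infty$ iff for every weak$^*$-convergent sequence $x_n \to x_\infty$ with $x_\infty \notin P_\infty$, eventually $x_n \notin P_n$.

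First I would take $P_n = P_\infty$ constant, equal to a fixed one-dimensional subspace $\langle a \rangle$ for some unit vector $a$, so that the $P$-coordinate converges trivially. The work is then to choose one-dimensional subspaces $Q_n = \langle b_n \rangle$ with $b_n \to b_\infty$ appropriately so that $Q_n \to Q_\infty = \langle b_\infty \rangle$ in $\HL$ but $P \vee Q_n \not\to P \vee Q_\infty$. The natural choice, in the spirit of Proposition~\ref{diffHL} and the failure of continuity of $(-)^\perp$, is to exploit that a two-dimensional subspace $\langle a, b_n\rangle$ can, in the limit, collapse to the one-dimensional $\langle a \rangle$ when $b_n \to a$. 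So I would pick an orthonormal pair $a = e_0$, $e_1$, set $b_n = (\cos\theta_n) e_0 + (\sin\theta_n) e_1$ with $\theta_n \to 0$, so $b_n \to e_0 = a$. Then $P \vee Q_n = \langle e_0, e_1\rangle$ for every $n$ (since $\theta_n \neq 0$), whereas $P \vee Q_\infty = \langle e_0 \rangle = P$. Now $\langle e_0, e_1\rangle$ does not converge to $\langle e_0\rangle$ in $\HL$: take $x_n = e_1$ for all $n$, which converges weak$^*$ to $x_\infty = e_1 \notin \langle e_0\rangle$, yet $e_1 \in \langle e_0, e_1\rangle = P \vee Q_n$ for all $n$, contradicting the convergence criterion above. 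So the $\vee$-image fails to converge.

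The remaining obligation — and this is the step I expect to be the main obstacle — is to verify that $Q_n = \langle b_n\rangle \to Q_\infty = \langle b_\infty \rangle = \langle e_0\rangle$ actually holds in $\HL$, i.e.\ in the sequentialized topology. Using the convergence criterion: suppose $x_n \to x_\infty$ weak$^*$ with $x_\infty \notin \langle e_0\rangle$; I must show $x_n \notin \langle b_n\rangle$ eventually. Equivalently, I must rule out a weak$^*$-convergent sequence $x_n \in \langle b_n\rangle$ whose limit lies outside $\langle e_0\rangle$. Here one uses that the coordinates $\langle e_n, x_k\rangle$ converge and that membership $x_k \in \langle b_k\rangle$ forces $x_k = \mu_k b_k$ for scalars $\mu_k$; then weak$^*$ convergence pins down $\mu_k \langle e_0, b_k\rangle = \mu_k\cos\theta_k \to \langle e_0, x_\infty\rangle$ and $\mu_k \sin\theta_k \to \langle e_1, x_\infty\rangle$, while all higher coordinates of $x_k$ vanish, forcing the limit into $\langle e_0, e_1\rangle$; if $\langle e_0, x_\infty\rangle \neq 0$ then $\mu_k$ is bounded and $\mu_k \sin\theta_k \to 0$, so $x_\infty \in \langle e_0\rangle$ after all — one must also handle the degenerate case $\langle e_0, x_\infty\rangle = 0$, $x_\infty = c e_1$ with $c \neq 0$, where boundedness of $\mu_k$ can fail; this is where a slightly more careful choice of the rate of $\theta_n \to 0$ relative to possible blow-up of $\mu_k$ (or an appeal to compactness of the weak$^*$ unit ball together with a normalization argument, replacing $x_k$ by $x_k / \|x_k\|$) closes the gap. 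Once this convergence $Q_n \to Q_\infty$ is established, the pair $((P_n,Q_n))$ converges in $\HL \times \HL$ while $(P_n \vee Q_n)$ does not converge to the $\vee$ of the limits, so $\vee$ is not sequentially continuous, hence not continuous, hence not a morphism in $\AdmRep$.
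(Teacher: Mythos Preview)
Your approach is correct but takes a different route from the paper. The paper argues purely order-theoretically: a continuous map between topological predomains is Scott continuous for the specialization order $\sqsubseteq = \supseteq$, so continuity of $\vee$ would give $P \vee \bigcap_n R_n = \bigcap_n (P \vee R_n)$ for $\subseteq$-decreasing $(R_n)$; applying the anti-automorphism $(-)^\perp$ turns this into the infinite distributive law $P \wedge \bigvee_n Q_n = \bigvee_n (P \wedge Q_n)$ for $\subseteq$-increasing $(Q_n)$, which fails with $Q_n = \langle e_0,\dots,e_n\rangle$ and $P$ any line meeting each $Q_n$ trivially. Your argument instead produces an explicit convergent sequence in $\HL\times\HL$ whose $\vee$-image fails to converge. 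The paper's route is shorter and locates the obstruction squarely in the lattice-theoretic non-distributivity of $\HL$, avoiding any analysis of weak$^*$ convergence; your route is more hands-on and has the merit of making visible what convergence in $\HL\subseteq\Sigma^{\HS'}$ actually looks like.

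One remark on your verification that $Q_n\to Q_\infty$: the ``degenerate case'' you flag is not a genuine obstacle. Since $\cos\theta_{n}\to 1$, convergence of $\mu_k\cos\theta_{n_k}$ to $\scpr{e_0}{x_\infty}$ already forces $\mu_k\to\scpr{e_0}{x_\infty}$, whence $\mu_k\sin\theta_{n_k}\to 0$ and $x_\infty\in\langle e_0\rangle$ in \emph{every} case; no special rate for $\theta_n$ and no normalization trick is needed. (Alternatively, Banach--Steinhaus---already used in the paper's Theorem~\ref{convLO}---gives $\sup_k|\mu_k|<\infty$ for any weak$^*$-convergent sequence, with the same conclusion.)
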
 

\begin{proof}
Suppose $\vee : \HL \times \HL \to \HL$ is continuous. Then it preserves
suprema of ascending $\omega$-chains in each argument. Since $(-)^\perp$
is an anti-automorphism this is equivalent to $P \wedge \bigvee Q_n =
\bigvee P \wedge Q_n$ for all sequence $(Q_n)$ in $\HL$ with 
$Q_n \subseteq Q_{n+1}$.

The following counterexample, however, shows that this is not generally 
the case. Let $Q_n$ be the closed linear subspace of $\HS$ spanned by 
$e_0,\dots,e_n$ and $P$ a $1$-dimensional subspace of $\HS$ with 
$P \cap Q_n = 0$ for all $n$. Since $\bigvee Q_n = \HS$ we have 
$P = P \wedge \bigvee Q_n$ although $\bigvee P \wedge Q_n = 0$.
\end{proof}

Moroever, biorthogonalisation is not continuous as a map from
$\closed(\HS^\prime) \to \HL$.
 
\begin{prop}\label{biorthnotcont}
The biorthogonalization map $(-)^{\perp\perp} : \closed(\HS^\prime) \to \HL$ sending
$C \in \closed(\HS^\prime)$ to $C^{\perp\perp}$ is not continuous and thus not 
a morphism in $\AdmRep$.
\end{prop}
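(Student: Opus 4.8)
The plan is to exploit the fact from Section~\ref{topdom} that every continuous map between topological predomains preserves suprema of ascending $\omega$-chains. Both $\closed(\HS^\prime) \cong \Sigma^{\HS^\prime}$ and, by Proposition~\ref{HLdom}, $\HL$ are topological domains, so a continuous $(-)^{\perp\perp}$ would preserve such suprema. Now an ascending $\omega$-chain in $\closed(\HS^\prime)$ is a descending sequence $C_0 \supseteq C_1 \supseteq \cdots$ of closed subsets of $\HS^\prime$ with supremum $\bigcap_n C_n$, and biorthogonalization carries it to the ascending chain $C_0^{\perp\perp} \supseteq C_1^{\perp\perp} \supseteq \cdots$ in $\HL$, whose supremum, by Lemma~\ref{specordlm} together with Proposition~\ref{powerSigSc}, is $\bigcap_n C_n^{\perp\perp}$. (Note that $(-)^{\perp\perp}$ is monotone for the specialization orders, so it cannot be ruled out by an order argument like that for Proposition~\ref{diffHL}.) Since one always has $\bigl(\bigcap_n C_n\bigr)^{\perp\perp} \subseteq \bigcap_n C_n^{\perp\perp}$, it suffices to exhibit a descending sequence $(C_n)$ of closed subsets of $\HS^\prime$ for which this inclusion is proper.

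For the counterexample I identify $\HS^\prime$ with $\HS$ --- legitimate here, since closed subspaces and convergent sequences are the same for the weak$^*$ and the weak topologies --- fix the canonical orthonormal basis $(e_k)_{k \ge 0}$, and put $v_k := e_0 + \sqrt{k}\,e_k$ for $k \ge 1$. The first step is to verify that no sequence of distinct $v_k$'s converges weakly: such a sequence $(v_{k_j})$ has $k_j \to \infty$, and a weak limit $z$ would have to satisfy $\scpr{e_0}{z} = 1$ and $\scpr{e_m}{z} = 0$ for all $m \ge 1$, forcing $z = e_0$; but passing to a sub-subsequence with $k_{j_i} \ge 2^i$ and testing against $x := \sum_i k_{j_i}^{-1/2}\,e_{k_{j_i}} \in \HS$ gives $\scpr{x}{v_{k_{j_i}}} = 1 \not\to 0 = \scpr{x}{e_0}$. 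Hence each $C_n := \{v_k \mid k \ge n\}$ is sequentially closed, and so an element of $\closed(\HS^\prime)$; the $C_n$ form a descending sequence, and since $v_k \in C_n$ only when $n \le k$ we get $\bigcap_n C_n = \emptyset$, whence $\bigl(\bigcap_n C_n\bigr)^{\perp\perp} = \{0\}$.

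The heart of the argument is the claim that $e_0 \in C_n^{\perp\perp} = \overline{\mathrm{span}}\{v_k \mid k \ge n\}$ for every $n$, which I would prove by an explicit approximation. Writing $H_{n,N} := \sum_{k=n}^{n+N} 1/k$, one computes
\[
\frac{1}{H_{n,N}}\sum_{k=n}^{n+N}\frac{1}{k}\,v_k \;=\; e_0 + \frac{1}{H_{n,N}}\sum_{k=n}^{n+N}\frac{1}{\sqrt{k}}\,e_k ,
\]
and the error term on the right has squared norm $H_{n,N}^{-2}\sum_{k=n}^{n+N} 1/k = 1/H_{n,N}$, which tends to $0$ as $N \to \infty$ because the harmonic series diverges. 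Thus $e_0 \in \bigcap_n C_n^{\perp\perp}$, so $\{0\} = \bigl(\bigcap_n C_n\bigr)^{\perp\perp} \subsetneq \bigcap_n C_n^{\perp\perp}$, contradicting continuity of $(-)^{\perp\perp}$.

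I expect the main obstacle to be the bookkeeping around the topology on $\HS^\prime$: one must check that the $C_n$ really are closed subsets of $\HS^\prime$ in its (sequentialized weak$^*$) topology, which is precisely the non-existence of a weakly convergent sequence of distinct $v_k$'s established in the first step, and one must be careful about the two specialization orders so that the suprema of the relevant $\omega$-chains in $\closed(\HS^\prime)$ and in $\HL$ are indeed $\bigcap_n C_n$ and $\bigcap_n C_n^{\perp\perp}$. The remaining ingredients --- preservation of $\omega$-suprema by continuous maps, the Scott-topological description of $\Sigma^{\HS^\prime}$, and the inherited order on $\HL$ --- are all supplied by the results already quoted above.
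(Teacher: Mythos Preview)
Your argument is correct and follows the same overall strategy as the paper: reduce continuity of $(-)^{\perp\perp}$ to Scott continuity and exhibit a descending sequence $(C_n)$ of closed sets with $\bigl(\bigcap_n C_n\bigr)^{\perp\perp} \subsetneq \bigcap_n C_n^{\perp\perp}$. The difference lies entirely in the counterexample. The paper simply takes $C_n = \{mx \mid m \geq n\}$ for a single fixed unit vector $x$: then each $C_n^{\perp\perp}$ is the line $\CC x$ by inspection, while $\bigcap_n C_n$ is trivial, so its biorthogonal is $\{0\}$, and the argument is finished in two lines. Your construction with $v_k = e_0 + \sqrt{k}\,e_k$ and the harmonic-series approximation of $e_0$ is correct but considerably heavier machinery for the same conclusion; incidentally, your closedness step also has a one-line proof via uniform boundedness, since $\|v_k\|^2 = 1+k \to \infty$ forces any weakly convergent sequence in $C_n$ to involve only finitely many distinct $v_k$. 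What your example buys is that the $C_n^{\perp\perp}$ are genuinely infinite-dimensional, which is perhaps conceptually satisfying, but for the bare non-continuity statement the paper's one-dimensional example is more economical.
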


\begin{proof}
Since the topology on $\HL^\prime$ is the sequentialization of the topology 
induced by the inclusion of $\HL$ into $\closed(\HS^\prime)$ the map 
$(-)^{\perp\perp} : \closed(\HS^\prime) \to \HL$ is a continuous if and only if 
$(-)^{\perp\perp} : \closed(\HS^\prime) \to \closed(\HS^\prime)$ is Scott continuous 
which, however, is not the case as the following counterexample shows.

Let $(C_n)$ be a sequence in $\closed(\HS^\prime)$ with $C_{n+1} \subseteq C_n$ such that 
$\bigcap C_n = \{0\}$ and all $C_n^{\perp\perp}$ are the same 1-dimensional subspace 
of $\HS^\prime$. For example one may take $C_n = \{ mx \mid m \geq n \}$ where $x$ is 
some unit vector in $\HS$. Then $(\bigcap C_n)^{\perp\perp} = 0$ although 
$C_n^{\perp\perp} = \CC x$ for all $n$ and thus
$\bigcap C_n^{\perp\perp} = \CC x \neq 0 = (\bigcap C_n)^{\perp\perp}$ 
providing the desired counterexample to (Scott) continuity of biorthogonalization.
\end{proof}

\subsection{States within \texorpdfstring{$\AdmRep$}{AdmRep}}

We want to identify states as particular maps in $\AdmRep$ from the Hilbert lattice 
$\HL$ to the unit interval. Since such maps have to be Scott continuous w.r.t.\
the specialization order we must endow the unit interval with the \emph{upper} 
topology, i.e.\ consider states as maps from $\HL$ to $\Ic$. In the subsequent 
Theorem~\ref{StaAR} we will characterize states as those $s \in \Ic^\HL$ 
which validate the conditions (S1) and (S3) of Prop.~\ref{statchar}. 
 
But for this purpose we need the following result about pure states.

\begin{lem}\label{purestatelm}
For all unit vectors $x \in \HS$ the pure state 
$s_x : \HL \to \Ic : p \mapsto \scpr{x}{P_px}$ is a morphism in $\AdmRep$,
i.e.\ continuous w.r.t.\ the induced topologies.
\end{lem}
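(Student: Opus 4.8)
The plan is to verify that $s_x$ is continuous by exhibiting it as a composite of morphisms that are already known to live in $\AdmRep$. The key observation is that $s_x$ factors through the evaluation-at-$x$ map applied to the morphism $s : \HL \to \Ic^{S(\HS)}$ of Theorem~\ref{HLtowPr}. Indeed, by definition $s(P) = \lambda y \in S(\HS).\,\scpr{y}{P_p y}$, so $s_x(P) = s(P)(x) = \mathrm{ev}_{[x]} \circ s$, where $\mathrm{ev}_{[x]} : \Ic^{S(\HS)} \to \Ic$ is evaluation at the class $[x] \in S(\HS)$. Since $\AdmRep \simeq \QCB_0$ is cartesian closed, evaluation maps are continuous, and $s$ is a morphism in $\AdmRep$ by Theorem~\ref{HLtowPr}; hence $s_x$ is a morphism in $\AdmRep$ as the composite of two such maps.

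First I would spell out the (trivial) point that a fixed unit vector $x$ determines a point of $S(\HS)$ and therefore a well-defined evaluation morphism $\Ic^{S(\HS)} \to \Ic$ in the cartesian closed category $\QCB_0$; this uses only that points of an exponential can be evaluated continuously, which is part of the cartesian closed structure recalled after Proposition~\ref{powerSigSc}. Second I would note that $s_x$ as defined in the statement, $p \mapsto \scpr{x}{P_p x}$, agrees with $\mathrm{ev}_{[x]}(s(p))$: this is immediate from the description of $s$ in Theorem~\ref{HLtowPr} together with the fact that $\scpr{y}{P_p y}$ depends only on the equivalence class $[y] \in S(\HS)$, since $x \sim \lambda x$ with $|\lambda| = 1$ leaves $\scpr{x}{P_p x}$ unchanged. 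Finally, composing the two morphisms gives continuity of $s_x$, and since $s$ even has an effective realizer and $\mathrm{ev}_{[x]}$ is effectively realizable for rational $x$ (and realizable from a code of $x$ in general), one obtains that $s_x$ lives in $\AdmRep_{\mathsf{eff}}$ as well, though only continuity is claimed here.

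I do not expect a genuine obstacle: the entire content has been front-loaded into Theorem~\ref{HLtowPr}, whose proof via the compactness argument of Theorem~\ref{ENthm} is the real work. The only mild subtlety is making sure that the target topology is the \emph{upper} interval $\Ic$ rather than the Euclidean one — this is exactly why the factorization must go through $\Ic^{S(\HS)}$ and not through $\Ib^{S(\HS)}$, and it is forced by the requirement that morphisms into the unit interval from a space with non-discrete specialization order be Scott continuous, as already discussed in the paragraph preceding the lemma. An alternative, more self-contained route would be to check directly that $\{P \in \HL \mid \scpr{x}{P_p x} < q\}$ is open in $\HL$ for each rational $q$, using that $\{P \mid \scpr{x}{P_p x} < q\} = \{P \mid d(x, L_P)^2 > 1 - q\}$ and the characterization of $d(\cdot,\cdot)$ via compactness of $B(\HS) \cap L$; but the factorization argument is shorter and I would present that.
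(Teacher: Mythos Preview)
Your proposal is correct and follows exactly the paper's route: the paper's proof is the single sentence ``The claim is an immediate consequence of Th.~\ref{HLtowPr},'' and your factorization $s_x = \mathrm{ev}_{[x]} \circ s$ is precisely how that implication is to be read. You have simply made explicit the evaluation-at-$[x]$ step that the paper leaves to the reader.
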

  
\begin{proof} 
The claim is an immediate consequence of Th.~\ref{HLtowPr}.
\end{proof}

\begin{thm}\label{StaAR}
A function $s : \HL \to \Ib$ is a state iff $s \in \Ic^\HL$ and it satisfies 
the conditions
\begin{enumerate}
\item[(S1)] $s(0) = 0$ and $s(\HS) = 1$
\item[(S3)] $s(P \vee Q) = s(P) + s(Q)$ whenever $P \perp Q$
\end{enumerate}
of Prop.~\ref{statchar}.
\end{thm}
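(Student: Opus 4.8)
The plan is to prove the theorem by combining Proposition~\ref{statchar} with the topological characterisation of $\AdmRep$-morphisms into $\Ic$. The statement to establish is that a function $s : \HL \to \Ib$ is a state if and only if it lives in $\Ic^\HL$ (i.e.\ is continuous $\HL \to \Ic$) and satisfies (S1) and (S3). First I would handle the forward direction: given a state $s$, I want to show $s \in \Ic^\HL$ and that it satisfies (S1), (S3). Conditions (S1) and (S3) are immediate since they are literally (s1) and (a special case of) (s2)/(S3) from Prop.~\ref{statchar}. The substantive point is continuity $s : \HL \to \Ic$. Here I would invoke Prop.~\ref{convstat}: every state $s$ can be written as $\sum_n \lambda_n s_{b_n}$ for an orthonormal basis $(b_n)$ and $\lambda_n \in \Ib$ with $\sum_n \lambda_n = 1$. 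Each pure state $s_{b_n} : \HL \to \Ic$ is continuous by Lemma~\ref{purestatelm}. So $s$ is a sum of a uniformly convergent series of continuous maps into $\Ic$; I would argue that such a countable convex combination is again continuous. The cleanest way: for a finite partial sum $s_N = \sum_{n=0}^N \lambda_n s_{b_n}$ continuity is clear (addition and scalar multiplication are continuous on $\Ic$ in the relevant directed sense — more precisely, addition $\Ic \times \Ic \to \Ic$ fails in general, but here I would instead work with the fact that $s = \bigsqcup_N s_N$ is an increasing $\omega$-chain of continuous maps in the function space $\Ic^\HL$, since $s_N(P) \le s_{N+1}(P)$ pointwise in the $\geq$-order means $s_N(P) \sqsubseteq_\Ic s_{N+1}(P)$), and pointwise suprema of ascending $\omega$-chains of continuous maps into a topological predomain are continuous because $\Ic^\HL$ is a topological predomain (it lives in $\QCB_0$, which is cartesian closed, and $\TP$ is an exponential ideal). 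Hence $s = \sup_N s_N \in \Ic^\HL$.

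For the reverse direction, suppose $s \in \Ic^\HL$ satisfies (S1) and (S3). I would verify the three conditions (S1), (S2), (S3) of Prop.~\ref{statchar} and then invoke that proposition to conclude $s$ is a state. Conditions (S1) and (S3) are given. Condition (S2) — that $s$ preserves infima of decreasing $\omega$-chains in $\HL$ — is exactly the statement that $s$ preserves suprema of ascending $\omega$-chains with respect to the specialization orders $\sqsubseteq_\HL$ and $\sqsubseteq_\Ic$. Recall that by Lemma~\ref{specordlm} the order $\sqsubseteq_\HL$ is inherited from $\Sigma^{\HS^\prime}$, and under the identification $p \leftrightarrow p^{-1}(\bot)$ of a closed subspace with a $\Sigma$-predicate, $p \sqsubseteq q$ corresponds to $p^{-1}(\bot) \supseteq q^{-1}(\bot)$, i.e.\ reverse inclusion of subspaces; so an ascending $\omega$-chain in $\sqsubseteq_\HL$ is a decreasing $\omega$-chain $P_0 \supseteq P_1 \supseteq \cdots$ of subspaces, with supremum $\bigcap_n P_n$. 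Meanwhile $x \sqsubseteq_\Ic y$ iff $x \ge y$, so preservation of ascending suprema into $\Ic$ says $s(\bigcap_n P_n) = \inf_n s(P_n)$ — which is precisely (S2). But $s$ being continuous $\HL \to \Ic$ forces it to preserve suprema of ascending $\omega$-chains (every continuous map between topological predomains does, by the proposition stated just before Prop.~on $\TP$ being reflective), and $\HL$ is a topological (pre)domain by Prop.~\ref{HLdom}, $\Ic$ is too. Hence (S2) holds, and Prop.~\ref{statchar} yields that $s$ is a state.

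I expect the main obstacle to be the forward direction's continuity claim, specifically justifying that the countable convex combination $\sum_n \lambda_n s_{b_n}$ is continuous in $\Ic^\HL$. One has to be careful because $\Ic$ is \emph{not} a topological group and binary addition $\Ic\times\Ic \to \Ic$ is not Scott continuous (adding can decrease the $\sqsubseteq_\Ic$-order of the result when one argument jumps up while the other is fixed — actually both go up so the sum goes up, so addition \emph{is} monotone; the subtlety is sequential continuity of the Euclidean flavour hidden inside the upper topology). The safe route I would take is the one sketched: realise $s$ as the supremum $\bigsqcup_N s_N$ of the ascending $\omega$-chain of finite partial sums in the function space $\Ic^\HL$, observe each $s_N$ is continuous (finite sum of continuous maps, handled by induction using that $s \mapsto s + \lambda s_{b_N}$ stays continuous — or even more simply, note $s_N$ itself is a finite convex combination of pure states, each a state, and finite convex combinations of states are states by (s1)/(s2), hence continuous by the first paragraph applied in the finite case where there is no convergence issue), and then use that $\Ic^\HL$ is a topological predomain (exponential ideal property, Prop.~on $\TP$) so that the ascending $\omega$-chain $(s_N)$ has a least upper bound which is computed pointwise and is again a continuous map; since that pointwise supremum is exactly $s$, we conclude $s \in \Ic^\HL$. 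The one remaining wrinkle — that the least upper bound in $\Ic^\HL$ really is the pointwise one — follows because suprema in the exponential $\Ic^\HL$ of ascending $\omega$-chains are computed pointwise (evaluation maps are continuous, hence preserve such suprema, in a topological predomain setting).
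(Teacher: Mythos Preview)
Your overall architecture is exactly the paper's: for the backward direction you reduce to Prop.~\ref{statchar} by extracting (S2) from continuity, and for the forward direction you invoke Prop.~\ref{convstat} and Lemma~\ref{purestatelm} and then argue that countable convex combinations of continuous maps $\HL\to\Ic$ are continuous. The paper does the same, asserting the last step without further detail.

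However, your detailed justification of that last step contains a genuine error. You claim that the partial sums $s_N = \sum_{n\le N}\lambda_n s_{b_n}$ form an \emph{ascending} $\omega$-chain in $\Ic^{\HL}$ and that $s=\bigsqcup_N s_N$. This has the order backwards. Since $\lambda_n\ge 0$ and $s_{b_n}(P)\ge 0$ we have $s_N(P)\le s_{N+1}(P)$ in the usual real order, but the specialization order on $\Ic$ is $x\sqsubseteq_\Ic y\iff x\ge y$; hence $s_{N+1}\sqsubseteq_{\Ic^{\HL}} s_N$, the chain is \emph{descending}, and $s=\bigsqcap_N s_N$ is a pointwise \emph{infimum} in $\sqsubseteq$. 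The topological-predomain machinery you invoke (closure under suprema of ascending $\omega$-chains, $\TP$ an exponential ideal) gives you nothing about such infima, so the argument as written does not establish $s\in\Ic^{\HL}$.

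A clean fix: factor $s$ as $\HL \xrightarrow{\;P\mapsto (s_{b_n}(P))_n\;} \Ic^{\NN} \xrightarrow{\;(x_n)\mapsto \sum_n\lambda_n x_n\;} \Ic$. The first arrow is continuous because each component is (Lemma~\ref{purestatelm}). The second arrow is continuous for the upper topology: given $(x_n)$ with $\sum_n\lambda_n x_n = b < a$, pick $N$ with $\sum_{n>N}\lambda_n < (a-b)/2$ and $\epsilon<(a-b)/2$; then the basic open $\prod_{n\le N}[0,\min(1,x_n+\epsilon))\times\prod_{n>N}[0,1]$ is a neighbourhood of $(x_n)$ mapping into $[0,a)$. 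This replaces your chain argument and completes the forward direction.
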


\begin{proof}
By Prop.~\ref{statchar} a function $s : \HL \to \Ib$ is a state iff it satisfies 
conditions (S1), (S2) and (S3). Since $s \in \Ic^\HL$ preserves suprema of 
$\omega$-chains it always validates (S2) of Prop.~\ref{statchar}. Thus, we have 
shown the backward direction of our claim.

By Lemma~\ref{purestatelm} every pure state is an element of $\Ic^\HL$ and thus
validates condition (S2). Obviously, it always validates conditions (S1) and (S3). 
By Proposition~\ref{convstat} every state arises as a countable convex combination 
of pure states. Any such countable convex combination satisfies conditions (S1) and (S3) 
and is moreover continuous. 
Thus, all states are elements of $\Ic^\HL$ and validate conditions (S1) and (S3).
\end{proof}

Notice that at first sight condition (S3) cannot be expressed in the internal
language since by Prop.~\ref{veeHLnotcont} the operation $\vee$ on $\HL$ is not
a morphism of $\AdmRep$. But under the assumption that $P \perp Q$ a unit vector $x$
is in $P \vee Q$ iff $s(P)(x) + s(Q)(x) = 1$ and thus $x \not\in P \vee Q$ iff
 $s(P)(x) + s(Q)(x) < 1$ in $\Ic$ which is a proposition in $\Sigma$. 
Thus, for $P \perp Q$ we may define $P \vee Q \in \HL$ as 
$\lambda x{:}\HS.\, \|x\| > 0 \wedge 
      s(P)\bigl(\frac{x}{\|x\|}\bigr) + s(Q)\bigl(\frac{x}{\|x\|}\bigr) < 1$
in the internal language. In the light of these considerations the following 
definition makes sense in the internal language of $\RT(\Kc_2)$ and $\KV$.

\begin{defi}\label{StaDef}
Let $\Sta$ be the $\neg\neg$-closed subobject of $\Ic^\HL$ consisting of 
those $s \in \Ic^\HL$ which validate the conditions 
\begin{enumerate}
  \item[(S1)] $s(0) = 0$ and $s(\HS) = 1$
  \item[(S3)] $s(P \vee Q) = s(P) + s(Q)$ whenever $P \perp Q$.
\end{enumerate} 
\end{defi}

Notice that $\Sta$ is a topological predomain since there are no nontrivial 
ascending chains in $\Sta$. For this reason it also lacks a least element and 
thus is not a topological domain. 

\subsection{Observables within in \texorpdfstring{$\AdmRep$}{AdmRep}}\label{BObsrep}

In Proposition~\ref{obschar} we have characterized (quantum) observables as
(Scott) continuous maps $\nu : \closed(\RR) \to \Ic$ such that for every
unit vector $x \in \HS$ the map 
$\nu_x : \closed(\RR) \to \Ic : C \mapsto \scpr{x}{\nu(C)(x)}$
is a probability valuation on $\RR$. That $\nu$ is a family of probability
valuations is axiomatized by the conditions (O1) and (O3) of Proposition~\ref{obschar}.
As follows from the subsequent Lemma~\ref{obscontlem} condition (O2) is equivalent
to $\nu \in \HL^{\closed(\RR)}$.

\begin{lem}\label{obscontlem}
A map $o : \closed(\RR) \to \HL$ is continuous iff it is Scott continuous as a
map from $\closed(\RR)$ to $\closed(\HS^\prime)$.
\end{lem}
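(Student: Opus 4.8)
The plan is to reduce the claim to a statement about the topologies involved rather than about continuous realizability directly. Recall that $\HL$ carries the sequentialization of the subspace topology inherited from the inclusion $\HL \hookrightarrow \closed(\HS^\prime) \cong \Sigma^{\HS^\prime}$, where $\closed(\HS^\prime)$ carries the Scott topology induced by $\supseteq$ (equivalently the information order $\sqsubseteq$). Since $\closed(\RR)$ is a $\QCB_0$ space, maps out of it into a topological domain are continuous iff they are sequentially continuous and preserve suprema of ascending $\omega$-chains; this lets me work entirely with sequences.

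The backward direction is essentially immediate: if $o : \closed(\RR) \to \closed(\HS^\prime)$ is Scott continuous, then post-composing with the corestriction to $\HL$ (which is well defined by hypothesis, since $o$ takes values in closed \emph{linear} subspaces) is continuous, because the topology on $\HL$ is the sequentialization of the subspace topology from $\closed(\HS^\prime)$, so continuity into the subspace follows from continuity into the ambient space together with the sequentiality of $\closed(\RR)$. For the forward direction, suppose $o : \closed(\RR) \to \HL$ is continuous. Composing with the inclusion $\HL \hookrightarrow \closed(\HS^\prime)$ yields a continuous, hence sequentially continuous, map $\closed(\RR) \to \closed(\HS^\prime)$. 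To upgrade sequential continuity to full Scott continuity I use that both $\closed(\RR)$ and $\closed(\HS^\prime)$ are $\omega$-continuous (in fact $\omega$-algebraic) dcpos under $\supseteq$: for such spaces a monotone map is Scott continuous as soon as it preserves suprema of ascending $\omega$-chains, and a sequentially continuous map between them is automatically monotone (it respects the specialization order, which by Lemma~\ref{specordlm} is exactly $\sqsubseteq$ on $\HL$ and is $\supseteq$ on $\closed(\HS^\prime)$) and preserves suprema of ascending $\omega$-chains (those are suprema of convergent sequences in the Scott topology). Hence $o$ is Scott continuous as a map $\closed(\RR) \to \closed(\HS^\prime)$.

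The main obstacle is keeping the orientation of the orders straight and making sure the two notions of ``continuous'' line up: the topology on $\HL$ is only the \emph{sequentialization} of the subspace topology, not that topology itself, so I cannot simply assert that $\HL \hookrightarrow \closed(\HS^\prime)$ is a topological embedding. The point that rescues the argument is that $\closed(\RR)$ is sequential, so any continuous map \emph{out of} it into the sequentialization agrees with a continuous map into the original topology; combined with the countable-base (hence $\omega$-continuity) of $\closed(\HS^\prime)$, sequential continuity and Scott continuity coincide on the relevant maps. A secondary point to check is that an ascending $\omega$-chain in $\closed(\RR)$ under $\supseteq$, i.e.\ a decreasing chain of closed sets, has its intersection as supremum and that this is preserved on the nose — which is exactly condition (O2) read through the iso $\closed(\HS^\prime)\cong\Sigma^{\HS^\prime}$ and Proposition~\ref{powerSigSc}.
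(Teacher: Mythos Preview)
Your argument is correct and rests on the same two ingredients the paper uses: $\closed(\RR)$ is sequential, and the topology on $\HL$ is the sequentialization of the subspace topology inherited from $\closed(\HS')\cong\Sigma^{\HS'}$. The paper, however, packages both directions into a single sentence by invoking that sequentialization is a coreflection of $\mathbf{Top}$ into sequential spaces: for a sequential space $X$, continuous maps $X\to sY$ and $X\to Y$ agree, so continuous maps $\closed(\RR)\to\HL$ are exactly the continuous maps $\closed(\RR)\to\closed(\HS')$ with image in $\HL$, and by Proposition~\ref{powerSigSc} the $\QCB_0$ topology on $\closed(\HS')$ \emph{is} the Scott topology. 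Your forward direction takes an unnecessary detour: once you have a continuous map $\closed(\RR)\to\closed(\HS')$ you are already done, since continuity there \emph{means} Scott continuity by Proposition~\ref{powerSigSc}; there is nothing to ``upgrade''. This is fortunate, because your auxiliary claim that $\closed(\HS')$ is $\omega$-algebraic (equivalently that $\HS'$ with the sequentialized weak$^*$ topology is second countable) is neither proved in the paper nor obviously true, and your argument as written leans on it.
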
 
\begin{proof}
Recall that $\closed(\RR) \cong \Sigma^\RR$ in $\AdmRep$ carries the Scott topology
which, moreover, is sequential. Further recall that the topology of $\HL$ is the 
sequentialization of the topology induced by the inclusion of $\HL$ into 
$\closed(\HS^\prime) \cong \Sigma^{\HS^\prime}$. 

The claim follows from the fact that sequentialization is a coreflection.
\end{proof}

Thus, observables are the (global) elements of the object $\Obs$ in
$\AdmRep$ as introduced in the next definition.

\begin{defi}\label{ObsDef}
Let $\Obs$ be the $\neg\neg$-subobject of $\HL^{\closed(\RR)}$ consisting of those 
$\nu \in \HL^{\closed(\RR)}$ such that $\nu_x = \lambda C \in\closed(\RR). s(\nu(C))x$
is a probability valuation for all unit vectors $x \in \HS$ where $s$ is the
continuous map of Theorem~\ref{HLtowPr}. 
\end{defi}

Notice that $\Obs$ is a topological predomain since there are no nontrivial 
ascending chains in $\Obs$. For this reason it also lacks a least element and 
thus is not a topological domain.

\section{A Spectral Theorem for Bounded Observables}

Von Neumann's Spectral Theorem establishes a 1-1-correspondence between 
observables $o$ and unbounded self adjoint operators $A$ on $\HS$ which 
are defined on a dense subspace of $\HS$ on which they are continuous in the 
sense that their graph is a closed subspace of $\HS{\times}\HS$, 
see e.g.\ \cite{Prugovecki1971:QuantumMechanicsInHilbertSpace}.
The correspondence is given explicitly by 
\[ \scpr{x}{Ax} = \int \lambda \, d o_x(\lambda) \]
for $x \in \HS$ where $o_x$ is the measure on $\RR$ given by
$o_x(B) = \scpr{x}{o(B)x}$. Notice that $Ax$ is defined iff the integral
$\int \lambda \, d o_x(\lambda)$ exists.
This restricts to a 1-1-correspondence between bounded self adjoint operators 
$A$ on $\HS$ and observables $o$ which are {\bf bounded} in the sense that 
$o([-c,c]) = \HS$ for some $c \geq 0$.

This correspondence extends to observables formulated in terms of valuations 
since by \cite{CoquandSpitters2009:IntegralsandValuations} integration w.r.t.\ 
valuations can be developed within a fairly weak constructive theory
certainly validated by $\RT(\Kc_2)$ and $\KV$.

We will show now that at least for bounded self adjoint operators this process
can be inverted within $\RT(\Kc_2)$ and $\KV$. Clearly, by rescaling, it suffices to
show this for self adjoint operators bounded by $1$. 

Given a self adjoint operator $A$ bounded by $1$ we consider the commutative 
subalgebra $\mathfrak{A}(A)$ of $\Bfrak(\HS)$ generated by $A$. By the 
Gel'fand-Naimark theorem for commutative $\CC^*$-algebras, which can be proven
constructively and thus holds in $\RT(\Kc_2)$ and $\KV$, the algebra $\mathfrak{A}(A)$ 
is isomorphic to $C(\mathrm{Sp}(\mathfrak{A}(A)))$. Thus, to every unit vector
$x \in \HS$ we may associate the map 
$I_A(x) : C([-1,1]) \to \RR : f \mapsto \scpr{x}{f(A)x}$ 
which is the Daniell-Stone integral corresponding to $\nu_A(x)$ 
where $\nu_A$ is the observable corresponding to $A$ by the Spectral Theorem. 
By the Portmanteau Theorem (Theorem 2.1 of 
\cite{Billingsley99:ConvergenceOfProbabilityMeasures}) 
these Daniell-Stone integrals with the weak topology (of pointwise convergence)
correspond to probability valuations on $[-1,1]$ with the weak topology 
(of pointwise convergence), which as  shown by M.~Schr\"oder in 
\cite{Schroeder2006:AdmissibleRepresentationsOfProbabilityMeasures} coincides 
with the natural $\QCB_0$ topology on probability valuations on $[-1,1]$.
Externally, the observable $\nu_A$ corresponding to $A$ is given by
\[ \nu_A(x)(C) = \inf \{ I_A(x)(f) \mid \chi_C \leq f \in \RR^{[-1,1]}\} \] 
for $x\in\HS$ and $C \in \closed([-1,1])$.
That this correspondence is a homeomorphism follows from the
above mentioned Portmanteau Theorem and the fact that the natural topology on
Daniell-Stone integrals on $[-1,1]$ considered as $\neg\neg$-subobject of
$\Real^{\Real^{[-1,1]}}$ is the topology of pointwise convergence as follows 
from Theorem~\ref{convLO} since the natural topology on $\Real^{[-1,1]}$ coincides
with the metric one induced by the supremum norm.

It remains to show that we can get $\nu_A(x)$ from $I_A(x)$ in an effectively
continuous way. For this purpose we show that $\nu_A$ can be defined from $I_A$
in the internal language of $\RT(\Kc_2)$ and $\KV$ using an argument provided by 
D.~Le\v{s}nik from Univ.~of Ljubljana. 
First notice that we can also define $\nu_A$ as
\[ \nu_A(x)(C) = \inf \{ I_A(x)(f) \mid f \in \RR^{[-1,1]} \mbox{ and } 
                                        \forall x \in C.\, 1 < f(x) \} \]
and, accordingly, also as 
\[ \nu_A(x)(C) = \inf \{ q \in \mathbb{Q} \mid 
\exists f \in \RR^{[-1,1]}.\, I_A(x)(f) < q \wedge
                                        \forall x \in C.\, 1 < f(x) \} \]
Since closed subsets of $[-1,1]$ are compact we have 
$\forall x \in C.\, 1 < f(x) \in \Sigma$ for all $f \in \RR^{[-1,1]}$. 
The proposition $I_A(x)(f) < q$ is in $\Sigma$ anyway.
Let $D$ be some countable dense subset of $\RR^{[-1,1]}$, 
e.g.\ piecewise linear continuous functions with rational ``breakpoints''.
A $\Sigma$-subset of $\RR^{[-1,1]}$ is non-empty iff it has non-empty 
intersection with $D$. Thus, we have
\[ \nu_A(x)(C) = \inf \{ q \in \mathbb{Q} \mid 
\exists f \in D.\, I_A(x)(f) < q \wedge
                                        \forall x \in C.\, 1 < f(x) \} \]
which is an element of $\Ic$ since it arises as infimum of a $\Sigma$-subset
of $\mathbb{Q}$ (because $\Sigma$ is closed under existential quantification
over the countable set $D$). 
  
Thus, we have shown that

\begin{thm}\label{SpecThEff}
The Spectral Theorem for self adjoint operators bounded by $1$ holds 
in $\RT(\Kc_2)$  and $\KV$ and so does - by rescaling - also the 
Spectral Theorem for arbitrary bounded self adjoint operators.
\end{thm}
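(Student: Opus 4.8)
The plan is to assemble the ingredients already laid out in this section into a single effectively continuous correspondence between self adjoint operators bounded by $1$ and bounded observables (in the valuation formulation of Definition~\ref{ObsDef}), and then invoke rescaling to pass to arbitrary bounded self adjoint operators. First I would fix the ambient setting: work inside $\RT(\Kc_2)$ (and a fortiori $\KV$, since every construction below is effective), and recall that the objects involved --- $\Bfrak(\HS)$ with its $\QCB_0$ topology (Theorem~\ref{convLO}), the space of probability valuations on $[-1,1]$ with its natural $\QCB_0$ topology, and $\Obs$ --- are all legitimate $\QCB_0$ objects. The claim to be proved is that the map $A \mapsto \nu_A$ given externally by
\[ \nu_A(x)(C) = \inf \{ I_A(x)(f) \mid \chi_C \leq f \in \RR^{[-1,1]}\} \]
together with its inverse (integration against $\nu_A$, recovering $\scpr{x}{Ax} = \int \lambda \, d\nu_A(x)(\lambda)$) are mutually inverse morphisms in $\AdmRep_{\mathsf{eff}}$.

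Next I would lay out the chain of identifications, each of which is cited or proved earlier. (i) By the Gel'fand--Naimark theorem for commutative $\CC^*$-algebras --- constructively valid, hence internally true --- $\mathfrak{A}(A) \cong C(\mathrm{Sp}(\mathfrak{A}(A)))$, and the functional calculus gives, for each unit vector $x$, the positive normalized linear functional $I_A(x) : f \mapsto \scpr{x}{f(A)x}$ on $C([-1,1])$. (ii) By \cite{CoquandSpitters2009:IntegralsandValuations} such Daniell--Stone integrals correspond constructively to probability valuations on $[-1,1]$; by Schr\"oder's result \cite{Schroeder2006:AdmissibleRepresentationsOfProbabilityMeasures} the weak (pointwise-convergence) topology on probability valuations on $[-1,1]$ is exactly the $\QCB_0$ topology. (iii) By Theorem~\ref{convLO} the natural topology on $\Real^{\Real^{[-1,1]}}$ (hence on the $\neg\neg$-subobject of Daniell--Stone integrals) is the topology of pointwise convergence, since $\Real^{[-1,1]}$ carries the sup-norm metric topology; combined with the Portmanteau theorem (Theorem~2.1 of \cite{Billingsley99:ConvergenceOfProbabilityMeasures}) this shows $A \mapsto (x \mapsto I_A(x))$ is a homeomorphism onto its image. (iv) Finally, the passage from $I_A(x)$ to the valuation $\nu_A(x)$ on $\closed([-1,1])$, and the map $\nu \mapsto \nu_x = \scpr{x}{\nu(\cdot)x}$, land one in $\Obs$ by Definition~\ref{ObsDef}, using that $s$ from Theorem~\ref{HLtowPr} is an effective $\neg\neg$-mono. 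Chaining these, one has the required effective bijection; to run it in $\RT(\Kc_2)$ and $\KV$ it suffices that every arrow is definable in the internal language, which is where the last displayed formula of the section does the work.

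The main obstacle --- and the step I would spend the most care on --- is showing that $\nu_A$ is recoverable from $I_A$ by an \emph{internally definable}, hence effectively continuous, construction. The naive formula $\nu_A(x)(C) = \inf\{I_A(x)(f) \mid \chi_C \le f\}$ is not obviously admissible because the condition ``$\chi_C \le f$'' is not a $\Sigma$-proposition. Following D.~Le\v{s}nik's argument one rewrites this as
\[ \nu_A(x)(C) = \inf \{ q \in \Rat \mid \exists f \in D.\; I_A(x)(f) < q \;\wedge\; \forall y \in C.\, 1 < f(y) \}, \]
where $D$ is a fixed countable dense subset of $\RR^{[-1,1]}$ (e.g.\ piecewise linear functions with rational breakpoints). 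Here $I_A(x)(f) < q$ is in $\Sigma$ as a comparison of reals, $\forall y \in C.\, 1 < f(y)$ is in $\Sigma$ because $C \in \closed([-1,1])$ is compact, and existential quantification over the countable $D$ preserves $\Sigma$; the replacement of ``$\exists f \in \RR^{[-1,1]}$'' by ``$\exists f \in D$'' is justified because a nonempty $\Sigma$-subset of $\RR^{[-1,1]}$ meets $D$. Hence the right-hand side is an infimum of a $\Sigma$-subset of $\Rat$, so it denotes an element of $\Ic$, and the whole assignment $A \mapsto \nu_A$ is definable. The remaining verifications --- that the resulting $\nu_A$ satisfies (O1)--(O3) of Proposition~\ref{obschar}, that it is bounded since $A$ is bounded by $1$ (so $\nu_A([-1,1]) = \HS$), and that integration against $\nu_A$ returns $A$ --- are the routine content of the classical Spectral Theorem transported along the constructive Gel'fand--Naimark and Coquand--Spitters results, so I would state them and cite rather than expand. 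Rescaling by a factor $\|A\|$ then yields the statement for arbitrary bounded self adjoint operators, completing the proof.
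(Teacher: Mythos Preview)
Your proposal is correct and follows essentially the same route as the paper: the same chain of identifications (constructive Gel'fand--Naimark, Coquand--Spitters integrals/valuations, Schr\"oder's $\QCB_0$ topology on probability valuations, Portmanteau, Theorem~\ref{convLO}) and, crucially, the same Le\v{s}nik argument for making $I_A(x) \mapsto \nu_A(x)$ internally definable via the $\Sigma$-predicate $\forall y\in C.\,1<f(y)$ and the countable dense $D$. The only differences are organizational --- you label the steps more explicitly and spell out the landing in $\Obs$ via Theorem~\ref{HLtowPr} --- not substantive.
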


By Theorem~\ref{convLO} the natural $\QCB_0$ topology on the operator side is the
sequentialization of the strong operator topology. On the side of observables
a sequence $(\nu_n)$ converges to $\nu_\infty$ iff for all unit vectors $x \in \HS$
and $C \in \closed([-1,1])$ the sequence $\nu_n(C)(x)$ converges to 
$\nu_\infty(C)(x)$ in $\Ic$, i.e.\ $\limsup \nu_n(C)(x) \leq  \nu_\infty(C)(x)$.

\section{Conclusion and Future Work}

We have constructed an admissible representation for the Hilbert lattice 
$\HL$ and based on this admissible representations $\Sta$ and $\Obs$ for the 
sets of quantum states and quantum observables, respectively. Thus, these data 
types come endowed with a notion of computability in the sense of Weihrauch's 
Type Two Effectivity where a map between admissible representations is 
computable iff it is realized by an element of $\Kc_{2,\mathit{eff}}$, i.e.\ a 
computable element of Baire space \emph{aka} a total recursive function. 
The corresponding category $\AdmRep_{\mathit{eff}}$ of admissible representations
and effective, i.e.\ computable, maps between them arises as a full reflective
subcategory of the so-called \emph{Kleene-Vesley} topos $\KV$ which is the effective part 
of $\RT(\Kc_2)$ as described in \cite{Oosten2008:Realizability}.

Since $\RT(\Kc_2)$ and $\KV$ are models of Brouwerian intuitionistic mathematics, 
see \cite{KleeneVesley1965:TheFoundationsOfIntuitionisticMathematics,Oosten2008:Realizability}, it appears as natural to develop basic quantum mechanics \emph{synthetically} by 
identifying appropriate axioms holding in the internal language of the respective 
toposes.

We have arrived at our admissible representations of $\HL$, $\Sta$ and $\Obs$
by the fairly abstract methods of topological domain theory. We have used our
abstract account already for proving some basic negative results, namely that 
$(-)^{\perp\perp} : \closed(\HS^\prime) \to \HL$ and $\vee : \HL \times \HL \to \HL$ are not 
continuous and thus not computable. On the positive side in Theorem~\ref{SpecThEff} we
have mangaged to show that the Spectral Theorem for bounded observables does hold
in $\RT(\Kc_2)$ and $\KV$ and thus is continuously and also effectively realizable. 
We conjecture that it can be extended to the general case since the Spectral Theorem 
for unitary operators follows constructively from the one for bounded self adjoint 
operators and from this the general spectral theorem can be obtained using the so-called 
Cayley transform. We expect that these results also hold in $\KV$, i.e.\ have not 
only continuous but also computable realizers.

It might be interesting to make the implicit representations more explicit 
as a basis for actual computation. A particularly challenging question is to which
extent the results of \cite{WeihrauchZhong2006:ComputingSchroedingerPropagatorsOnType2TuringMaschines} can be lifted to our more abstract approach.

\section*{Acknowledgements}

We are grateful to  A.~Bauer, V.~Brattka, K.~Keimel, D.~Le\v{s}nik, 
M.~Schr\"oder, A.~Simpson and B.~Spitters for discussions and helpful hints. 
In particular, we want to thank M.~Schr\"oder for pointing us out that a preliminary
version of Theorem~\ref{HLtowPr} was wrong (see Appendix~\ref{MScounter}) and 
suggesting that we better define $\HL$ as a $\neg\neg$-subobject of $\Sigma^{\HS^\prime}$ 
rather than $\Sigma^\HS$.

\appendix
\section{Why \texorpdfstring{$\HL$}{L} should not be considered as a subobject of \texorpdfstring{$\Sigma^\HS$}{Sigma\^{}H}}\label{MScounter}

In previous versions of this paper we considered the Hilbert lattice $\HL$ 
as the $\neg\neg$-subobject of $\Sigma^\HS$ consisting of all $p$ for which 
$p^{-1}(\bot)$ is a subspace of $\HS$. However, as pointed out to us by 
Matthias Schr\"oder this at first sight compelling definition of $\HL$ does 
not validate the crucial Theorem~\ref{HLtowPr}.
We now describe the counterexample of Schr\"oder.

Let $A_n$ be the closed linear subspace of $\HS$ spanned by the vector 
$e_0 + e_n$. We first show that the sequence $(A_n)$ converges to $\{0\}$. 

Suppose $(x_n)$ is a sequence in $\HS$ converging to $x_\infty$ in $\HS$
such that $\forall n \exists k \geq n \, x_n \in A_n$. Then there exists
a subsequence $(x_{n_k})$ of $(x_n)$ with $x_{n_k} \in A_{n_k}$ for all $k$. 
Then for all $m \geq 1$ it holds that 
$\lim\limits_{k \to\infty} \scpr{e_m}{x_{n_k}} = 0$ and 
thus $\scpr{e_m}{x_\infty} = 0$. 
Suppose $\varepsilon > 0$. Then there exists $n_0$ such that
for all $k \geq n_0$ it holds that $\|x_k - x_\infty\| < \varepsilon$ and thus 
$\scpr{e_m}{x_k} < \varepsilon$ for all $m \geq 1$.
But since $x_k \in A_k$ we have $\scpr{e_0}{x_k} = \scpr{e_k}{x_k}$ and thus 
$\scpr{e_0}{x_k} < \varepsilon$ for $k > n_0$. Thus, we have shown that 
$0 = \lim\limits_{k\to\infty} \scpr{e_0}{x_k} = \scpr{e_0}{x_\infty}$. Accordingly,
we have $x_\infty = 0$ as desired.

For all $n$ we have $s(A_n)(e_0) \geq \frac{1}{2}$, Thus, since 
$s(A_\infty)(e_0) = 0$ it cannot hold that 
$\lim\limits_{n\to\infty} s(A_n)(e_0) = s(A_\infty)(e_0)$, i.e.\ we 
have shown that $s$ is not continuous.

\section{Some Facts about Closed Balls and Half Spaces}\label{ENfacts}

For $c \in \HS$ and $0 \leq r < 1$ we may consider the closed ball
$\overline{B(c,r)} = \{ x \in \HS \mid \| x-c \|\ \leq r \}$ 
and the closed half space 
$h_{c,r} = \{ x \in \HS \mid \Re(\scpr{c}{x}) \geq \sqrt{1 - r^2} \}$.
We write $H_{c,r}$ for the open complement of $h_{c,r}$. 

\begin{lem}\label{ENlm1}
Let $c \in S(\HS)$, $0 \leq r < 1$ and $x \in S(\HS)$. Then 
$|\Re(\scpr{c}{x})| \geq \sqrt{1-r^2}$ iff 
$\lambda x \in \overline{B(c,r)}$ for some  $\lambda\in[-1,1]$.
\end{lem}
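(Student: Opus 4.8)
The plan is to reduce the statement to an elementary fact about a real quadratic by expanding the squared norm $\|\lambda x - c\|^2$. First I would note that for \emph{real} $\lambda$ and unit vectors $x,c$ one has
\[ \|\lambda x - c\|^2 = \lambda^2\|x\|^2 - \lambda\bigl(\scpr{x}{c} + \scpr{c}{x}\bigr) + \|c\|^2 = \lambda^2 - 2\lambda\,\Re\scpr{c}{x} + 1, \]
using $\|x\| = \|c\| = 1$ and $\scpr{x}{c} = \overline{\scpr{c}{x}}$. Writing $t = \Re\scpr{c}{x}$, the condition $\lambda x \in \overline{B(c,r)}$ is thus equivalent to $g(\lambda) \le 0$ where $g(\lambda) := \lambda^2 - 2\lambda t + (1 - r^2)$, and the lemma amounts to asking whether $g$ has a zero or negative value somewhere on $[-1,1]$.

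Next I would analyse $g$. It is an upward parabola with minimum at $\lambda = t$ and minimal value $g(t) = 1 - r^2 - t^2$. By the Cauchy–Schwarz inequality $|t| = |\Re\scpr{c}{x}| \le \|c\|\,\|x\| = 1$, so the unconstrained minimiser $t$ already lies in $[-1,1]$; hence $\min_{\lambda \in [-1,1]} g(\lambda) = 1 - r^2 - t^2$. Consequently there exists $\lambda \in [-1,1]$ with $\lambda x \in \overline{B(c,r)}$ if and only if $1 - r^2 - t^2 \le 0$, i.e.\ $t^2 \ge 1 - r^2$, i.e.\ $|\Re\scpr{c}{x}| \ge \sqrt{1 - r^2}$, which is exactly the claimed equivalence. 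In the forward direction one may even record the witnessing scalar explicitly as $\lambda = \Re\scpr{c}{x}$.

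There is essentially no hard part here: the computation of $\|\lambda x - c\|^2$ is routine, and the only point deserving a sentence of care is that restricting $\lambda$ to $[-1,1]$ does not change the infimum of $g$, because the vertex $\lambda = t$ of the parabola lies in $[-1,1]$ — and this is precisely where Cauchy–Schwarz is used. I therefore expect the write-up to be short, with the parabola-minimisation step being the only place where anything is argued rather than merely computed.
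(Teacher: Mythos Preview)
Your proof is correct and follows essentially the same approach as the paper: both expand $\|\lambda x - c\|^2$ to obtain the quadratic $\lambda^2 - 2\lambda\,\Re\scpr{c}{x} + 1 - r^2$ and then analyse when it is nonpositive for some $\lambda\in[-1,1]$. The only cosmetic difference is in how the interval constraint is handled: the paper invokes the intermediate value theorem (using $f(0)=1-r^2>0$) and then checks that the explicit roots $\Re\scpr{c}{x}\pm\sqrt{\Re\scpr{c}{x}^2-(1-r^2)}$ land in $[-1,1]$, whereas you observe directly that the vertex $\lambda=\Re\scpr{c}{x}$ already lies in $[-1,1]$ by Cauchy--Schwarz, so the constrained minimum equals the global minimum --- a slightly cleaner way to reach the same conclusion.
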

\begin{proof}
We have $\lambda x \in \overline{B(c,r)}$ iff $\|c-\lambda x\| \leq r$
iff $\scpr{c-\lambda x}{c-\lambda x} \leq r^2$ iff
\[ 0 \geq \scpr{c-\lambda x}{c-\lambda x} - r^2 = 
   \lambda^2 - 2\lambda\Re(\scpr{c}{x}) + 1 - r^2  =: f(\lambda) \]
Since $r^2 < 1$ we have $f(0) > 0$. Since $f$ is continuous by the
intermediate value theorem we have $f(\lambda) \leq 0$ for some real
$\lambda$ with $|\lambda| \leq 1$ iff $f(\lambda) = 0$ for some real
$\lambda$ with $|\lambda| \leq 1$. 

Since 
$f(\lambda) = (\lambda - \Re(\scpr{c}{x})^2 - \Re(\scpr{c}{x})^2 + 1 - r^2$
the function $f$ has a real zero iff $\Re(\scpr{c}{x})^2 \geq 1 - r^2$, namely
\[ \lambda = \Re(\scpr{c}{x}) - \pm\sqrt{\Re(\scpr{c}{x})^2 - (1-r^2)} \]
which always can be chosen to lie in $[-1,1]$ 
since $|\Re(\scpr{c}{x}| \leq |\scpr{c}{x}| \leq 1$ and 
$|\Re(\scpr{c}{x}| \geq \sqrt{\Re(\scpr{c}{x})^2 - (1-r^2)}$.
\end{proof}

\begin{lem}\label{ENlm2}
Let $c \in S(\HS)$ and $0 \leq r < 1$. If $L$ is a closed linear subspace 
of $\HS$ with $B(\HS) \cap L \cap \overline{B(c,r)} = \emptyset$ then 
$B(\HS) \cap L \cap h_{c,r} = \emptyset$.
\end{lem}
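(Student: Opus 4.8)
The plan is to prove the contrapositive: assuming $B(\HS)\cap L\cap h_{c,r}\neq\emptyset$, I will exhibit a point of $B(\HS)\cap L\cap\overline{B(c,r)}$. So I would start by picking $x\in B(\HS)\cap L$ with $\Re(\scpr{c}{x})\geq\sqrt{1-r^2}$. Since $0\le r<1$ we have $\sqrt{1-r^2}>0$, hence $x\neq 0$, and I may normalize to $u = x/\|x\|\in S(\HS)$; as $u$ is a scalar multiple of $x$ and $L$ is a linear subspace, $u\in L$.

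The second step is the one that actually needs a remark: normalizing a vector of norm $\le 1$ does not decrease $\Re(\scpr{c}{\cdot})$. Indeed, since the scalar product is linear in its second argument and $0<\|x\|\le 1$,
\[ \Re(\scpr{c}{u}) \;=\; \frac{\Re(\scpr{c}{x})}{\|x\|} \;\geq\; \Re(\scpr{c}{x}) \;\geq\; \sqrt{1-r^2}, \]
so in particular $|\Re(\scpr{c}{u})|\geq\sqrt{1-r^2}$. This is exactly the hypothesis under which Lemma~\ref{ENlm1} applies to the unit vector $u$, yielding some $\lambda\in[-1,1]$ with $\lambda u\in\overline{B(c,r)}$.

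Finally I would verify that $\lambda u$ lies in all three sets: $\lambda u\in L$ because $u\in L$ and $L$ is a subspace; $\|\lambda u\|=|\lambda|\le 1$, so $\lambda u\in B(\HS)$; and $\lambda u\in\overline{B(c,r)}$ by the choice of $\lambda$. Hence $\lambda u\in B(\HS)\cap L\cap\overline{B(c,r)}$, contradicting the hypothesis, and therefore $B(\HS)\cap L\cap h_{c,r}=\emptyset$.

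I do not expect any genuine obstacle here: the entire content is the reduction from an arbitrary element of the unit ball to a unit vector, which is what makes Lemma~\ref{ENlm1} available, together with the elementary observation that this reduction only strengthens the inequality on $\Re(\scpr{c}{\cdot})$. Everything else is bookkeeping with the linearity of $L$ and the bound $|\lambda|\le 1$.
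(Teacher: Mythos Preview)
Your proof is correct and follows essentially the same route as the paper: argue by contraposition, observe that any witness in $B(\HS)\cap L\cap h_{c,r}$ is nonzero, normalize it to a unit vector in $L$ (which only increases $\Re(\scpr{c}{\cdot})$ since the norm is at most $1$), and then invoke Lemma~\ref{ENlm1} to land a scalar multiple in $\overline{B(c,r)}\cap B(\HS)\cap L$. The only difference is cosmetic: you make the monotonicity step $\Re(\scpr{c}{u})\geq\Re(\scpr{c}{x})$ explicit, whereas the paper simply asserts that the normalized vector still lies in $h_{c,r}$.
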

\begin{proof}
We argue by contraposition. So suppose $L$ is a closed linear subspace
of $\HS$ and $B(\HS) \cap L \cap h_{c,r} \neq \emptyset$ for some 
$c \in S(\HS)$ and $0 \leq r < 1$. 

Choose some $y \in B(\HS) \cap L \cap h_{c,r}$. 
Since $y \in h_{c,r}$ and $\sqrt{1 - r^2} > 0$ it follows that $y \neq 0$. 
Thus there exists a unique $\mu \geq 1$ with $x = \mu y$ and $x \in S(\HS)$. 
Thus we have $x \in S(\HS) \cap L \cap h_{c,r}$ from which it follows 
by Lemma~\ref{ENlm1} that for some $\lambda \in [-1,1]$ we have 
$\lambda x \in \overline{B(c,r)}$. 
But we have also $\lambda x \in B(\HS) \cap L$ since $x \in S(\HS) \cap L$.

Thus $B(\HS) \cap L \cap \overline{B(c,r)} \neq \emptyset$ as desired. 
\end{proof}

\begin{lem}\label{ENlm3}
Let $c \in S(\HS)$ and $0 \leq r < 1$. If $L$ be a closed linear 
subspace of $\HS$ with $B(\HS) \cap L \cap h_{c,r} = \emptyset$ 
then $B(\HS) \cap L \cap \overline{B(c,r)} = \emptyset$.
\end{lem}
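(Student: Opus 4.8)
The plan is to prove the contrapositive, in close parallel with the proof of Lemma~\ref{ENlm2} but running the equivalence of Lemma~\ref{ENlm1} in the opposite direction. So I would assume $B(\HS) \cap L \cap \overline{B(c,r)} \neq \emptyset$ and fix some $y$ in this intersection, the goal being to exhibit an element of $B(\HS) \cap L \cap h_{c,r}$. First note that $y \neq 0$: since $\|c\| = 1 > r$ we have $0 \notin \overline{B(c,r)}$. Hence $\|y\| \in (0,1]$, the vector $x := y/\|y\|$ lies in $S(\HS)$, and $x \in L$ because $L$ is a linear subspace.

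Next I would apply Lemma~\ref{ENlm1} to $c$, $r$ and this $x \in S(\HS)$: since $\lambda x = y \in \overline{B(c,r)}$ for $\lambda = \|y\| \in [-1,1]$, the lemma yields $|\Re(\scpr{c}{x})| \geq \sqrt{1-r^2}$. Now do a case split on the sign of $\Re(\scpr{c}{x})$. If $\Re(\scpr{c}{x}) \geq \sqrt{1-r^2}$ then $x \in h_{c,r}$, and since $x \in S(\HS) \cap L \subseteq B(\HS) \cap L$ we are done. Otherwise $\Re(\scpr{c}{x}) \leq -\sqrt{1-r^2}$, so $\Re(\scpr{c}{-x}) \geq \sqrt{1-r^2}$, i.e.\ $-x \in h_{c,r}$, and $-x \in S(\HS) \cap L$ as well. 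In either case $B(\HS) \cap L \cap h_{c,r} \neq \emptyset$, which establishes the contrapositive.

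I do not expect a genuine obstacle here; the computation is routine and Lemma~\ref{ENlm1} does the real work. The only point requiring a little care is that the half space $h_{c,r}$ is not invariant under $x \mapsto -x$, so one genuinely needs the sign case distinction that Lemma~\ref{ENlm1} leaves open — this is the single step where the argument is slightly less direct than that of Lemma~\ref{ENlm2}. One could alternatively bypass Lemma~\ref{ENlm1} altogether and argue directly: expanding $\|y-c\|^2 \leq r^2$ gives $2\Re(\scpr{c}{y}) \geq \|y\|^2 + (1-r^2) \geq 2\sqrt{1-r^2}\,\|y\|$ by the arithmetic--geometric mean inequality, whence $x = y/\|y\|$ already satisfies $\Re(\scpr{c}{x}) \geq \sqrt{1-r^2}$ and no case split is needed; but the version via Lemma~\ref{ENlm1} fits the pattern of the surrounding development more cleanly.
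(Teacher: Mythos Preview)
Your argument is correct and matches the paper's proof almost verbatim: both proceed by contraposition, normalize a nonzero $y \in B(\HS)\cap L\cap\overline{B(c,r)}$ to a unit vector $x$, invoke Lemma~\ref{ENlm1} to obtain $|\Re(\scpr{c}{x})| \geq \sqrt{1-r^2}$, and then split on the sign of $\Re(\scpr{c}{x})$ to land $x$ or $-x$ in $h_{c,r}\cap B(\HS)\cap L$. Your side remark using the AM--GM inequality is a genuine (and slightly slicker) alternative that the paper does not mention; it shows that in fact the positive-sign case always obtains, so the case split is unnecessary.
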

\begin{proof}
We proceed by contraposition. So suppose 
$B(\HS) \cap L \cap \overline{B(c,r)} \neq \emptyset$.

Then there exists a $y \in B(\HS) \cap L \cap \overline{B(c,r)}$.
Since $y \in \overline{B(c,r)}$ we have $y \neq 0$. 
Thus there exists $x \in S(\HS)$ and $\lambda \in [-1,1]$ with 
$y = \lambda x$. Notice that $\lambda \neq 0$ since $y \neq 0$. 
Thus $x = \frac{1}{\lambda}y \in S(\HS) \cap L$ from which it follows 
by Lemma~\ref{ENlm1} that $|\Re(\scpr{c}{x})| \geq \sqrt{1-r^2}$.
If $\Re(\scpr{c}{x} \geq 0$ then $x \in h_{c,r}$. Otherwise we have 
$-x \in h_{c,r}$ and thus since $B(\HS) \cap L$ is closed under 
multiplication with $-1$ we also have $-x \in B(\HS) \cap L$.
Thus in any case $h_{c,r}$ and $B(\HS) \cap L$ are not disjoint 
as desired.
\end{proof}

\begin{lem}\label{ENlm4}
Let $L$ be a closed linear subspace of $\HS$ and $c \in S(\HS)$. Then we have
\[ B(\HS) \cap L \cap h_{c,r} = \emptyset \quad\mbox{iff}\quad 
   L \cap \overline{B(c,r)} = \emptyset  \quad\mbox{iff}\quad 
   d(c,L) > r \]
for all $0 \leq r < 1$.
\end{lem}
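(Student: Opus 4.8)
The plan is to prove the double equivalence by threading together three separate steps, using the three preceding lemmas of this appendix together with the nearest‑point projection in Hilbert space.

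First, by Lemma~\ref{ENlm2} and Lemma~\ref{ENlm3} we already have, for every closed linear subspace $L$ of $\HS$, every $c \in S(\HS)$ and every $0 \le r < 1$,
\[ B(\HS) \cap L \cap h_{c,r} = \emptyset \quad\Longleftrightarrow\quad B(\HS) \cap L \cap \overline{B(c,r)} = \emptyset , \]
so what remains is to identify the right‑hand side both with $L \cap \overline{B(c,r)} = \emptyset$ and with $d(c,L) > r$.

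For the first of these, the implication from $L \cap \overline{B(c,r)} = \emptyset$ to $B(\HS) \cap L \cap \overline{B(c,r)} = \emptyset$ is immediate by inclusion, so I would argue the converse by contraposition. Suppose $y \in L \cap \overline{B(c,r)}$. Since $\|c\| = 1$ and $r < 1$ we have $\sqrt{1-r^2} > 0$, hence $0 \notin \overline{B(c,r)}$ and so $y \neq 0$; set $x = y/\|y\| \in S(\HS)$. Expanding $\| c - \|y\|\,x \|^2 \le r^2$ shows that the quadratic $\mu \mapsto \mu^2 - 2\mu\,\Re(\scpr{c}{x}) + (1 - r^2)$ takes a nonpositive value at $\mu = \|y\|$, hence has real roots, which forces $|\Re(\scpr{c}{x})| \ge \sqrt{1-r^2}$. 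By Lemma~\ref{ENlm1} there is then $\lambda \in [-1,1]$ with $\lambda x \in \overline{B(c,r)}$; since $L$ is a subspace we have $\lambda x \in L$, and $\|\lambda x\| = |\lambda| \le 1$ gives $\lambda x \in B(\HS)$, so $B(\HS) \cap L \cap \overline{B(c,r)} \neq \emptyset$, as needed.

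Finally, for the equivalence with the distance condition: if $d(c,L) > r$ then $\|c - z\| \ge d(c,L) > r$ for every $z \in L$, so $L \cap \overline{B(c,r)} = \emptyset$; conversely, if $d(c,L) \le r$ then, $L$ being a closed subspace of the Hilbert space $\HS$, the projection theorem supplies a nearest point $z_0 \in L$ with $\|c - z_0\| = d(c,L) \le r$, whence $z_0 \in L \cap \overline{B(c,r)}$. I expect the only delicate point to be the middle step — removing the restriction to the unit ball — where one must recover the hypothesis of Lemma~\ref{ENlm1} for the normalized vector $x$ and verify that the scalar it produces can be taken in $[-1,1]$ so that the rescaled vector lands in $B(\HS)$ while still meeting both $L$ and $\overline{B(c,r)}$; everything else is a routine assembly of Lemmas~\ref{ENlm1}, \ref{ENlm2}, \ref{ENlm3} and the standard nearest‑point projection.
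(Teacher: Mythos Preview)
Your proof is correct and follows the same three-step structure as the paper's: use Lemmas~\ref{ENlm2} and~\ref{ENlm3} to pass from $h_{c,r}$ to $\overline{B(c,r)}$, then remove the $B(\HS)$ restriction, then link to the distance via the projection theorem. The only difference is in the middle step: instead of normalising an arbitrary witness $y \in L \cap \overline{B(c,r)}$, redoing the quadratic analysis, and re-invoking Lemma~\ref{ENlm1} to land back in the unit ball, the paper simply notes that if $d(c,L) \le r$ then the projection $P_L(c)$ itself already lies in $B(\HS) \cap L \cap \overline{B(c,r)}$ (since $\|P_L(c)\| \le \|c\| = 1$), bypassing your detour entirely.
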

\begin{proof}
Let $c \in S(\HS)$, $0 \leq r < 1$ and $L$ a closed linear subspace  
of $\HS$. The second equivalence is almost tautological.
Thus we concentrate on the first equivalence.

From Lemma~\ref{ENlm2} and lemma~\ref{ENlm3} it follows that
\[ B(\HS) \cap L \cap h_{c,r} = \emptyset \quad\mbox{iff}\quad 
   B(\HS) \cap L \cap \overline{B(c,r)} = \emptyset \]
and thus for establishing our claim it remains to show that
\[  L \cap \overline{B(c,r)} = \emptyset \quad\mbox{iff}\quad 
    B(\HS) \cap L \cap \overline{B(c,r)} = \emptyset \]
which, however, follows from the following consideration.

Since the forward direction is trivial it suffices to prove the
contraposition of the backwards direction. For this purpose
suppose $L \cap \overline{B(c,r)} \neq \emptyset$, i.e.\ $d(c,L) \leq r$.
Thus $d(c,P_L(c)) = d(c,L) \leq r$, i.e.\ $P_L(c) \in \overline{B(c,r)}$.
Since $P_L(c) \in B(\HS)$ it follows that 
$B(\HS) \cap L \cap \overline{B(c,r)} \neq \emptyset$ as desired.
\end{proof}

\begin{thm}\label{ENthm}
Let $L$ be a closed linear subspace of $\HS$ and $c \in S(\HS)$. Then we have
\[ B(\HS) \cap L \subseteq H_{c,r}  \quad\mbox{iff}\quad   d(c,L) > r \]
for all $0 \leq r < 1$.
\end{thm}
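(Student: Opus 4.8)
The plan is to recognise that Theorem~\ref{ENthm} is, up to unwinding notation, precisely Lemma~\ref{ENlm4}. By definition $H_{c,r}$ is the open complement of the closed half space $h_{c,r}$, so the inclusion $B(\HS) \cap L \subseteq H_{c,r}$ is the very same assertion as $B(\HS) \cap L \cap h_{c,r} = \emptyset$. Lemma~\ref{ENlm4} already establishes, for $c \in S(\HS)$, a closed linear subspace $L$ of $\HS$, and $0 \leq r < 1$, the chain of equivalences $B(\HS) \cap L \cap h_{c,r} = \emptyset \iff L \cap \overline{B(c,r)} = \emptyset \iff d(c,L) > r$. Combining the first reformulation with the last link of this chain yields $B(\HS) \cap L \subseteq H_{c,r} \iff d(c,L) > r$, which is exactly the claim. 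So the argument amounts to no more than substituting the definition of $H_{c,r}$ and citing the lemma.

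Should one want a self-contained proof not routed through Appendix~\ref{ENfacts}, I would split into two directions. For the forward direction (contrapositive), assuming $d(c,L) \leq r$, I would use that the orthogonal projection $P_L(c)$ attains the distance, so $\|c - P_L(c)\|^2 = 1 - \|P_L(c)\|^2 \leq r^2$, whence $\|P_L(c)\|^2 \geq 1 - r^2 \geq \sqrt{1-r^2}$; since $\scpr{c}{P_L(c)} = \|P_L(c)\|^2$ is real and $P_L(c) \in B(\HS) \cap L$, this exhibits a point of $B(\HS) \cap L$ lying in $h_{c,r}$, so the inclusion into $H_{c,r}$ fails. For the backward direction, assuming some $x \in B(\HS) \cap L$ with $\Re(\scpr{c}{x}) \geq \sqrt{1-r^2}$, I would normalise to $x/\|x\| \in S(\HS) \cap L$ and apply the quadratic computation underlying Lemma~\ref{ENlm1} to produce $\lambda \in [-1,1]$ with $\lambda x/\|x\| \in L \cap \overline{B(c,r)}$, forcing $d(c,L) \leq r$.

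The genuine work is not in this theorem but in the lemmas it rests on: the passage between the closed ball $\overline{B(c,r)}$ and the open half space complement $H_{c,r}$, and the reduction from $L$ to the weakly compact set $B(\HS) \cap L$. I expect those to be the only delicate points, and they are already handled. The reason to isolate Theorem~\ref{ENthm} in this clean form is its role in the proof of Theorem~\ref{HLtowPr}: from a code of $L \in \HL$ the set $B(\HS) \cap L$ is given by a $\Sigma$-predicate and $H_{c,r}$ is open, so the left-hand side $B(\HS) \cap L \subseteq H_{c,r}$ is semidecidable, while the right-hand side $d(c,L) > r$ (equivalently $r < d(c,L)$) is exactly what one enumerates to represent the element $s(L) \in \HL_q$; thus the theorem is precisely the bridge that makes both $s$ and $s^{-1}$ computable.
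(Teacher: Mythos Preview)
Your primary argument---reducing to Lemma~\ref{ENlm4} via the observation that $B(\HS)\cap L \subseteq H_{c,r}$ is the same as $B(\HS)\cap L \cap h_{c,r} = \emptyset$ because $H_{c,r}$ is by definition the complement of $h_{c,r}$---is exactly the paper's proof, and your closing paragraph correctly identifies why this equivalence is the one needed in Theorem~\ref{HLtowPr}.

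One slip in your alternative self-contained sketch: in the forward (contrapositive) direction you write $\|P_L(c)\|^2 \geq 1 - r^2 \geq \sqrt{1-r^2}$, but the second inequality is false for $0 < r < 1$ (for $t \in (0,1)$ one has $t < \sqrt{t}$). Since $\scpr{c}{P_L(c)} = \|P_L(c)\|^2$, the point $P_L(c)$ itself need not lie in $h_{c,r}$. The easy repair is to normalise: from $d(c,L) \leq r < 1$ one has $P_L(c) \neq 0$, and setting $x = P_L(c)/\|P_L(c)\| \in S(\HS) \cap L$ gives $\Re(\scpr{c}{x}) = \|P_L(c)\| \geq \sqrt{1-r^2}$, so $x \in B(\HS) \cap L \cap h_{c,r}$ as required.
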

\begin{proof}
Immediate from Lemma~\ref{ENlm4} by observing that for $0 \leq r < 1$ 
we have
\[ B(\HS) \cap L \subseteq H_{c,r}  \quad\mbox{iff}\quad 
   B(\HS) \cap L \cap h_{c,r} = \emptyset \]
since by definition $H_{c,r}$ is the complement of $h_{c,r}$.
\end{proof}


\begin{thebibliography}{19}

\forget{
\bibitem{Aliprantis}
Ch.~Aliprantis, K.~Border.
\newblock {\em Infinite Dimensional Analysis}.
\newblock Springer 2006.
}

\bibitem{Battenfeld2008:TopologicalDomainTheory}
I.~Battenfeld.
\newblock {\em Topological domain theory}.
\newblock PhD thesis, School of Informatics, University of Edinburgh, 2008.

\bibitem{Battenfeld2007:ConvenientCategoryOfDomains}
I.~Battenfeld, M.~Schr{\"o}der, and A.~Simpson.
\newblock A convenient category of domains.
\newblock In L.~Cardelli, M.~Fiore, and G.~Winskel, editors, {\em Computation,
  Meaning and Logic, Articles dedicated to Gordon Plotkin}, pp.69-99
  Electronic Notes in Computer Science, 2007.

\bibitem{Bauer2000:RealizabilityApproachToComputableAnalysis}
A.~Bauer.
\newblock {\em The Realizability Approach to Computable Analysis and Topology}.
\newblock PhD thesis, School of Computer Science, Carnegie Mellon University,
  Pittsburgh, 2000.

\bibitem{Bauer2005:RealizabilityAsConnectionBetweenConstructiveAndComputableMathematics}
A.~Bauer.
\newblock Realizability as connection between constructive and computable
  mathematics.
\newblock In T.~Grubba, P.~Hertling, H.~Tsuiki, and K.~Weihrauch, editors, {\em
  {CCA} 2005 - Second International Conference on Computability and Complexity
  in Analysis, August 25-29,2005, Kyoto, Japan}, volume 326-7/2005 of {\em
  Informatik Berichte}, pages 378--379. FernUniversit{\"a}t Hagen, Germany,
  2005.

\bibitem{Billingsley99:ConvergenceOfProbabilityMeasures}
P.~Billingsley.
\newblock {\em Convergence of Probability Measures}.
\newblock Wiley 1999.

\forget{
\bibitem{BrattkaWeihrauch99}
V.~Brattka and K.~Weihrauch.
\newblock Computability on subsets of Euclidean space I: 
closed and compact subsets.
\newblock {\em Theoretical Computer Science}, 219:65-93, 1999.
}

\bibitem{BrattkaDillhage2005:ComputabilityOfTheSpectrumOfSelfAdjointOperators}
V.~Brattka and R.~Dillhage.
\newblock Computability of the spectrum of self-adjoint operators.
\newblock {\em j-jucs}, 11(12):1884--1900, dec 2005.

\bibitem{BrattkaSchroeder2005:ComputingwithSequencesWeakTopologiesAC}
V.~Brattka and M.~Schr\"oder.
\newblock Computing with Sequences, Weak Topologies and the Axiom of Choice.
\newblock Proc.\ of CSL 2005, Springer LNCS 3364, pp.462-476, 2005.

\bibitem{CoquandSpitters2009:IntegralsandValuations}
T.~Coquand and B.~Spitters.
\newblock Integrals and Valuations.
\newblock Journal of Logic \& Analysis, 2009.

\bibitem{Dediu2000:ConstructiveTheoryOfOperatorAlgebras}
L.~S. Dediu.
\newblock {\em The constructive theory of operator algebras}.
\newblock PhD thesis, University of Canterbury, 2000.

\bibitem{Gierz2003:ContinuousLattices}
G.~Gierz, K.~H. Hofmann, K.~Keimel, J.~D. Lawson, M.~Mislove, and D.~S. Scott.
\newblock {\em Continuous Lattices and Domains}.
\newblock Cambridge University Press, Cambridge, 2003.

\bibitem{Gleason1957:MeasuresOnClosedSubspaces}
A.~Gleason.
\newblock Measures on the closed subspaces of a {H}ilbert space.
\newblock {\em Indiana University Mathematics Journal}, 6:885--893, 1957.

\bibitem{Halmos1963}
P.~Halmos
\newblock What Does the Spectral Theorem Say? 
\newblock {\em The American Mathematical Monthly}, 70(3):241--247, March 1963.

\bibitem{Halmos1967}
P.~Halmos
\newblock {\em A Hilbert Space Problem Book.} 
\newblock Springer, 1967.

\bibitem{KleeneVesley1965:TheFoundationsOfIntuitionisticMathematics}
S.~C. Kleene and R.~Vesley.
\newblock {\em The foundations of intuitionistic mathematics. Especially in
  relation to recursive functions.}
\newblock Studies in Logic and the Foundations of Mathematics. North-Holland,
  Amsterdam, 1965.

\bibitem{Lietz2004:FromConstructiveMathematicsToComputableAnalysisViaTheRealizabilityInterpretation}
P.~Lietz.
\newblock {\em From Constructive Mathematics to Computable Analysis via the
  Realizability Interpretation}.
\newblock PhD thesis, Fachbereich Mathematik, TU Darmstadt, Darmstadt, 2004.

\bibitem{Mackey1963:MathematicalFoundationsOfQuantumMechanics}
G.~Mackey.
\newblock {\em The Mathematical Foundations of Quantum Mechanics: A
  Lecture-note Volume}.
\newblock The mathematical physics monograph series. Princeton university,
  1963.

\bibitem{Neumann1932:MathematischeGrundlagenDerQuantenmechanik}
J.~von Neumann.
\newblock {\em Mathematische Grundlagen der Quantenmechanik}.
\newblock Springer Verlag, Berlin, Germany, 1932.

\bibitem{NeumannE2015}
E.~Neumann.
\newblock{Computational problems in metric fixed point theory and their
Weihrauch degrees.} 
\newblock Logical Methods in Computer Science, 11(4), 2015.

\bibitem{Oosten2008:Realizability} 
J.~van~Oosten.
\newblock {\em Realizability: An Introduction to its Categorical Side}.
\newblock Studies in Logic and the Foundations of Mathematics. Elsevier, 2008.

\bibitem{Prugovecki1971:QuantumMechanicsInHilbertSpace}
E.~Prugove{\v{c}}ki.
\newblock {\em Quantum mechanics in Hilbert space}.
\newblock Academic Press, 1971.

\bibitem{PtakPulmannov1991:OrthomodularStructuresAsQuantumLogic}
P.~Pt{\'a}k and S.~Pulmannov{\'a}.
\newblock {\em Orthomodular structures as quantum logics}.
\newblock Fundamental theories of physics. Kluwer Academic Publishers, 1991.

\bibitem{Schroeder2006:AdmissibleRepresentationsOfProbabilityMeasures}
M.~Schr\"oder.
\newblock Admissible Representations of Probability Measures.
\newblock MLQ Math. Log. Q. 53 (2007), no. 4-5, 431-445.

\bibitem{Streicher2006:DomainTheoreticFoundationsOfFunctionalProgramming}
T.~Streicher.
\newblock {\em Domain-theoretic foundations of functional programming}.
\newblock World Scientific, 2006.

\bibitem{Troelstra1992:ComparingTheTheoryOfRepresentationsAndConstructiveMathematics}
A.~Troelstra.
\newblock Comparing the theory of representations and constructive mathematics.
\newblock Computer science logic (Bern, 1991), 382-395, Lecture Notes in Comput. Sci., 626, 
          Springer, Berlin, 1992.

\bibitem{WeihrauchZhong2006:ComputingSchroedingerPropagatorsOnType2TuringMaschines}
Weihrauch and Zhong.
\newblock Computing {S}chr{\"o}dinger propagators on type-2 {T}uring machines.
\newblock {\em Journal of Complexity}, 22, 2006.

\bibitem{Weihrauch2000:ComputableAnalysis}
K.~Weihrauch.
\newblock {\em Computable Analysis}.
\newblock Springer, Berlin, 2000.
\end{thebibliography}
\end{document}